\begin{document}

\title{Graphical Methods in Device-Independent \newline Quantum
  Cryptography}

\date{}

\author{Spencer Breiner,\texorpdfstring{$^{1}$}{} Carl
  A. Miller,\texorpdfstring{$^{1,2}$}{} and Neil
  J. Ross\texorpdfstring{$^{2,3}$}{}\texorpdfstring{\\}{} 
  \small \texorpdfstring{$^1$}{}National Institute of Standards and
  Technology (NIST)\texorpdfstring{\\}{}
  \small Gaithersburg, MD 20899, USA\texorpdfstring{\\}{}
  \small \texorpdfstring{$^2$}{}Joint Center for Quantum Information
  and Computer Science (QuICS)\texorpdfstring{\\}{}
  \small University of Maryland, College Park, MD 20742,
  USA\texorpdfstring{\\}{}
  \small \texorpdfstring{$^3$}{}Department of Mathematics and
  Statistics\texorpdfstring{\\}{}
  \small Dalhousie University, Halifax, NS B3H 4R2, Canada}
  
\maketitle

\begin{abstract}
  We introduce a framework for graphical security proofs in
  device-independent quantum cryptography using the methods of
  categorical quantum mechanics.  We are optimistic that this approach
  will make some of the highly complex proofs in quantum cryptography
  more accessible, facilitate the discovery of new proofs, and enable
  automated proof verification. As an example of our framework, we
  reprove a previous result from device-independent quantum
  cryptography: any linear randomness expansion protocol can be
  converted into an unbounded randomness expansion protocol.  We give
  a graphical proof of this result, and implement part of it in the
  Globular proof assistant.
\end{abstract}

\section{Introduction}
\label{sec:intro}

Graphical methods have long been used in the study of physics and
computation. In physics, this can be traced back at least as far as
Penrose's use of diagrams \cite{Penrose}. During the last decade of
the twentieth century, rigorous methods for graphical reasoning in
monoidal categories were developed by Joyal, Street, and others
\cite{JS91,Selinger2011}. When Abramsky and Coecke proposed monoidal
categories as an alternative foundation for quantum physics
\cite{AC04}, they were able to draw from these technical developments
to introduce an elaborate graphical language for reasoning about
quantum mechanical concepts. Since then, the use of rigorous graphical
methods has been extended widely, ranging from foundations \cite{AC04}
to quantum algorithms \cite{Vic13}, quantum error correction
\cite{CKZH16}, and beyond \cite{C16, CKPict}. The great success of
graphical methods in the quantum sciences is largely due to their
ability to deal with elaborate concepts in a simple way. This is
especially true when compared to the standard methods involving linear
operators acting on Hilbert space.

Quantum cryptography, the study of cryptographic protocols that are
based on quantum mechanical principles (see Section~12.6 of
\cite{NC02} for an introduction) is an ideal candidate for graphical
analysis. Indeed, proofs in quantum cryptography are often long and
complicated even when the central idea of the proof is relatively
clear. Pictures are regularly used as a conceptual aid in discussions
of quantum cryptography, but it would be beneficial for both
accessibility and rigor if proofs themselves could be expressed as
pictures. For this reason, we believe that the field of quantum
cryptography can benefit from the abstract methods of categorical
quantum mechanics \cite{CKPict}.

To our knowledge the use of graphical methods to formalize quantum
cryptography is fairly new, although the literature provides some
useful beginnings.  Graphical security proofs for quantum key
distribution (one of the original problems in the field) have been
presented in \cite{Coecke2010,CoeckeQKD,Hillebrand2011SuperdenseCW},
although these are not yet at the level of security that has been
proved through non-graphical means.  Meanwhile, the literature has a
number of formal treatments of cryptography that are not primarily
based on graphical reasoning. An important example is \cite{CSW14},
which has a focus similar to the current paper.  See also
\cite{Heunen,Maurer11abstractcryptography, Pav14, StayVicary}.

One of the recent major achievements of quantum cryptography is the
development of \textit{device-independent} security proofs
\cite{mayersyao}.  In such proofs, not only the adversary but also the
quantum devices are untrusted, and allowed to exhibit uncharacterized
behavior.  Protocols in this field always include a classical test of
the quantum devices which lead to a ``succeed'' or ``abort'' event at
the end of the protocol, and we must prove that if the protocol
``succeeds,'' then the desired cryptographic task has been securely
carried out.  Device-independence is a desirable level of security for
quantum cryptography (in particular because it accounts for arbitrary
noise or imperfection in the quantum hardware) and it will be our
focus in this paper.

As a specific problem to study, we consider \emph{device-independent
  randomness expansion}.  Research over the last decade has led to a
proof that random numbers can be generated with devices that are
completely untrusted \cite{Arnon-Friedman:2018, Arnon:2016,
  bierhorst2017, colbeck2007quantum, CY13, Dupuis:2016,
  fehr2013security, Knill:2018, MY14-2,MY14-1, pironio2010random,
  pironio2013security, CSW14}.  Precisely, two untrusted quantum
devices, together with a perfectly random seed of length $N$, can be
manipulated by a classical user to generate a perfectly random output
of size $f ( N ) > N$ with negligible error. See \cite{A14} for a
gentle introduction to this topic and \cite{Beacon} for a discussion
of a possible implementation.

Going further, it has been proved that one can take two copies of such
a randomness expansion protocol (using different pairs of devices for
each copy) and cross-feed them to produce an \textit{arbitrarily
  large} quantity of random bits from a fixed-length seed
\cite{CY13,CSW14, MY14-1}.  Proving that randomness expansion can be
extended in this way is not easy (indeed, the paper containing the
first proof of unbounded expansion \cite{coudron2013arXiv} was 34
pages long).  Here, we give a proof of this extension (based on
\cite{CSW14, MY14-1}) using a graphical language. Explicitly, we prove
that any secure protocol for linear randomness expansion implies a
secure protocol for unbounded randomness expansion.  The proof is
based on formal graphical reasoning and is fairly compact.

A great benefit of graphical proofs is they are amenable to computer
verification. The Globular proof assistant software \cite{BKV16}
carries out category-theoretic proofs, using diagrams that are easily
compatible with graphical languages for quantum mechanics. To further
demonstrate the utility of our work we implemented the proof of
unbounded randomness expansion in Globular. The proof is available as
a video at \cite{globularvideo}.

This paper is organized as follows. In Section~\ref{sec:fundamentals}
we formalize a language to deal with quantum processes, building on
the diagrammatic language given \cite{CK15,
  CK16}. Section~\ref{sec:untrusted} provides the formal basis for
quantum cryptographic protocols that are based on untrusted
devices. In Section~\ref{sec:graphicalexpansion} we give the proof
that linear randomness expansion implies unbounded randomness
expansion, and comment on our use of computer-assisted proof software.
We conclude and discuss future work in Section~\ref{sec:conc}.

The recent paper \cite{Kiss-et-al:2017} also addresses graphical
proofs of quantum cryptography, albeit with a different focus
(device-dependent quantum key distribution). The graphical concepts in
the present paper and in \cite{Kiss-et-al:2017} were developed
independently, and we expect that there will be a useful synergy
between the two papers.

\subsection{Our contributions}

The graphical language used in this paper is based primarily on
\cite{CK15, CK16}. We added new elements to enable quantum
cryptographic proofs.  Here are some of the additions:
\begin{enumerate}
\item \textbf{Diagrams for sets of processes.} Whereas a diagram in
  the language of \cite{CK15, CK16} represents a quantum process, we
  expanded this to allow diagrams with uncharacterized elements to
  represent \textit{sets} of processes.  This is especially useful in
  the device-independent context for expressing security statements
  (see Definition~\ref{def:distancesets}).
\item \textbf{Approximations.}  We use the symbol $=_\epsilon$ when
  the state represented by one diagram is approximately equal to the
  state represented by another.  This allows proofs via chains of
  approximations. A similar feature was independently used in
  \cite{Kiss-et-al:2017}.
\item \textbf{Duplication of states.} We define the ``duplication'' of
  a classical-quantum state (see Section~\ref{ssec:sympurification}),
  a convenient shorthand which subsumes both the copying of classical
  states and the purification of quantum states.
\item \textbf{A graphical formalization of device-independence.} We
  give a graphical formalization of what it means for a protocol to be
  device-independent (Section~\ref{disubsec}).
\end{enumerate}

\section{Fundamentals}
\label{sec:fundamentals}

In this section, we cover useful graphical techniques and concepts,
building on \cite{CK15,CK16}.  First, we describe the graphical
language of categorical quantum mechanics. Next, we introduce a notion
of distance between diagrams. Finally, we define the process of
duplicating states. For further details the reader is encouraged to
consult \cite{CK15,CK16} or the more recent \cite{CKPict} for
categorical quantum mechanics, and \cite{Wat17} for notions of
distance relevant to quantum information theory.

\subsection{The graphical language of categorical quantum mechanics}
\label{ssec:language}

Throughout, $\mathcal{O}$ denotes a collection of finite-dimensional
Hilbert spaces, each with a fixed orthonormal basis.  We use Dirac
notation when it is convenient: for any $V \in \mathcal{O}$, we denote
the fixed orthonormal basis by $\{ \left| 1 \right>_V, \left| 2
\right>_V, \ldots \}$ (with the subscript $V$ typically dropped).  For
any $v \in V$, the expression $\left| v \right>$ is simply another way
of writing the vector $v$, and $\left< v \right|$ denotes the dual of
$v$.

We assume that $\mathcal{O}$ is closed under tensor products (i.e., if
$V,W\in \mathcal{O}$ then $V\otimes W = VW\in \mathcal{O}$) and
contains the space $\mathbb{C}^n$ for every non-negative integer
$n$. We sometimes refer to the elements of $\mathcal{O}$ as
\emph{types} or \emph{registers}. The elements of $\mathcal{O}$ are
the objects of the category in which our graphical reasoning takes
place. We refer the reader to \cite{CKPict} for further details. To
simplify the notation in our diagrams, we sometimes write $N$ to
represent the $N$-bit quantum register $\mathbb{C}^{2^N}$.

We say that $f$ is a \emph{process of type $V\to W$} if $f$ is a
linear operator whose domain is $V$ and whose codomain is $W$. It is
represented by a box labeled $f$ whose input and output wires are
labeled with the types $V$ and $W$ as follows.
\begin{equation}
  \mpp{0.5}{\begin{tikzpicture}
    \node (x) at (0,0) {}; 
    \node (y) at (0,2) {}; 
    \node[style=cprocess] (f) at (0,1) {$f$};
    \draw[style=cedge] (x) -- node[right] {$V$} (f.south) {}; 
    \draw[style=cedge] (y) -- node[right] {$W$} (f.north) {}; 
  \end{tikzpicture}}
\end{equation}
Note that diagrams are read from bottom to top. The identity operator
is a process represented by a box-less diagram (below, left). The
composition and tensor product of processes are respectively
represented by the vertical and horizontal composition of diagrams
(below, center and right).
\begin{equation}
  \mpp{0.5}{
    \begin{tikzpicture}
    \node (x) at (0,1.5) {}; 
    \node (y) at (0,0) {}; 
    \draw[style=cedge] (x) -- node[right] {$V$} (y);
    \end{tikzpicture}
  }
    \qquad~\qquad
  \mpp{0.5}{
    \begin{tikzpicture}
      \node (x) at (0,0) {}; 
      \node (z) at (0,3) {}; 
      \node[style=cprocess] (f) at (0,1) {$f$};
      \node[style=cprocess] (g) at (0,2) {$g$};
      \draw[style=cedge] (x) -- node[right] {$U$} (f.south) {}; 
      \draw[style=cedge] (g.south) -- node[right] {$V$} (f.north) {}; 
      \draw[style=cedge] (z) -- node[right] {$W$} (g.north) {}; 
    \end{tikzpicture}
  }
    \qquad~\qquad
  \mpp{0.5}{
    \begin{tikzpicture}
      \node (x1) at (0,0) {}; 
      \node (y1) at (0,2) {}; 
      \node[style=cprocess] (f1) at (0,1) {$f$};
      \draw[style=cedge] (x1) -- node[right] {$V$} (f1.south) {}; 
      \draw[style=cedge] (y1) -- node[right] {$W$} (f1.north) {}; 
      \node (x2) at (1,0) {}; 
      \node (y2) at (1,2) {}; 
      \node[style=cprocess] (f2) at (1,1) {$f'$};
      \draw[style=cedge] (x2) -- node[right] {$V'$} (f2.south) {}; 
      \draw[style=cedge] (y2) -- node[right] {$W'$} (f2.north) {}; 
    \end{tikzpicture}
  }
\end{equation}
Note that two processes can be vertically composed only if their types
are compatible. A process with no input wires is a \emph{state}. A
state $v$ of type $V$ should be interpreted as a vector in $V$ and is
represented by the first diagram on the left below. The conjugate,
transpose, and conjugate-transpose (i.e., adjoint) of $v$ are also
depicted below (second, third, and fourth diagram respectively).
\begin{equation}
  \mpp{0.5}{\begin{tikzpicture}
    \node (b) at (0.8,1) {}; 
    \node[style=cstate] (a) at (0.8,0) {$v$}; 
    \draw[style=cedge] (b) -- node[right] {$V$} (a);
  \end{tikzpicture}
  \qquad
  \begin{tikzpicture}
    \node (a) at (0.8,1) {}; 
    \node[style=cstatexflip] (b) at (0.8,0) {$v$}; 
    \draw[style=cedge] (a) -- node[right] {$V$} (b);
  \end{tikzpicture}
  \qquad
  \begin{tikzpicture}
    \node (e) at (2.8,0) {}; 
    \node[style=ceffectxflip] (f) at (2.8,0.75) {$v$}; 
    \draw[style=cedge] (f) -- node[right] {$V$} (e);
  \end{tikzpicture}
  \qquad
  \begin{tikzpicture}
    \node (c) at (1.8,0) {}; 
    \node[style=ceffect] (d) at (1.8,0.75) {$v$}; 
    \draw[style=cedge] (c) -- node[right] {$V$} (d);
  \end{tikzpicture}}
\end{equation}
A process with no output wires is called an \emph{effect}. A process
with no input and no output is a \emph{number}, and is represented by
a diamond whose label indicates its value.

Let $W$ be a register. Then the diagram
\begin{equation}
\label{spiderW}
\mpp{.5}{\begin{tikzpicture}
	\node[style=spider] (uniform) at (0,0) {}; 
	\draw[style=cedge] (uniform) to[out=180,in=-90] node[left,near end] {$W$} (-1,1) {}; 
	\draw[style=cedge] (uniform) to[out=0,in=-90] node[right,near end] {$W$}(1,1) {}; 
	\draw[style=cedge] (uniform) to node[right,near end] {$W$} (0,1) {}; 
	\end{tikzpicture}}
\end{equation}
denotes the vector $\sum_{i=1}^{\dim W} \left| i \right> \otimes
\left| i \right> \otimes \left| i \right> \in W \otimes W \otimes W$
and is called a \textit{spider}.  Spiders can have an arbitrary number
of outgoing wires (with all of the wires being of the same type, and
the definition extending in the obvious way).  Because they will play
an important role below, we will also introduce the \emph{uniform
  vector}, which is denoted by a gray node:
\begin{equation}
\label{uniformW}
\mpp{.5}{\begin{tikzpicture}
	\node[style=uniform] (uniform) at (0,0) {}; 
	\draw[style=cedge] (uniform) to[out=180,in=-90] node[left,near end] {$W$} (-1,1) {}; 
	\draw[style=cedge] (uniform) to[out=0,in=-90] node[right,near end] {$W$} (1,1) {}; 
	\draw[style=cedge] (uniform) to node[right,near end] {$W$} (0,1) {}; 
	\end{tikzpicture}}
~~ = ~~
\mpp{.5}{\begin{tikzpicture}
	\node[style=spider] (uniform2) at (4,0) {}; 
	\draw[style=cedge] (uniform2) to[out=180,in=-90] node[left,near end] {$W$} (3,1) {}; 
	\draw[style=cedge] (uniform2) to[out=0,in=-90] node[right,near end] {$W$} (5,1) {}; 
	\draw[style=cedge] (uniform2) to node[right,near end] {$W$} (4,1) {}; 
	\node[style=scalar] (r) at (1.5,0.5) {$m^{-1}$};
	\end{tikzpicture}}
\end{equation}
where $m=\dim W$. The diagram on the left-hand side of
(\ref{uniformW}) denotes the vector $\frac{1}{m} \sum_{i=1}^{m} \left|
i \right> \otimes \left| i \right> \otimes \left| i \right> \in W
\otimes W \otimes W$.

A \emph{quantum} type, or quantum register, is an element $Q$ of
$\mathcal{O}$ of the form $Q=V\otimes V$. To graphically distinguish
them for their classical counterparts, quantum states, effects, and
processes are drawn with thick lines as in the diagrams below.
\begin{equation}
  \mpp{0.5}{\begin{tikzpicture}
    \node (b) at (0.8,1) {}; 
    \node[style=qstate] (a) at (0.8,0) {$\Psi$}; 
    \draw[style=qedge] (b) -- node[right] {$Q$} (a);
  \end{tikzpicture}
  \qquad\qquad
  \begin{tikzpicture}
    \node (c) at (1.8,-0.2) {}; 
    \node[style=qeffect] (d) at (1.8,0.75) {$\beta$}; 
    \draw[style=qedge] (c) -- node[right] {$Q$} (d);
  \end{tikzpicture}
  \qquad\qquad
  \begin{tikzpicture}
    \node (x) at (0,0) {}; 
    \node (y) at (0,2) {}; 
    \node[style=qprocess] (f) at (0,1) {$\Sigma$};
    \draw[style=qedge] (x) -- node[right] {$Q$} (f.south) {}; 
    \draw[style=qedge] (y) -- node[right] {$Q'$} (f.north) {}; 
  \end{tikzpicture}}
\end{equation}
A \textit{pure quantum state} $\Psi$ is a state of the form $v \otimes
\overline{v}$ with $\left\| v \right\| = 1$. Graphically, a quantum
state is pure if it satisfies the diagrammatic equality below.
\begin{equation}
  \mpp{0.5}{\begin{tikzpicture}
    \node (b') at (-3.2,1) {};
    \node[style=qstate] (a') at (-3.2,0) {$\Psi$};
    \draw[style=qedge] (b') -- node[right] {$Q$} (a');
  \end{tikzpicture}}
  ~~ = ~~
  \mpp{0.5}{\begin{tikzpicture}
    \node (b) at (0.8,1) {};
    \node (b2) at (0,1) {};
    \node[style=cstate] (a) at (0.8,0) {$v$};
    \node[style=cstatexflip] (a2) at (0,0) {$v$};
    \draw[style=cedge] (b) -- node[right] {$V$} (a);
    \draw[style=cedge] (b2) -- node[left] {$V$} (a2);
  \end{tikzpicture}}
\end{equation}
A \emph{mixed quantum state} of $Q$ is a state of the form $\sum_i v_i
\otimes \overline{v_i}$ satisfying $\sum_i \left\| v_i \right\|^2
=1$. A \emph{subnormalized mixed state} is only required to satisfy
$\sum_i \left\| v_i \right\|^2 \leq 1$. Unless otherwise specified,
the word ``state'' refers to a normalized mixed state.

The above definition can be related to more conventional notation for
quantum states (see, e.g., \cite{Wat17}) as follows. If
\begin{eqnarray}
\label{opexpression}
\sum_{ij} c_{ij} \left| i \right> \left< j \right|
\end{eqnarray}
is a density operator on the vector space $V$ which represents a
quantum state in the conventional sense, then the corresponding
quantum state in the notation that we are using is given by
\begin{eqnarray}
\label{vecexpression}
\sum_{ij} c_{ij} \left| i \right> \otimes \left| j \right> \in V
\otimes V.
\end{eqnarray}
The main difference between (\ref{opexpression}) and
(\ref{vecexpression}) is that (\ref{opexpression}) is a linear map
from $V$ to $V$, while (\ref{vecexpression}) is simply an element of
$Q = V \otimes V$.

A linear map $Q \to \mathbb{C}$ is a \textit{quantum effect} if it
maps all states to the interval $[0,1]$. In other words, $\beta$ is a
quantum effect if
\begin{equation}
  \mpp{0.5}{\begin{tikzpicture}
    \node[style=qstate] (tri) at (0,0) {$\Psi$};
    \node[style=qeffect] (k1) at (0,1.2) {$\beta$};
    \draw[style=qedge] (tri) --node[right] {$Q$} (k1);    \node (b') at (-3.2,1) {};
  \end{tikzpicture}}
  ~~ \in [0,1]
\end{equation}
for all quantum states $\Psi$ of $Q$. Effects correspond to positive
semidefinite operators on $Q$ with operator norm less than or equal to
one.  The effect $Q \to \mathbb{C}$ given by $\sum_{ij} w_{ij} \left|
ij \right> \mapsto \sum_i w_{ii}$ is denoted by the diagram below
\begin{equation}
  \mpp{0.3}{\begin{tikzpicture}
    \node[style=terminal] (end) at (0,0) {};
    \draw[style=qedge] (end) to node[right] {$Q$} (0,-0.7);
  \end{tikzpicture}}
\end{equation}
and corresponds to taking the trace of a linear operator.  If $V$ is a
classical register then
\begin{equation}
  \mpp{0.3}{\begin{tikzpicture}
    \node[style=terminal] (end) at (0,0) {};
    \draw[style=cedge] (end) to node[right] {$V$} (0,-0.7);
  \end{tikzpicture}}
\end{equation}
denotes the linear map $V \to \mathbb{C}$ given by $\sum_i v_i \left|
i \right> \mapsto \sum_i v_i$.

For use in later proofs, we note the obvious fact that uniform states
absorb terminations, i.e.,
\begin{equation}
  \mpp{0.5}{
    \begin{tikzpicture}
    \node[style=uniform] (seed) at (.8,1.2) {};
    \node[style=terminal] (term) at (.4,2) {};
    \node (nonterm) at (1.2,2.2) {};
    \draw[style=cedge] (seed) to[out=180,in=-90] (term.180);
    \draw[style=cedge] (seed) to[out=0,in=-90] (nonterm.270);
  \end{tikzpicture}}
  ~~ = ~~
  \mpp{0.5}{
  \begin{tikzpicture}
    \node[style=uniform] (seed2) at (2.5,1.2) {};
    \draw[style=cedge] (seed2) to (2.5,2);
  \end{tikzpicture}}
\label{eq:spiders-discard}
\end{equation}

A \textit{causal quantum process} $\Sigma$ from a register $Q$ to a
register $R$ is a linear homomorphism such that for any state $\Psi$
of $Q\otimes Q$, the diagram below represents a state.
\begin{equation}
  \mpp{0.5}{\begin{tikzpicture}
    \node[style=qstate] (tri) at (.4,0) {~~~~$\Psi$~~~~~};
    \node[style=qprocess] (k1) at (0,1.3) {$\Sigma$};
    \draw[style=qedge] (k1) --node[left] {$Q$} (k1|-tri.north);
    \draw[style=qedge] (k1) --node[left] {$R$} (0,2);
	\node (out2) at (1,2) {};
    \draw[style=qedge] (out2) to node[right] {$Q$} (out2|-tri.north);
  \end{tikzpicture}}
\end{equation}
A \textit{stochastic} quantum process is one that satisfies the same
condition, with the weaker requirement that the diagram above is a
subnormalized quantum state.  Intuitively, a causal process is a
process that always gives an outcome, while a stochastic process is a
process that may sometimes ``fail'' and give no outcome (hence the
trace of its output state may be smaller than the trace of its input
state).  Unless otherwise specified, the phrase ``quantum process''
refers to a stochastic quantum process.

Diagrammatically, a process is causal if and only if the process of
applying $T$ and then discarding its output wires is equal to the
process of merely discarding the input wires, e.g.,
\begin{equation}
  \mpp{0.5}{
  \begin{tikzpicture}
    \node[style=qprocess] (corrbox) at (-2,0) {~~~~$T$~~~~};
    \node (in1) at (-2.5,-.75) {};
    \node (in2) at (-2,-.75) {};
    \node (in3) at (-1.5,-.75) {};
    \node[style=terminal] (term1) at (-2,.6) {};
    \draw[style=cedge] (in1) to (in1|-corrbox.south);
    \draw[style=cedge] (in2) to (in2|-corrbox.south);
    \draw[style=qedge] (in3) to (in3|-corrbox.south);
    \draw[style=qedge] (term1) to (term1|-corrbox.north);
  \end{tikzpicture}}
  ~~ = ~~
  \mpp{0.5}{
  \begin{tikzpicture}
    \draw[style=cedge] (0.7,-.75) to[in=270, out=90, looseness=1] (0.7,.5);
    \draw[style=cedge] (1.5,-.75) to[in=270, out=90, looseness=1] (1.5,.5);
    \draw[style=qedge] (2.3,-.75) to[in=270, out=90, looseness=1] (2.31,.5);
    \node[style=terminal] at (0.7,.6) {};
    \node[style=terminal] at (1.5,.6) {};
    \node[style=terminal] at (2.3,.6) {};
  \end{tikzpicture}}
  \label{eq:causal}
\end{equation}
Causal quantum processes correspond to completely positive
trace-preserving maps in the conventional notation.

A \textit{pure process} is a quantum process of the form
\begin{equation}
  \mpp{0.5}{\begin{tikzpicture}
    \node (b') at (-3.2,1) {};
    \node (c') at (-3.2,-1) {};
    \node[style=qprocess] (a') at (-3.2,0) {$\Psi$};
    \draw[style=qedge] (b') -- (a');
    \draw[style=qedge] (c') -- (a');
  \end{tikzpicture}}
  ~~ = ~~
  \mpp{0.5}{\begin{tikzpicture}
    \node (b) at (0.8,1) {};
    \node (b2) at (0,1) {};
    \node (c) at (0.8,-1) {};
    \node (c2) at (0,-1) {};
    \node[style=cprocess] (a) at (0.8,0) {$\psi$};
    \node[style=cprocessxflip] (a2) at (0,0) {$\psi$};
    \draw[style=cedge] (b) --  (a);
    \draw[style=cedge] (b2) --  (a2);
    \draw[style=cedge] (c) --   (a);
    \draw[style=cedge] (c2) --   (a2);
  \end{tikzpicture}}
\end{equation}
Note that pure processes map pure subnormalized states to pure
subnormalized states.

If $\Psi$ is a state of $QR$, then we can terminate one of its wires
and obtain a state of $Q$:
\begin{equation}
  \mpp{0.5}{\begin{tikzpicture}
    \node[style=qstate] (tri) at (0,0) {~~~~$\Psi$~~~~~};
    \node (k1) at (-0.25,1.3) {};
    \draw[style=qedge] (k1) --node[left] {$Q$} (k1|-tri.north);
    \node[style=terminal] (end) at (0.5,1) {};
    \draw[style=qedge] (end) to node[right] {$R$} (end|-tri.north);
  \end{tikzpicture}}
\end{equation}
(This corresponds to taking a partial trace in the conventional
framework.)

If $C$ is a register, then a \textit{classical} state is a vector $v
\in C$ of the form $v = \sum_i p_i \left| i \right>$, where $\{ p_i
\}_i$ is a probability distribution.  If $Q$ is a quantum register,
then a \textit{classical-quantum} state of $CQ$ is a state of the form
$\sum_i p_i \left| i \right> \otimes v_i$, where $\{ p_i \}_i$ is a
probability distribution and each $v_i$ is a normalized state of $Q$.
A quantum process on $CQ$ that is \textit{controlled} by the classical
register $C$ is a process of the form $\sum_i \left| i \right> \left<
i \right| \otimes \Phi_i$, where each $\Phi_i$ is a process on $Q$.

In our context, it will be useful to represent \textit{sets} of linear
maps diagrammatically. A diagram in which every element is specified
by an explicit linear map can itself be regarded a linear map
(obtained by composition and tensor product). A diagram in which some
elements are unspecified represents the set of all specifications of
that form. For example, the following diagrams
\begin{equation}
  \mpp{0.5}{\begin{tikzpicture}
    \node (b') at (-3.2,1) {};
    \node[style=qstate] (a') at (-3.2,0) {};
    \draw[style=qedge] (b') to node[right] {$Q$} (a');
  \end{tikzpicture}}
  \qquad\qquad
  \mpp{0.5}{\begin{tikzpicture}
    \node (b') at (-3.2,1) {};
    \node (c') at (-3.2,-1) {};
    \node[style=qprocess] (a') at (-3.2,0) {};
    \draw[style=qedge] (b') to node[right] {$Q$} (a');
    \draw[style=qedge] (c') to node[right] {$R$} (a');
  \end{tikzpicture}}
  \qquad\qquad
  \mpp{0.5}{\begin{tikzpicture}
    \node (a) at (-.3,1) {};
    \node (b) at (.6,1) {};
    \node[style=qstate] (q) at (0,0) {~~~~~~~~~};
    \draw[style=qedge] (b) to node[right] {} (b|-q.north);
    \draw[style=cedge] (a) to node[left] {$\mathbb{C}^2$} (a|-q.north);
  \end{tikzpicture}}
\end{equation}
represent, respectively, the set of all (normalized, mixed) states on
$Q$, the set of all stochastic quantum processes $R\to Q$, and the set
of all classical-quantum states in which the classical register is
$\mathbb{C}^2$ and the quantum register can be arbitrary.  Notice that
when a wire is unlabeled and otherwise unspecified, the
quantification ranges over all (quantum or classical)
registers. Similarly, the diagram
$\mpp{0.3}{\begin{tikzpicture}\node[style=scalar] (sc) at (0,0)
    {}; \end{tikzpicture}}$ represents an arbitrary element of the
interval $[0,1]$.

\subsection{Approximations}
\label{ssec:approximations}

In what follows, we will need to be able to discuss the
\emph{distance} between certain processes. We therefore define a
relation between diagrams which captures the appropriate metric (half
of the diamond norm distance --- see \cite{Wat17} for further
details).

\begin{definition}
\label{def:distanceprocesses}
If $c$, $d$, and $\epsilon$ are real numbers, we write
$c=_{\epsilon}d$ if $\left| c - d \right| \leq \epsilon$. Let $\Sigma$
and $\Sigma'$ be two processes of the same type. Then, we write
$\Sigma =_\epsilon \Sigma'$ if for all states $\Psi$ and effects
$\beta$,
\begin{equation}
  \mpp{0.5}{\scalebox{.8}{
    \begin{tikzpicture}
      \node[style=qeffect] (up) at (0,2) {~~~~~$\beta$~~~~};
      \node[style=qstate] (tri) at (0,0) {~~~~$\Psi$~~~~~};
      \node[style=qprocess] (k1) at (-.25,1) {$\Sigma$};
      \draw[style=qedge] (k1) to (k1|-up.south);
      \draw[style=qedge] (k1) to (k1|-tri.north);
      \draw[style=qedge] (tri.20) to (tri.20|-up.south);
    \end{tikzpicture}
  }}
  ~~\displaystyle =_\epsilon~~
   \mpp{0.5}{\scalebox{.8}{
    \begin{tikzpicture}
      \node[style=qeffect] (up2) at (4,2) {~~~~~$\beta$~~~~};
      \node[style=qstate] (tri2) at (4,0) {~~~~$\Psi$~~~~~};
      \node[style=qprocess] (k3) at (3.75,1) {$\Sigma'$};
      \draw[style=qedge] (k3) to (k3|-tri2.north);
      \draw[style=qedge] (k3) to (k3|-up2.south);
      \draw[style=qedge] (tri2.20) to (tri2.20|-up2.south);
    \end{tikzpicture}
  }}
\end{equation}
\end{definition}
Note that the above definition remains equivalent if the phrase ``for
all states $\Psi$'' is replaced by ``for all pure states $\Psi$.''

It follows from the definition that the notion of approximation
defined above satisfies the triangle inequality (if $\Sigma =_\epsilon
\Sigma'$ and $\Sigma' =_\delta \Sigma''$, then $\Sigma =_{\epsilon +
  \delta} \Sigma''$) and that it is preserved by composition (if
$\Sigma =_\epsilon \Sigma'$, then $\Theta \circ \Sigma =_\epsilon
\Theta \circ \Sigma'$ and $\Sigma \circ \Lambda = \Sigma' \circ
\Lambda$ for all stochastic processes $\Theta, \Lambda$ of appropriate
input and output type).  If we restrict the processes $\Sigma$ and
$\Sigma'$ to be states (i.e., to have no inputs) then $\Sigma
=_{\epsilon}\Sigma'$ if and only if $\Sigma$ and $\Sigma'$ differ by
no more than $2\epsilon$ in trace distance.

We now generalize the notion of distance between processes to a notion
of distance between \emph{sets} of processes. A similar notion of
approximation appears in the non-graphical formalism of
\cite{Maurer11abstractcryptography}.

\begin{definition}
\label{def:distancesets}
Let $A$ and $B$ be two sets of processes, all of which have the same
type. We write $A \subseteq_\epsilon B$ if for every $a \in A$, there
exists $b \in B$ such that $a =_\epsilon b$. Moreover, we write
$A=_\epsilon B$ if $B \subseteq_\epsilon A$ and $A \subseteq_\epsilon
B$.
\end{definition}

These relations also satisfy a triangle inequality: if $A
\subseteq_\epsilon B$ and $B \subseteq_\delta C$, then $A
\subseteq_{\epsilon + \delta} C$. Note that the symbol $=_{\epsilon}$
is used to denote a relation between numbers, a relation between
processes, and a relation between sets of processes. However, it will
always be clear from the context whether numbers, processes, or sets
of processes are being compared so that no ambiguity should arise from
this slight abuse of notation.

\subsection{Duplication}
\label{ssec:sympurification}

We now introduce the notion of \textit{duplicate} states.  Informally
speaking, duplicating a classical-quantum state means copying its
classical component and purifying its quantum component. See section
2.5 of \cite{NC02} for a discussion of the notion of quantum state
purification.

If $\Psi$ is a subnormalized classical-quantum state of a register
$CQ$, then we can write $\Psi = \sum_{ijk} \psi_{ij}^k \left| k
\right> \otimes \left| i \right> \otimes \left| j \right> \in C
\otimes V \otimes V$, where $Q = V \otimes V$.  The matrices $M_k := [
  \psi^k_{ij} ]_{ij}$ are then positive semidefinite.  We can
alternatively express $\Psi$ as
\begin{equation}
\mpp{0.5}{\scalebox{.8}{
		\begin{tikzpicture}
		\node (out1) at (-.5,1.5) {};
		\node (out2) at (.75,1.5) {};
		\node[style=qstate] (psi) at (0, 0) {~~~~~$\Psi$~~~~};
		\draw[style=cedge] (out1) to node[left,near start] {$C$} (out1|-psi.north);
		\draw[style=qedge] (out2) to node[left,near start] {$Q$} (out2|-psi.north);
		\end{tikzpicture}}
}
~~ = ~~ 
\mpp{0.5}{\scalebox{.8}{
		\begin{tikzpicture}
		\node (out1) at (-.5,1) {};
		\node (out2) at (.75,1) {};
		\node[style=qprocess] (psi) at (0, 0) {~~~~~~$P$~~~~~~~};
		\draw[style=cedge] (out1) to node[left,near start] {$C$} (out1|-psi.north);
		\draw[style=qedge] (out2) to node[right,near start] {$Q$} (out2|-psi.north);
		\node[style=spider] (csource) at (-0.5,-1) {};
		\draw[style=cedge] (csource) to node[left] {$C$} (csource|-psi.south);
		\node[style=spider] (qsource) at (0.75, -1) {};
		\draw[style=qedge] (qsource) to node[right] {$Q$} (qsource|-psi.south);
		\end{tikzpicture}}}
\end{equation}
where $P$ is the linear map defined by
\begin{equation}
P ( \left| k \right> \otimes \left| i \right> \otimes \left| j \right>
) = \left| k \right> \otimes \sqrt{M_k} \left| i \right> \otimes
\overline{\sqrt{M_k}} \left| j \right>.
\end{equation}
Then, the \textit{canonical duplicate state} of $\Psi$ is given by
\begin{equation}
\mpp{0.4}{\scalebox{0.8}{
		\begin{tikzpicture}
		\node (out1) at (-1,1.5) {};
		\node (out2) at (-.3,1.5) {};
		\node (out3) at (.3,1.5) {};
		\node (out4) at (1,1.5) {};
		\node[style=qstate] (psi) at (-.25, 0) {~~~~~$\Psi'$~~~~~~};
		\draw[style=cedge] (out1) to node[left, near start] {$C$} (out1|-psi.north);
		\draw[style=qedge] (out2) to node[left, near start] {$Q$} (out2|-psi.north);
		\draw[style=qedge] (out3) to node[right, near start] {$Q$} (out3|-psi.north);
		\draw[style=cedge] (out4) to node[right, near start] {$C$} (out4|-psi.north);
		\end{tikzpicture}}}
~~ := ~~  
\mpp{0.5}{\scalebox{0.8}{
		\begin{tikzpicture}
		\node (out1) at (-.5,1.5) {};
		\node (out2) at (.5,1.5) {};
		\node (out3) at (1.5,1.5) {};
		\node (out4) at (2.5,1.5) {};
		\node[style=qprocess] (psi) at (0, 0) {~~~~~$P$~~~~~~};
		\draw[style=cedge] (out1) to node[left,near start] {$C$} (out1|-psi.north);
		\draw[style=qedge] (out2) to node[left,near start] {$Q$} (out2|-psi.north);
		\draw[style=qedge] (out3) to[out=-90,in=90] node[right,very near start] {$Q$} (1.5,0) to[out=-90,in=0] (1,-.7) to [out=180,in=-90] (out2|-psi.south);
		\draw[style=cedge] (out4) to[out=-90,in=90] node[right,very near start] {$C$} (out4|-psi.south) to[out=-90,in=0] (.75,-1.2) to [out=180,in=-90] (out1|-psi.south);
		\end{tikzpicture}}}
\end{equation}
See equation (\ref{sympure}) in the appendix for an expression for
this state in conventional notation. Note that
\begin{equation}
\label{thecopiedstate}
\mpp{0.5}{\scalebox{0.8}{
		\begin{tikzpicture}
		\node[style=qstate] (psi) at (-.25, 0) {~~~~~~$\Psi'$~~~~~~~};
		\node (out1) at (-1,2) {};
		\node (out2) at (-.3,2) {};
		\draw[style=cedge] (out1) to node[left,near start] {$C$} (out1|-psi.north);
		\draw[style=qedge] (out2) to node[left,near start] {$Q$} (out2|-psi.north);
		\node[style=terminal] (end1) at (0.3,1.5) {};
		\node[style=terminal] (end2) at (1,1.5) {};
		\draw[style=qedge] (end1) to node[right] {$Q$} (end1|-psi.north);
		\draw[style=cedge] (end2) to node[right] {$C$} (end2|-psi.north);
		\end{tikzpicture}}}
~~= ~~
\mpp{0.5}{\scalebox{0.8}{
		\begin{tikzpicture}
		\node[style=qstate] (psi) at (-.25, 0) {~~~~~$\Psi$~~~~~~};
		\node (out1) at (-.5,1.5) {};
		\node (out2) at (.5,1.5) {};
		\draw[style=cedge] (out1) to node[left,near start] {$C$} (out1|-psi.north);
		\draw[style=qedge] (out2) to node[right,near start] {$Q$} (out2|-psi.north);
		\end{tikzpicture}}}
\end{equation}

More generally, a state $\Psi''$ of $CQQC$ is a \textit{duplicate
  state} (or simply \textit{duplication}) of $\Psi$ if
(\ref{thecopiedstate}) holds (with $\Psi'$ replaced by $\Psi''$) and
$\Psi''$ has the form $\Psi'' = \sum_i \left| i \right> \otimes \psi_i
\otimes \left| i \right>$, where each $\psi_i$ is a pure subnormalized
state of $QQ$.  Note that if $\psi = \sum_i p_i \left| i \right>$ is a
classical state (with no quantum component), then there is only one
duplicate of $\psi$, and that is the ``copied'' state $\sum_i p_i
\left| i \right> \otimes \left| i \right>$.

Duplicate states have the following universality property: for any
subnormalized classical-quantum state $\Phi$ of $CQRD$, where $R$ is a
quantum register and $D$ is a classical register, such that
\begin{equation}
\mpp{0.5}{\scalebox{0.8}{
		\begin{tikzpicture}
		\node[style=qstate] (psi) at (-.25, 0) {~~~~~~$\Phi$~~~~~~~};
		\node (out1) at (-1,2) {};
		\node (out2) at (-.3,2) {};
		\draw[style=cedge] (out1) to node[left] {$C$} (out1|-psi.north);
		\draw[style=qedge] (out2) to node[left] {$Q$} (out2|-psi.north);
		\node[style=terminal] (end1) at (0.3,1.5) {};
		\node[style=terminal] (end2) at (1,1.5) {};
		\draw[style=qedge] (end1) to node[right] {$R$} (end1|-psi.north);
		\draw[style=cedge] (end2) to node[right] {$D$} (end2|-psi.north);
		\end{tikzpicture}}}
~~ = ~~
\mpp{0.5}{\scalebox{0.8}{
		\begin{tikzpicture}
		\node[style=qstate] (psi) at (-.25, 0) {~~~~~$\Psi$~~~~~~};
		\node (out1) at (-.5,2) {};
		\node (out2) at (.5,2) {};
		\draw[style=cedge] (out1) to node[left] {$C$} (out1|-psi.north);
		\draw[style=qedge] (out2) to node[right] {$Q$} (out2|-psi.north);
		\end{tikzpicture}}},
\end{equation}
there exists a causal process $\alpha$ from $QC$ to $RD$ satisfying
\begin{equation}
\mpp{0.5}{\scalebox{0.8}{
		\begin{tikzpicture}
		\node[style=qstate] (psi) at (-.25, 0) {~~~~~~$\Phi$~~~~~~~};
		\node (out1) at (-1,2) {};
		\node (out2) at (-.3,2) {};
		\draw[style=cedge] (out1) to node[left] {$C$} (out1|-psi.north);
		\draw[style=qedge] (out2) to node[left] {$Q$} (out2|-psi.north);
		\node (end1) at (0.3,2) {};
		\node (end2) at (1,2) {};
		\draw[style=qedge] (end1) to node[right] {$R$} (end1|-psi.north);
		\draw[style=cedge] (end2) to node[right] {$D$} (end2|-psi.north);
		\end{tikzpicture}}}
~~ = ~~
\mpp{0.5}{\scalebox{0.8}{
		\begin{tikzpicture}
		\node[style=qstate] (psi) at (3.75, 0) {~~~~~~$\Psi'$~~~~~~~};
		\node (out3) at (3,2) {};
		\node (out4) at (3.7,2) {};
		\draw[style=cedge] (out3) to node[left] {$C$} (out3|-psi.north);
		\draw[style=qedge] (out4) to node[left] {$Q$} (out4|-psi.north);
		\node (end3) at (4.3,2) {};
		\node (end4) at (5,2) {};
		\node[style=qprocess] (alpha) at (4.65, 1) {~~~~$\alpha$~~~};
		\draw[style=qedge] (end3) to node[right] {$R$} (end3|-alpha.north);
		\draw[style=cedge] (end4) to node[right] {$D$} (end4|-alpha.north);
		\draw[style=qedge] (end3|-alpha.south) to node[right] {$\scriptstyle Q$} (end3|-psi.north);
		\draw[style=cedge] (end4|-alpha.south) to node[right] {$\scriptstyle C$} (end4|-psi.north);
		\end{tikzpicture}}}
\end{equation}

Additionally, if $\Psi'$ and $\Psi''$ are any two states of $CQQC$
that are both duplicate states of $\Psi$, then there exists a unitary
operator $U$ on $QC$, controlled by the register $C$, such that
\begin{equation}
\label{puruniversal}
\mpp{0.5}{\scalebox{0.8}{
		\begin{tikzpicture}
		\node[style=qstate] (psi) at (-.25, 0) {~~~~~~$\Psi'$~~~~~~~};
		\node (out1) at (-1,2) {};
		\node (out2) at (-.3,2) {};
		\draw[style=cedge] (out1) to node[left] {$C$} (out1|-psi.north);
		\draw[style=qedge] (out2) to node[left] {$Q$} (out2|-psi.north);
		\node (end1) at (0.3,2) {};
		\node (end2) at (1,2) {};
		\draw[style=qedge] (end1) to node[right] {$Q$} (end1|-psi.north);
		\draw[style=cedge] (end2) to node[right] {$C$} (end2|-psi.north);
		\end{tikzpicture}}}
~~=~~
\mpp{0.5}{\scalebox{0.8}{
		\begin{tikzpicture}
		\node[style=qstate] (psi) at (3.75, 0) {~~~~~~$\Psi''$~~~~~~~};
		\node (out3) at (3,2) {};
		\node (out4) at (3.7,2) {};
		\draw[style=cedge] (out3) to node[left] {$C$} (out3|-psi.north);
		\draw[style=qedge] (out4) to node[left] {$Q$} (out4|-psi.north);
		\node (end3) at (4.3,2) {};
		\node (end4) at (5,2) {};
		\node[style=qprocess] (alpha) at (4.65, 1) {~~~~$U$~~~};
		\draw[style=qedge] (end3) to node[right] {$Q$} (end3|-alpha.north);
		\draw[style=cedge] (end4) to node[right] {$C$} (end4|-alpha.north);
		\draw[style=qedge] (end3|-alpha.south) to node[right] {$\scriptstyle Q$} (end3|-psi.north);
		\draw[style=cedge] (end4|-alpha.south) to node[right] {$\scriptstyle C$} (end4|-psi.north);
		\end{tikzpicture}}}
\end{equation}

The following proposition follows from standard techniques and is
proved in Appendix~\ref{approxpropsec}.

\begin{proposition}
  \label{prop:approxprop}
  If $\Psi, \Phi$ are subnormalized states that satisfy $\Psi
  =_\epsilon \Phi$, and $\Psi', \Phi'$ denote their respective
  canonical duplicate states, then $\Psi' =_{\sqrt{2 \epsilon}}
  \Phi'$.  $\Box$
\end{proposition}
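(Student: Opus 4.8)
The plan is to translate the statement into conventional density-operator language and then apply two standard inequalities. Writing $\sigma_\Psi$ and $\sigma_\Phi$ for the (subnormalized) classical-quantum density operators represented by $\Psi$ and $\Phi$, the remark following Definition~\ref{def:distanceprocesses} tells us that the hypothesis $\Psi =_\epsilon \Phi$ is exactly $\tfrac12\|\sigma_\Psi - \sigma_\Phi\|_1 \le \epsilon$, and the desired conclusion $\Psi' =_{\sqrt{2\epsilon}} \Phi'$ is $\tfrac12\|\sigma_{\Psi'} - \sigma_{\Phi'}\|_1 \le \sqrt{2\epsilon}$. So the proposition reduces to the purely analytic claim that the canonical duplicates satisfy $\tfrac12\|\sigma_{\Psi'}-\sigma_{\Phi'}\|_1 \le \sqrt{\|\sigma_\Psi-\sigma_\Phi\|_1}$, since the right-hand side is then at most $\sqrt{2\epsilon}$.

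Next I would make the block structure explicit. With $\Psi = \sum_{ijk}\psi^k_{ij}\,|k\rangle\otimes|i\rangle\otimes|j\rangle$ and $M_k = [\psi^k_{ij}]_{ij} \ge 0$ as in Section~\ref{ssec:sympurification}, we have $\sigma_\Psi = \sum_k |k\rangle\langle k|\otimes M_k$, and the defining diagram of the canonical duplicate shows that $\Psi'$ is the vector $\sum_k |k\rangle \otimes \big(\sum_i \sqrt{M_k}|i\rangle\otimes|i\rangle\big)\otimes |k\rangle$; that is, the classical label is copied into both $C$-registers while the quantum register carries the canonical purification $|w_k\rangle := \sum_i \sqrt{M_k}|i\rangle\otimes|i\rangle$ of $M_k$ (up to the conjugation convention of the doubling). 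Writing $N_k$, $|u_k\rangle$ for the analogous data of $\Phi$, the perfect classical correlation makes the vectors $|k\rangle\otimes(\cdot)\otimes|k\rangle$ orthogonal across distinct $k$, so a one-line computation gives $\| \Psi' - \Phi'\|^2 = \sum_k \| |w_k\rangle - |u_k\rangle\|^2 = \sum_k \|\sqrt{M_k}-\sqrt{N_k}\|_2^2$.

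The two inequalities that finish the argument are: (i) the Powers--St\o rmer inequality $\|\sqrt{M_k}-\sqrt{N_k}\|_2^2 \le \|M_k - N_k\|_1$, applied blockwise and summed, which (using that $\sigma_\Psi - \sigma_\Phi$ is block diagonal in $k$) yields $\|\Psi'-\Phi'\|^2 \le \sum_k\|M_k-N_k\|_1 = \|\sigma_\Psi - \sigma_\Phi\|_1$; and (ii) the elementary bound for (subnormalized) vectors $\||a\rangle\langle a| - |b\rangle\langle b|\|_1 \le \||a\rangle - |b\rangle\|\,(\||a\rangle\| + \||b\rangle\|) \le 2\,\||a\rangle-|b\rangle\|$, valid here because $\Psi'$ and $\Phi'$ have norm at most $1$. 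Combining them, $\tfrac12\|\sigma_{\Psi'}-\sigma_{\Phi'}\|_1 \le \|\Psi'-\Phi'\| \le \sqrt{\|\sigma_\Psi-\sigma_\Phi\|_1} \le \sqrt{2\epsilon}$, which is exactly $\Psi' =_{\sqrt{2\epsilon}} \Phi'$.

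I expect the main obstacle to be not any single estimate but the bookkeeping that justifies the clean block-diagonal picture: one has to unpack the diagrammatic definition of $P$ and of the spider sources to confirm that the canonical duplicate really is the classically-copied canonical purification, and one has to track the subnormalization carefully (the $M_k$ need not be trace-one, so the purifying vectors are subnormalized, which is what forces the use of the $(\||a\rangle\|+\||b\rangle\|)\le 2$ form of inequality (ii) rather than the pure-state identity involving $2\sqrt{1-|\langle a|b\rangle|^2}$). The genuine analytic content is concentrated entirely in the Powers--St\o rmer inequality of step (i); this is the step that produces the square root, and together with the factor $2$ from step (ii) it yields the exact constant $\sqrt{2\epsilon}$.
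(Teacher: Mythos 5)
Your overall strategy is essentially the paper's: translate to conventional density operators, apply the Powers--St{\o}rmer inequality (the paper invokes it as Lemma 3.37 of \cite{Wat17}) blockwise to produce the square root, and finish with the elementary trace-norm bound for differences of rank-one operators. Your reduction of hypothesis and conclusion to trace-norm statements is correct, and so is the estimate $\sum_k\|\sqrt{M_k}-\sqrt{N_k}\|_2^2 \le \sum_k\|M_k-N_k\|_1 = \|\sigma_\Psi-\sigma_\Phi\|_1 \le 2\epsilon$. The genuine problem is in your step (ii), where you treat the density operator of the canonical duplicate as the rank-one projector onto your coherent vector $\sum_k |k\rangle\otimes|w_k\rangle\otimes|k\rangle$. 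These are different operators: the canonical duplicate is a \emph{classical-quantum} state, i.e.\ a classical mixture over $k$, so its density operator $\sigma_{\Psi'} = \sum_k |k\rangle\langle k|\otimes|w_k\rangle\langle w_k|\otimes|k\rangle\langle k|$ is block diagonal in $k$ (this is equation (\ref{sympure}) of the paper, up to ordering of registers), whereas the projector $|\Psi'\rangle\langle\Psi'| = \sum_{k,k'} |k\rangle\langle k'|\otimes|w_k\rangle\langle w_{k'}|\otimes|k\rangle\langle k'|$ retains coherences between distinct classical values. Consequently your rank-one bound controls $\bigl\||\Psi'\rangle\langle\Psi'|-|\Phi'\rangle\langle\Phi'|\bigr\|_1$, not $\|\sigma_{\Psi'}-\sigma_{\Phi'}\|_1$, and the first inequality in your concluding chain, $\tfrac12\|\sigma_{\Psi'}-\sigma_{\Phi'}\|_1 \le \|\Psi'-\Phi'\|$, does not follow as written.

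The gap is real but closes with one standard observation: let $\Delta$ be the pinching (completely dephasing channel) on the two classical registers in the basis $\{|k\rangle\}$; then $\Delta(|\Psi'\rangle\langle\Psi'|) = \sigma_{\Psi'}$ and $\Delta(|\Phi'\rangle\langle\Phi'|) = \sigma_{\Phi'}$, and since $\Delta$ is completely positive and trace preserving, the trace norm can only decrease, giving $\|\sigma_{\Psi'}-\sigma_{\Phi'}\|_1 \le \bigl\||\Psi'\rangle\langle\Psi'|-|\Phi'\rangle\langle\Phi'|\bigr\|_1 \le 2\|\Psi'-\Phi'\| \le 2\sqrt{2\epsilon}$. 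With that line inserted, your proof is complete and yields the stated constant. It is worth comparing this with the paper's own assembly: the paper never forms a coherent global vector; it applies the rank-one bound block-by-block to $(\Vec\sqrt{M_k})(\Vec\sqrt{M_k})^* - (\Vec\sqrt{N_k})(\Vec\sqrt{N_k})^*$, sums over $k$, and then uses the Cauchy--Schwarz inequality with the weights $\|\Vec\sqrt{M_k}\|+\|\Vec\sqrt{N_k}\|$ together with subnormalization to obtain the factor $\sqrt{4}$. Your route replaces that Cauchy--Schwarz step by Pythagoras plus monotonicity under a channel, which is a mild streamlining---but only once the dephasing step is stated explicitly; without it, the argument proves a bound about the wrong pair of operators.
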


The next corollary will be useful in later proofs.

\begin{corollary}
\label{cor:dup}
Suppose that
\begin{equation}
\label{phieq}
\mpp{0.5}{\scalebox{0.8}{
		\begin{tikzpicture}
		\node[style=qstate] (psi) at (-.25, 0) {~~~~~~$\Phi$~~~~~~~};
		\node (out1) at (-1,2) {};
		\node (out2) at (-.3,2) {};
		\draw[style=cedge] (out1) to node[left] {$C$} (out1|-psi.north);
		\draw[style=qedge] (out2) to node[left] {$Q$} (out2|-psi.north);
		\node[style=terminal] (end1) at (0.3,1.5) {};
		\node[style=terminal] (end2) at (1,1.5) {};
		\draw[style=qedge] (end1) to node[right] {$R$} (end1|-psi.north);
		\draw[style=cedge] (end2) to node[right] {$D$} (end2|-psi.north);
		\end{tikzpicture}}}
~~ =_\epsilon ~~
\mpp{0.5}{\scalebox{0.8}{
		\begin{tikzpicture}
		\node[style=qstate] (psi) at (-.25, 0) {~~~~~$\Psi$~~~~~~};
		\node (out1) at (-.5,2) {};
		\node (out2) at (.5,2) {};
		\draw[style=cedge] (out1) to node[left] {$C$} (out1|-psi.north);
		\draw[style=qedge] (out2) to node[right] {$Q$} (out2|-psi.north);
		\end{tikzpicture}}}
\end{equation}
and let $\Psi''$ be a state on $CQQC$ that is a duplicate of $\Psi$.
Then, there exists a causal process $\alpha$ from $QC$ to $RD$
satisfying
\begin{equation}
\label{alphaconclusion}
\mpp{0.5}{\scalebox{0.8}{
		\begin{tikzpicture}
		\node[style=qstate] (psi) at (-.75, 0) {~~~~~~$\Phi$~~~~~~~};
		\node (out1) at (-1.5,2) {};
		\node (out2) at (-.8,2) {};
		\draw[style=cedge] (out1) to node[left] {$C$} (out1|-psi.north);
		\draw[style=qedge] (out2) to node[left] {$Q$} (out2|-psi.north);
		\node (end1) at (-.2,2) {};
		\node (end2) at (.5,2) {};
		\draw[style=qedge] (end1) to node[right] {$R$} (end1|-psi.north);
		\draw[style=cedge] (end2) to node[right] {$D$} (end2|-psi.north);
		\end{tikzpicture}}}
~~=_{\sqrt{2\epsilon}}~~
\mpp{0.5}{\scalebox{0.8}{
		\begin{tikzpicture}
		\node[style=qstate] (psi) at (3.75, 0) {~~~~~~$\Psi''$~~~~~~~};
		\node (out3) at (3,2) {};
		\node (out4) at (3.7,2) {};
		\draw[style=cedge] (out3) to node[left] {$C$} (out3|-psi.north);
		\draw[style=qedge] (out4) to node[left] {$Q$} (out4|-psi.north);
		\node (end3) at (4.3,2) {};
		\node (end4) at (5,2) {};
		\node[style=qprocess] (alpha) at (4.65, 1) {~~~~$\alpha$~~~};
		\draw[style=qedge] (end3) to node[right] {$R$} (end3|-alpha.north);
		\draw[style=cedge] (end4) to node[right] {$D$} (end4|-alpha.north);
		\draw[style=qedge] (end3|-alpha.south) to node[right] {$\scriptstyle Q$} (end3|-psi.north);
		\draw[style=cedge] (end4|-alpha.south) to node[right] {$\scriptstyle C$} (end4|-psi.north);
		\end{tikzpicture}}}
\end{equation}
\end{corollary}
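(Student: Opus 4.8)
The plan is to reduce the approximate statement to the two exact universality properties of duplicate states established just above, and to absorb the entire error through Proposition~\ref{prop:approxprop}. Write $\Phi_0$ for the reduced state of $\Phi$ on $CQ$ obtained by terminating the $R$ and $D$ wires. This reduced state is again a subnormalized classical-quantum state, and the hypothesis (\ref{phieq}) says exactly that $\Phi_0 =_\epsilon \Psi$.

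First I would apply the universality property of duplicate states not to $\Psi$ but to $\Phi$ itself, taking its own marginal $\Phi_0$ in the role of the traced-out state. Since terminating $R$ and $D$ in $\Phi$ yields $\Phi_0$ on the nose, the exact hypothesis of that property holds, and it produces a causal process $\alpha_0 \colon QC \to RD$ with $\Phi = (\mathrm{id}_{CQ} \otimes \alpha_0)\,\Phi_0'$, where $\Phi_0'$ is the canonical duplicate of $\Phi_0$.

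Next I would bridge from $\Phi_0'$ to the given duplicate $\Psi''$ in two moves. By Proposition~\ref{prop:approxprop}, the approximate equality $\Phi_0 =_\epsilon \Psi$ of the marginals lifts to $\Phi_0' =_{\sqrt{2\epsilon}} \Psi'$ for the canonical duplicates, and this is precisely where the factor $\sqrt{2\epsilon}$ in the conclusion is incurred. Then, since $\Psi'$ and $\Psi''$ are both duplicates of the same state $\Psi$, equation (\ref{puruniversal}) supplies a $C$-controlled unitary $U$ on $QC$ with $\Psi' = (\mathrm{id}_{CQ} \otimes U)\,\Psi''$.

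Finally I would chain these facts. Because approximation is preserved under post-composition by a fixed process, applying $\mathrm{id}_{CQ} \otimes \alpha_0$ to both sides of $\Phi_0' =_{\sqrt{2\epsilon}} \Psi'$ gives $\Phi = (\mathrm{id}_{CQ}\otimes\alpha_0)\,\Phi_0' =_{\sqrt{2\epsilon}} (\mathrm{id}_{CQ}\otimes\alpha_0)\,\Psi' = \bigl(\mathrm{id}_{CQ}\otimes(\alpha_0\circ U)\bigr)\,\Psi''$, where the last step uses $\Psi' = (\mathrm{id}_{CQ}\otimes U)\,\Psi''$ together with the interchange of sequential and parallel composition. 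Setting $\alpha := \alpha_0 \circ U$, which is causal as the composite of the causal process $\alpha_0$ with the unitary $U$, yields the desired process and establishes (\ref{alphaconclusion}). The one real obstacle is conceptual rather than computational: one must resist applying universality directly to the approximate marginal $\Psi$, where the exact hypothesis fails, and instead apply it to the exact marginal $\Phi_0$ of $\Phi$, thereby quarantining all the error into the single invocation of Proposition~\ref{prop:approxprop}; every remaining manipulation is an exact diagrammatic identity.
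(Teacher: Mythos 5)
Your proposal is correct and follows essentially the same route as the paper's own proof: apply the exact universality property to $\Phi$ with its own marginal $\Phi_0$ to get a causal $\alpha_0$, use Proposition~\ref{prop:approxprop} to lift $\Phi_0 =_\epsilon \Psi$ to the canonical duplicates, and handle the passage from $\Psi'$ to the given duplicate $\Psi''$ via the controlled unitary of equation~(\ref{puruniversal}). The only cosmetic difference is that the paper invokes the unitary reduction at the start (``it suffices to prove the claim for $\Psi'$'') whereas you compose with $U$ at the end, which is the same argument.
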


\begin{proof}
Let $\Psi'$ denote the canonical duplicate state of $\Psi$.  By the
property noted in equation (\ref{puruniversal}) above, it suffices to
prove the desired relation (\ref{alphaconclusion}) with $\Psi''$
replaced by $\Psi'$.  By Proposition~\ref{prop:approxprop}, the
canonical duplicate state $\Sigma$ of the state on the left side of
(\ref{phieq}) satisfies $\Sigma =_{\sqrt{2 \epsilon}} \Psi'$.  There
is a causal process $\alpha$ from $QC$ to $RD$ such that
\begin{equation}
\mpp{0.5}{\scalebox{0.8}{
		\begin{tikzpicture}
		\node[style=qstate] (psi) at (-.75, 0) {~~~~~~$\Phi$~~~~~~~};
		\node (out1) at (-1.5,2) {};
		\node (out2) at (-.8,2) {};
		\draw[style=cedge] (out1) to node[left,near start] {$C$} (out1|-psi.north);
		\draw[style=qedge] (out2) to node[left,near start] {$Q$} (out2|-psi.north);
		\node (end1) at (-.2,2) {};
		\node (end2) at (.5,2) {};
		\draw[style=qedge] (end1) to node[right,near start] {$R$} (end1|-psi.north);
		\draw[style=cedge] (end2) to node[right,near start] {$D$} (end2|-psi.north);
		\end{tikzpicture}}}
~~ = ~~
\mpp{0.5}{\scalebox{0.8}{
		\begin{tikzpicture}
		\node[style=qstate] (psi) at (3.75, 0) {~~~~~~$\Sigma$~~~~~~~};
		\node (out3) at (3,2) {};
		\node (out4) at (3.7,2) {};
		\draw[style=cedge] (out3) to node[left,near start] {$C$} (out3|-psi.north);
		\draw[style=qedge] (out4) to node[left,near start] {$Q$} (out4|-psi.north);
		\node (end3) at (4.3,2) {};
		\node (end4) at (5,2) {};
		\node[style=qprocess] (alpha) at (4.65, 1) {~~~~$\alpha$~~~};
		\draw[style=qedge] (end3) to node[right] {$R$} (end3|-alpha.north);
		\draw[style=cedge] (end4) to node[right] {$D$} (end4|-alpha.north);
		\draw[style=qedge] (end3|-alpha.south) to node[right] {$\scriptstyle Q$} (end3|-psi.north);
		\draw[style=cedge] (end4|-alpha.south) to node[right] {$\scriptstyle C$} (end4|-psi.north);
		\end{tikzpicture}}}
\end{equation}
Applying the same process to $\Psi'$ yields a state that is within
distance $\sqrt{2 \epsilon}$ from $\Phi$.
\end{proof}

\section{Untrusted quantum processes}
\label{sec:untrusted}

An \textit{untrusted quantum process} is represented by a diagram in
which all of the quantum processes are unlabeled and all of the
classical processes are labeled.  We discuss an example of such
processes and then give a definition of the more specific class of
\textit{device-independent quantum protocols}.

\subsection{Example: Quantum strategies for nonlocal games}

A nonlocal game is a game played by $k$ parties ($k \geq 2$) in which
the players are given random inputs $X_1, \ldots, X_k$ according to
some fixed joint probability distribution. The players produce outputs
$A_1, \ldots, A_k$ and these outputs are scored as $L ( X_1, \ldots,
X_k, A_1 , \ldots, A_k )$, where $L$ is a deterministic function that
maps to $\{ 0, 1 \}$.  An example (the Clauser-Horne-Shimony-Holt
game) is given below.  The registers $X, A, B, Y$ are classical bit
registers (each isomorphic to $\mathbb{C}^2$).
\begin{equation}
  \mpp{0.5}{\begin{tikzpicture}
    \node[style=uniform] (x) at (-1,2) {};
    \node[style=uniform] (y) at (6,2) {};
    \draw[style=cedge] (0,2) to[in=-90, out=-90, looseness=2]  (x);
    \draw[style=cedge] (5,2) to[in=-90, out=-90, looseness=2] (y) {};
    \node[style=qstate] (qsys) at (2.5,1.5) {};
    \node[style=qprocess] (alicem) at (0,2.3) {};
    \node[style=qprocess] (bobm) at (5,2.3) {};
    \draw[style=cedge] (0,2) to (alicem);
    \draw[style=cedge] (5,2) to  (bobm);
    \draw[style=qedge] (qsys.100) to[out=90, in=-90] (alicem);
    \draw[style=qedge] (qsys.80) to[out=90, in=-90] (bobm);
    \draw[style=cedge] (alicem) to node[left] {$A$}  (0,3);
    \draw[style=cedge] (bobm) to node[right] {$B$} (5,3);
    \draw[style=cedge] (x) to node[left] {$X$} (-1,3.5);
    \draw[style=cedge] (y) to  node[right] {$Y$} (6,3.5);
    \node[style=ceffect] (score) at (2.5,4.2) {\tiny $A \oplus B = X \wedge Y$};
    \draw[style=cedge] (-1,3.5) to[out=90, in=-90, looseness=0.75] (score.220);
    \draw[style=cedge] (0,3) to[out=90, in=-90] (score.260);
    \draw[style=cedge] (5,3) to[out=90, in=-90] (score.280);
    \draw[style=cedge] (6,3.5) to[out=90, in=-90, looseness=0.75] (score.320);
  \end{tikzpicture}}
\end{equation}
The effect at the top denotes the map $\mathbb{C}^{\{0, 1 \}^4} \to
\mathbb{C}$ given by $(p_{xaby}) \mapsto \sum_{a \oplus b = x \wedge
  y} p_{xaby}$.  This game is a common building block for
device-independent protocols (including in particular the randomness
expansion results that we will consider in
section~\ref{sec:graphicalexpansion}).

\subsection{Device-independent quantum protocols}
\label{disubsec}

We are now ready to formalize the notion of a protocol in the
device-independent setting.  Historically, a quantum protocol is
\textit{device-independent} if all of its \textit{quantum} processes
are untrusted and uncharacterized (whereas strictly ``classical''
aspects, such as timing, non-communication, and computation, are still
trusted).  This definition can be traced back to early papers such as
Mayers and Yao \cite{mayersyao} and Ekert \cite{Ekert91}.  There is
some room for interpretation as to exactly which quantum processes are
allowed in device-independence, and we offer a specific formalism
here.  (Our treatment can be compared to the non-graphical
formalization of device-independent protocols in section 4 of
\cite{CSW14}.)

For simplicity our definition is for a $2$-device protocol, but it
could easily be generalized to an $N$-device protocol.

\begin{definition}
  A device-independent protocol with $2$ quantum devices is a diagram
  of the form
  \begin{equation}
  \mpp{0.5}{\begin{tikzpicture}
	\node (seed) at (-1,0) {};
	\node[style=qprocess] (re) at (0.15,1.5) {~~~~~~~$S$~~~~~~~~};
	\draw[style=qedge] (0.2,0) to[in=270, out=90] node[right] {$Q_1$}  (re.280);
	\draw[style=qedge] (1.5,0) to[in=270, out=90] node[right] { $Q_2$}  (re.340);
	\draw[style=cedge] (seed) to[in=270, out=90]  node[left] {$C$} (re.210);
	\draw[style=cedge] (re.150) to[in=270, out=90] node[left] {$C$}  (-1,3.3);
	\node (end) at (0.3,3.3) {};
	\node (end2) at (1.5,3.3) {};
	\draw[style=qedge] (re.80) to[in=270, out=90] node[right] {$Q_1$} (end.180);
	\draw[style=qedge] (re.20) to[in=270, out=90] node[right] {$Q_2$} (end2.180);
	\end{tikzpicture}}
  \end{equation}
  where $S$ is constructed from the following subdiagrams.
  \begin{enumerate}
  \item \textbf{Communication between devices.}  An untrusted process
    transferring information (one way) from one of the two quantum
    registers to the other:
    \begin{equation}
    \mpp{0.5}{\scalebox{0.8}{\begin{tikzpicture}
		\node (starti) at (0,0) {};
		\node (startj) at (3,0) {};
		\node (endi) at (0,3) {};
		\node (endj) at (3,3) {};
		\node[style=qprocess] (proci) at (0,1) {~~};
		\node[style=qprocess] (procj) at (2.9,2) {~~};
		\draw[style=qedge] (starti) to node[left,near start] {$Q_i$} (starti|-proci.south);
		\draw[style=qedge] (startj) to node[right,near start] {$Q_j$} (startj|-procj.south);
		\draw[style=qedge] (proci.45) to[in=-90, out=90] (procj.-120);
		\draw[style=qedge] (endi) to node[left,near start] {$Q_i$} (endi|-proci.north);
		\draw[style=qedge] (endj) to node[right,near start] {$Q_j$} (endj|-procj.north);
		\end{tikzpicture}}}
    \end{equation}
  \item \textbf{Deterministic classical functions.}  A deterministic
    function is applied to the register $C$.
  \begin{equation}
  \mpp{0.5}{\scalebox{0.8}{\begin{tikzpicture}
                  \node (startc) at (0,0) {};
                  \node (startj) at (3,0) {};
                  \node (endc) at (0,2) {};
                  \node (endj) at (3,2) {};
                  \node[style=cprocess] (procc) at (0,1) {$F$};
                  \draw[style=cedge] (startc) to[in=-90, out=90] node[right] {$C$} (procc);
                  \draw[style=cedge] (procc) to node[right] {$C$} (endc);
                  \end{tikzpicture}}}
  \end{equation}
  \item \textbf{Failure.} The value of the classical register $C$ is
    checked to see if it lies in a chosen subset $S$; if it does not,
    the protocol aborts.  (Diagrammatically, this is the linear map
    from $C$ to $C$ given by $\sum_i p_i \left| i \right> \mapsto
    \sum_{i \in S} p_i \left| i \right>$.)
  \item \textbf{Giving input to a device.} A deterministic function is
    applied to $C$ and the result is given to one of the devices.
  \begin{equation}
  \mpp{0.5}{\scalebox{0.8}{\begin{tikzpicture}
                  \node (starti) at (0,0) {};
                  \node (startj) at (3,0) {};
                  \node (endi) at (0,3) {};
                  \node (endj) at (3,3) {};
                  \node[style=cprocess] (proci) at (0,1) {G};
                  \node[style=qprocess] (procj) at (2.9,2) {~~};
                  \draw[style=cedge] (starti) to node[left,near start] {$C$} (starti|-proci.south);
                  \draw[style=qedge] (startj) to node[right,near start] {$Q_j$} (startj|-procj.south);
                  \draw[style=cedge] (proci.45) to[in=-90, out=90] (procj.-120);
                  \draw[style=cedge] (endi) to node[left,near start] {$C$} (endi|-proci.north);
                  \draw[style=qedge] (endj) to node[right,near start] {$Q_j$} (endj|-procj.north);
                  \end{tikzpicture}}}
  \end{equation}
  \item \textbf{Receiving input from a device.} Classical information
    is received from one of the devices.
    \begin{equation}
    \mpp{0.5}{\scalebox{0.8}{\begin{tikzpicture}
		\node (starti) at (0,0) {};
		\node (startj) at (3,0) {};
		\node (endi) at (0,3) {};
		\node (endj) at (3,3) {};
		\node[style=cprocess] (proci) at (0,1) {~~};
		\node[style=cprocess] (procj) at (2.9,2) {H};
		\draw[style=qedge] (starti) to node[left,near start] {$Q_i$} (starti|-proci.south);
		\draw[style=cedge] (startj) to node[right,near start] {$C$} (startj|-procj.south);
		\draw[style=cedge] (proci.45) to[in=-90, out=90] (procj.-120);
		\draw[style=qedge] (endi) to node[left,near start] {$Q_i$} (endi|-proci.north);
		\draw[style=cedge] (endj) to node[right,near start] {$C$} (endj|-procj.north);
		\end{tikzpicture}}}
    \end{equation}
  \end{enumerate}
\end{definition}

Note that every device-independent protocol has two representations:
as a diagram (including some unlabeled elements) and as a set of
processes from $C Q_1 Q_2$ to $C Q_1 Q_2$. We may use the label $S$ to
refer to either representation. Device-independent protocols can be
composed (e.g., the output quantum states of one protocol can be given
as inputs to another, which corresponds to re-using the devices from
the first protocol in the second).

\section{Randomness expansion}
\label{sec:graphicalexpansion}

\subsection{Linear randomness expansion}
\label{lrsubsec}

We can now phrase security results on device-independent randomness
expansion \cite{colbeck2007quantum} in terms of diagrams.  A
device-independent randomness expansion protocol accepts a seed and
returns a larger output.  Security results for such protocols consist
of asserting that if the seed is uniformly random, then except with
negligible probability, the output is also uniformly random.  The
protocols that we consider for randomness expansion consist of
iterating $2$ untrusted devices many times, and sometimes at random
playing a nonlocal game (such as the CHSH game) to test that the
devices are behaving properly (see Figure~2 in \cite{MY14-1}).

A simple way to assert security for a $2$-device randomness expansion
protocol $R$ is to say that replacing the output with a true uniformly
random state has a negligible effect, i.e.,
\begin{equation}
\label{simpleRE}
\mpp{0.5}{\scalebox{0.8}{\begin{tikzpicture}
	\node[style=qstate] (dstate) at (0.4,0) {};
	\node[style=uniform] (seed) at (-0.4,0) {};
	\node[style=qprocess] (re) at (0,1) {~~~~$R$~~~~};
	\draw[style=qedge] (dstate) to (dstate|-re.south);
	\draw[style=cedge] (seed) to node[left] { $N$} (seed|-re.south);
	\node (out1) at (-.4,2.8) {};
	\node[style=terminal] (end1) at (0.4,2) {};
	\draw[style=cedge] (out1) to node[left,near start] {$M$}(out1|-re.north);
	\draw[style=qedge] (end1) to (end1|-re.north);
\end{tikzpicture}}}
~~ =_\delta ~~
\mpp{0.5}{\scalebox{0.8}{\begin{tikzpicture}
	\node[style=qstate] (dstate2) at (3.6,0) {};
	\node[style=uniform] (seed2) at (2.8,0) {};
	\node[style=qprocess] (re2) at (3.2,1) {~~~~$R$~~~~};
	\draw[style=qedge] (dstate2) to (dstate2|-re2.south);
	\draw[style=cedge] (seed2) to node[left] { $N$} (seed2|-re2.south);
	\node[style=uniform] (seed3) at (2.8,2.2) {};
	\node[style=terminal] (end3) at (2.8,1.7) {};
	\node (out2) at (2.8,2.8) {};
	\node[style=terminal] (end2) at (3.6,2) {};
	\draw[style=cedge] (out2) to node[left,near start] { $M$} (seed3);
	\draw[style=cedge] (end3) to (end3|-re2.north);
	\draw[style=qedge] (end2) to (end2|-re2.north);
\end{tikzpicture}}}
\end{equation}
Above we compressed the two device states of $R$ into a single thick
wire, and we are using the labels $M$ and $N$ as a shorthand for
$\mathbb{C}^{2^M}$ and $\mathbb{C}^{2^N}$. But it is preferable to
have a stronger assertion: we wish to know that the output of the
protocol is also approximately uniform when conditioned on the seed
and on any quantum information entangled with the devices.  The
following theorem captures this stronger assertion.

\begin{theorem}[\textbf{Spot-check protocol}]
  \label{startingthm}
  There exist device-independent protocols $R ( 1 ), R ( 2 ) , R ( 3 )
  , \ldots$, where $R ( N )$ has classical input dimension $2^N$ and
  classical output dimension $2^{2N}$, and there exists a function $\delta
  = \delta ( N ) \in 2^{-\Omega ( N ) }$, such that the following
  hold.
  \begin{enumerate}
  \item \textbf{Soundness.}  For any $r \in R ( N )$ and any state $\Gamma$,
  \begin{equation}
  \mpp{0.5}{\scalebox{1.0}{\begin{tikzpicture}
          \node (out1) at (-.2,2.8) {};
          \node (out2) at (-1.4,2.8) {};
          \node (out3) at (1.2,2.8) {};
          \node[style=terminal] (end1) at (0.5,2) {};
          \node[style=qstate] (dstate1) at (0.7,0) {~~$\Gamma$~~};
          \node[style=uniform] (seed1) at (-.8,0) {};
          \node at (-.8,-.4) {$N$};
          \node[style=qprocess] (re1) at (0,1) {~~~~$r$~~~~};
          \draw[style=qedge] (end1|-dstate1.north) to (end1|-re1.south);
          \draw[style=cedge] (seed1) to[out=0,in=-90] (out1|-re1.south);
          \draw[style=cedge] (out1) to node[left,near start] { $2N$}(out1|-re1.north);
          \draw[style=cedge] (out2) to[out=-90,in=90]  (-1.4,1) to[out=-90,in=180] (seed1);
          \draw[style=qedge] (out3) to (out3|-dstate1.north);
          \draw[style=qedge] (end1) to (end1|-re1.north);
  \end{tikzpicture}}}
  ~~ =_\delta ~~
  \mpp{0.5}{\scalebox{1.0}{\begin{tikzpicture}
          \node[style=qstate] (dstate2) at (5,0) {~~$\Gamma$~~};
          \node[style=uniform] (seed2) at (3.4,0) {};
          \node at (3.4,-.4) {$N$};
          \node[style=qprocess] (re2) at (4.2,1) {~~~~$r$~~~~};
          \node (out4) at (4,3) {};
          \node (out5) at (2.8,3) {};
          \node (out6) at (5.4,3) {};
          \node[style=terminal] (end2) at (4.7,1.9) {};
          \node[style=uniform] (seed3) at (4,2.3) {};
          \node[style=terminal] (end3) at (4,1.9) {};
          \draw[style=qedge] (end2|-dstate2.north) to (end2|-re2.south);
          \draw[style=cedge] (seed2) to[out=0,in=-90]  (out4|-re2.south);
          \draw[style=cedge] (out5) to[out=-90,in=90] (2.8,1) to[out=-90,in=180] (seed2);
          \draw[style=qedge] (out6) to (out6|-dstate2.north);
          \draw[style=qedge] (end2) to  (end2|-re2.north);
          \draw[style=cedge] (out4) to node[left,near start] {$2N$} (seed3);
          \draw[style=cedge] (end3) to node[left] {$2N$} (end3|-re2.north);
          \draw[style=qedge] (end2) to (end2|-re2.north);
  \end{tikzpicture}}}
  \label{soundnesspic}
  \end{equation}

\item \textbf{Completeness.}
\begin{equation}
	\mpp{0.5}{\begin{tikzpicture}
		\node[style=scalar] (sc) at (.5, 0) {$1-\delta$};
		\end{tikzpicture}}
		~~ \in ~~
		\mpp{0.4}{\begin{tikzpicture}
		\node[style=qstate] (dstate2) at (3.6,0) {};
		\node[style=uniform] (seed2) at (2.8,0) {};
		\node[style=qprocess] (re2) at (3.2,1) {~~~~$R(N)$~~~~};
		\draw[style=qedge] (dstate2) to (dstate2|-re2.south);
		\draw[style=cedge] (seed2) to node[left] { $N$} (seed2|-re2.south);
		\node[style=terminal] (end3) at (2.8,2) {};
		\node (out2) at (2.8,2.8) {};
		\node[style=terminal] (end2) at (3.6,2) {};
		\draw[style=cedge] (end3) to node[left] { $2N$} (end3|-re2.north);
		\draw[style=qedge] (end2) to (end2|-re2.north);
	\end{tikzpicture}}
\end{equation}
\end{enumerate}
\end{theorem}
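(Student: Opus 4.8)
The plan is to recognize Theorem~\ref{startingthm} as a graphical repackaging of the established security guarantees for device-independent \emph{linear} randomness expansion, rather than to reprove those guarantees from scratch. Accordingly, I would first exhibit $R(N)$ concretely as a spot-checking protocol built from the primitives of the device-independent protocol definition: over $\Theta(N)$ rounds the classical register feeds inputs to the two devices (``Giving input to a device''), collects their answers (``Receiving input from a device''), and, on a sparse random subset of rounds selected by a deterministic function of the seed, scores a CHSH game; a classical function then compares the empirical win rate against a threshold and invokes the ``Failure'' operation (the subnormalizing projection) if the test is not passed, after which a quantum-proof strong extractor --- again a deterministic classical function --- compresses the raw transcript to $2N$ bits. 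The ``Communication between devices'' primitive is \emph{not} used, so the two device wires never interact, which is what makes the device-independent analysis applicable, and the bookkeeping in which only the sparse check rounds consume seed randomness is what lets $N$ input bits yield $2N$ output bits.

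Next I would import the security theorem. The work on device-independent randomness expansion \cite{MY14-1, CSW14, colbeck2007quantum} establishes that, for an appropriate threshold and extractor, there is a function $\delta(N) \in 2^{-\Omega(N)}$ such that for \emph{every} behaviour of the untrusted devices and every initial joint state of devices-plus-adversary, the extracted $2N$-bit output lies within trace distance $2\delta$ of uniform, conditioned on the seed and on the adversary's quantum side information. The core of the proof is then to check that this analytic statement is exactly what the soundness diagram~(\ref{soundnesspic}) asserts. In the left-hand diagram the uniform seed node copied along one wire into $r$ and along another to the output realises the conditioning on a known, reusable seed; the split state $\Gamma$ sends part of itself into $r$ (the devices' share) and keeps the rest as the adversary's side information; and terminating the residual quantum output of $r$ is the partial trace over the spent devices. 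The right-hand diagram replaces the genuine $2N$-bit output by a fresh uniform node while leaving the seed copy and the side information untouched. By the remark following Definition~\ref{def:distanceprocesses}, two subnormalized states are related by $=_\delta$ precisely when they differ by at most $2\delta$ in trace distance, so the imported bound yields~(\ref{soundnesspic}); the subnormalization introduced by ``Failure'' is carried along automatically, so the abort branch is covered.

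For completeness I would exhibit a single honest instantiation of the devices --- the optimal CHSH strategy on shared maximally entangled pairs, measured in the prescribed bases --- and note that it wins each spot-check with probability $\cos^2(\pi/8)$. A concentration (Chernoff) bound then shows that the empirical win rate clears the threshold, hence the protocol does not abort, except with probability $2^{-\Omega(N)}$; choosing the threshold so that this honest abort probability is at most $\delta$ makes the honest no-abort probability at least $1-\delta$. Since the set of achievable no-abort probabilities is downward closed, as one may convexly mix any strategy with an always-aborting one, the value $1-\delta$ itself lies in the set of numbers represented by the completeness diagram, which is the required membership statement.

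The step I expect to be the main obstacle is the faithful translation of the second paragraph rather than any single diagram rewrite: one must verify that the diagrammatic data --- the copied seed wire, the split side-information state $\Gamma$, and the terminated quantum output under a subnormalizing abort --- match, uniformly over all $r \in R(N)$ and all $\Gamma$, the quantum-side-information conditioning used in the source security theorems, and that the seed-length versus output-length accounting ($2^N \to 2^{2N}$ with error $2^{-\Omega(N)}$) is met by an off-the-shelf protocol together with its quantum-proof extractor. Once this correspondence is pinned down, the theorem follows with no further diagrammatic manipulation.
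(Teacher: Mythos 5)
Your proposal is correct in spirit and matches the paper's main-text proof, which is literally ``this follows from known results''; but the paper's actual formal justification (Appendix~\ref{justificationapp}) takes a different decomposition than the one you sketch, and the difference is exactly the step you flag as your ``main obstacle.'' You import an end-to-end statement from the literature --- uniform $N$-bit seed in, $2N$-bit output that is $2\delta$-close to uniform conditioned on the published seed and the adversary's side information --- and then argue the soundness diagram is a transcription of that statement. The paper instead imports a \emph{weaker} primitive, Theorem 1.1 of \cite{MY14-2} (Theorem~\ref{altsecthm}): min-entropy generation under the \emph{biased} input distribution $B_q^{\otimes M}$, which is not of the form required by Theorem~\ref{startingthm} at all. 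It then rebuilds the rest explicitly: Proposition~\ref{asB} shows $B_q^{\otimes M}$ can be simulated to within $2^{-\Omega(qM)}$ by a deterministic function of $O(q\log(1/q)M)$ uniform bits (support truncation plus dyadic rounding --- exact simulation is impossible), which is what really justifies your claim that ``only the sparse check rounds consume seed randomness''; Theorem~\ref{trevisanthm} and Corollary~\ref{trevisancor} supply a quantum-proof Trevisan extractor \emph{extended to subnormalized states}, since the off-the-shelf extractor guarantee is for normalized states and the abort branch breaks normalization (your remark that subnormalization ``is carried along automatically'' is only safe because your imported theorem hides extraction inside it); and the definition of $R(M)$ in the appendix splits the seed between simulating the devices' inputs and seeding the extractor --- an allocation your bookkeeping omits, since the extractor seed also draws on the $N$-bit budget. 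What your route buys is brevity, at the cost of needing the literature to contain a theorem in precisely the right packaging (uniform seed register, $2^N \to 2^{2N}$ dimension accounting, security conditioned on the seed, abort handled); what the paper's route buys is that all of this interfacing is proved rather than assumed, with every error term explicit. Your completeness argument coincides with the paper's (honest CHSH value $\tfrac12+\tfrac{\sqrt2}{4} > 0.85$ plus a Chernoff bound, cf.\ Remark~\ref{completenessrem}), and your downward-closure argument for exact membership of $1-\delta$ is a detail the paper leaves implicit.
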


\begin{proof}
  This follows from known results \cite{vazirani2012, MY14-1, MY14-2,
    Arnon:2016}.  See Appendix~\ref{justificationapp} for a formal
  explanation.
\end{proof}

``Soundness'' asserts that the protocols $R(N)$ must either produce
random numbers or fail.  ``Completeness'' asserts that there exist
processes which will make $R ( N )$ succeed with probability
approaching $1$.

The following corollary will be a key step in our proof of unbounded
randomness expansion.  Whereas Theorem~\ref{startingthm} assumes that
the state of the devices is destroyed, the next lemma addresses the
case where the device-state is preserved.  In any diagram, let
$\causal$ denote the set of all causal processes (with input and
output types as implied by the diagram).

\begin{lemma}[Spot-Check Lemma]
  \label{lem:spotcheck}
  There exists $\epsilon = \epsilon ( N ) \in 2^{-\Omega ( N ) }$ such that for every
  integer $N \geq 1$,
  \begin{equation}
	\mpp{0.5}{
	\begin{tikzpicture}
		\node (dstate) at (.5,-0.5) {};
		\node (dstateout) at (.5,2.5) {};
		\node (end2) at (-1.5,2.5) {};     
		\node (end3) at (-.3,2.5) {};
		\node[style=uniform] (seed) at (-.9,0) {};
		\node (seedlabel) at (-.9,-.4) {$N$};
		\node[style=qprocess] (re) at (0,1) {~~$R(N)$~~};
		\draw[style=qedge] (dstate) to (dstate|-re.south);
		\draw[style=qedge] (dstateout) to (dstateout|-re.north);
		\draw[style=cedge] (end3) to node[left,near start] {$2N$} (end3|-re.north);
		\draw[style=cedge] (end2) to (end2|-re.south) to[out=-90,in=180] (seed);
		\draw[style=cedge] (end3|-re.south) to[out=-90,in=0] (seed);
		\node (end) at (0.3,2.8) {};
	\end{tikzpicture}
	}
	~~ \subseteq_{\epsilon} ~~
	\mpp{0.5}{
	\begin{tikzpicture}
		\node (endr3) at (2.5,2.8) {};        
		\node (endr1) at (3,2.8) {};
		\node (endr2) at (4.5,2.8) {};
		\node (dstate2) at (5,-0.5) {};
		\node[style=uniform] (seed2) at (3.6,0) {};
		\node (seed2label) at (3.6,-.4) {$N$};
		\node[style=uniform] (seed3) at (3.6,1) {};
		\node (seed3label) at (3.6,.6) {$2N$};
		\node[style=qprocess] (cause) at (4.6,2) {~~~~$\causal$~~~~};
		\node[style=qprocess] (corr) at (4.7,.7) {\color{white} blank};
		\draw[style=cedge] (seed3) to[in=-90, out=180] (endr1);
		\draw[style=cedge] (seed2) to[in=-90, out=180] (endr3);    
		\draw[style=cedge] (endr2|-corr.south) to[in=0, out=-90, looseness=1] (seed2);
		\draw[style=cedge] (seed3) to[in=-90, out=0] (cause.220); 
		\draw[style=qedge] (dstate2) to[in=-90, out=90] (dstate2|-corr.south);
		\draw[style=qedge] (dstate2|-corr.north) to[in=-90, out=90] (dstate2|-cause.south);
		\draw[style=cedge] (corr.120) to[in=-90, out=90, looseness=1] (corr.120|-cause.south);
		\draw[style=qedge] (endr2) to (endr2|-cause.north);  
	\end{tikzpicture}
	}
\end{equation}
\end{lemma}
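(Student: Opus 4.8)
The plan is to derive this set containment from the soundness half of Theorem~\ref{startingthm} combined with the duplication universality packaged in Corollary~\ref{cor:dup}, with the final constant being $\epsilon := \sqrt{2\delta} \in 2^{-\Omega(N)}$. To prove $\subseteq_\epsilon$ I fix an arbitrary element $r \in R(N)$ and must exhibit a choice of the causal box $\causal$ (and the accompanying structural box) on the right making the two diagrams $\epsilon$-close as processes. Since by Definition~\ref{def:distanceprocesses} process distance is tested against arbitrary input states carrying a reference, I first read off what happens when the \emph{preserved} device output of the left diagram is terminated: the seed is a uniform $N$-register that is copied (one copy emitted, one consumed by $r$), the $2N$-register is the genuine protocol output, and discarding the device output turns the left diagram into exactly the left-hand side of the soundness relation~(\ref{soundnesspic}). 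Theorem~\ref{startingthm} then says that, once the device output is discarded, the joint state of (seed copy, $2N$ output, retained reference) is $\delta$-close to the product of a uniform seed copy, a \emph{fresh} uniform $2N$-register, and the reference. Call this ideal product state $\Psi$, with classical part $C=(\text{seed copy})\otimes(2N)$ and quantum part the device reference.

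Next I would invoke Corollary~\ref{cor:dup} with this $\Psi$, taking $\Phi$ to be the full left-hand state with the device output retained (playing the role of the reconstructed register $R$, with $D$ empty) and taking the corollary's $\epsilon$ to be $\delta$. The hypothesis~(\ref{phieq}) is precisely the soundness relation just obtained, so the corollary (via Proposition~\ref{prop:approxprop}) produces a causal process $\alpha$ and a duplicate $\Psi''$ of $\Psi$ with the full state $\sqrt{2\delta}$-close to ``apply $\alpha$ to the fixed $QC$-part of $\Psi''$.'' Because $\Psi$ is a product of two \emph{uniform classical} registers with the device reference, its duplicate merely copies each uniform register --- these are exactly the two gray uniform nodes on the right, each emitted once and fed once into the reconstruction --- and purifies the quantum register. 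The causal $\alpha$ is then the box labelled $\causal$, and the auxiliary box is the structural routing delivering the device input and the seed copy into it.

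The step needing the most care is upgrading the state-level conclusion of Corollary~\ref{cor:dup} to the process-level (set) statement of the lemma, since the reconstructing map $\alpha$ must be a \emph{single} process independent of whatever quantum state is later fed along the free device-input wire. I would handle the device-input wire inside the duplication framework itself: bending it down with a spider presents the left diagram as a genuine state whose quantum reference register is maximally mixed, so that the purified component of the duplicate $\Psi''$ is the maximally entangled state, i.e.\ the bent identity wire. Unbending then converts the state identity back into a statement about processes, with $\alpha$ reappearing as a fixed causal process acting on the device input together with the duplicated seeds --- exactly the right-hand diagram. The one technical point to verify is that this bending is compatible with the approximation metric of Definition~\ref{def:distanceprocesses}, so that the $\sqrt{2\delta}$ state bound genuinely controls the process distance; once that is checked, the containment follows by letting $r$ range over $R(N)$.
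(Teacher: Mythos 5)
Your overall strategy---the soundness half of Theorem~\ref{startingthm} combined with Corollary~\ref{cor:dup}, ending with $\epsilon=\sqrt{2\delta}$---is the same as the paper's, but two of your concrete steps fail. First, the blank box on the right-hand side cannot be ``structural routing.'' The ideal state $\Psi$ is not a product: the retained seed copy is correlated with the device reference through the effect obtained by running $r$ and terminating all of its outputs, an effect described by positive semidefinite operators $\Sigma_i$ (one per seed value $i$) with $\Sigma_i \leq 1$. A duplicate of $\Psi$ must purify, per seed value, the corresponding subnormalized reference state, and the paper realizes this by filling the blank box with the controlled pure process $b\colon X \mapsto \left( \sum_i \left| i \right> \left< i \right| \otimes \sqrt{\Sigma_i} \otimes \overline{\sqrt{\Sigma_i}} \right) X$, which depends only on $r$. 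This box is essential and nontrivial: it is trace-decreasing exactly where $r$ aborts. With mere routing in its place, the right-hand process is causal overall (uniform states composed with routing and a causal box preserve trace), and no such process can be $2^{-\Omega(N)}$-close to the left-hand process when $r$ aborts with non-negligible probability; moreover copying the uniform registers alone does not reproduce the seed--reference correlations, so what you feed into Corollary~\ref{cor:dup} is not actually a duplicate. Relatedly, the purified component is not the bent identity wire: it is (up to normalization) the vectorization of $\sqrt{\Sigma_i}$, which is precisely what $b$ produces.

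Second, and more fundamentally, the bend/unbend maneuver you propose for making the reconstruction map input-independent does not survive the metric of Definition~\ref{def:distanceprocesses}. Bending the device-input wire reduces the lemma to a statement about one particular (Choi-type) state, and Corollary~\ref{cor:dup} then yields a fixed causal map with error $\sqrt{2\delta}$ \emph{for that state}; but unbending converts trace-distance closeness of Choi states into process closeness only at the cost of a multiplicative factor of the dimension of the device register, which is uncharacterized and unbounded in the device-independent setting. So your ``one technical point to verify'' is in fact false, and the $\sqrt{2\delta}$ bound does not survive the conversion. The paper never bends: the soundness relation (\ref{soundnesspic}) is already stated for an arbitrary input $\Gamma$ carrying a reference wire, the map $b$ is manifestly independent of $\Gamma$, and the duplication claims hold for every $\Gamma$ with this single $b$; only after these uniform-in-$\Gamma$ ingredients are in place is Corollary~\ref{cor:dup} invoked to produce the causal box. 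If you want the input-independence of that causal map spelled out completely, it should come from a Stinespring-type universality argument applied to the processes themselves (both sides being dilations, via $b$ and via $r$, of processes with classical output), not from testing on a maximally entangled input.
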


\begin{proof}
Let $\delta$ be as in Theorem~\ref{startingthm}, let $N$ be a positive
integer, and let $r \in R( N )$.  We have
\begin{equation}
\label{gammapic}
\mpp{0.5}{\scalebox{1.0}{\begin{tikzpicture}
		\node (out1) at (-.2,2.8) {};
		\node (out2) at (-1.4,2.8) {};
		\node (out3) at (1.2,2.8) {};
		\node[style=terminal] (end1) at (0.5,2) {};
		\node[style=qstate] (dstate1) at (0.7,0) {~~$\Gamma$~~};
		\node[style=uniform] (seed1) at (-.8,0) {};
		\node at (-.8,-.4) {$N$};
		\node[style=qprocess] (re1) at (0,1) {~~~~$r$~~~~};
		\draw[style=qedge] (end1|-dstate1.north) to (end1|-re1.south);
		\draw[style=cedge] (seed1) to[out=0,in=-90] (out1|-re1.south);
		\draw[style=cedge] (out1) to node[left,near start] { $2N$}(out1|-re1.north);
		\draw[style=cedge] (out2) to[out=-90,in=90]  (-1.4,1) to[out=-90,in=180] (seed1);
		\draw[style=qedge] (out3) to (out3|-dstate1.north);
		\draw[style=qedge] (end1) to (end1|-re1.north);
		\end{tikzpicture}}}
~~ =_\delta ~~
\mpp{0.5}{\scalebox{1.0}{\begin{tikzpicture}
		\node[style=qstate] (dstate2) at (5,0) {~~$\Gamma$~~};
		\node[style=uniform] (seed2) at (3.4,0) {};
		\node at (3.4,-.4) {$N$};
		\node[style=qprocess] (re2) at (4.2,1) {~~~~$r$~~~~};
		\node (out4) at (4,3) {};
		\node (out5) at (2.8,3) {};
		\node (out6) at (5.4,3) {};
		\node[style=terminal] (end2) at (4.7,1.9) {};
		\node[style=uniform] (seed3) at (4,2.3) {};
		\node[style=terminal] (end3) at (4,1.9) {};
		\draw[style=qedge] (end2|-dstate2.north) to (end2|-re2.south);
		\draw[style=cedge] (seed2) to[out=0,in=-90]  (out4|-re2.south);
		\draw[style=cedge] (out5) to[out=-90,in=90] (2.8,1) to[out=-90,in=180] (seed2);
		\draw[style=qedge] (out6) to (out6|-dstate2.north);
		\draw[style=qedge] (end2) to  (end2|-re2.north);
		\draw[style=cedge] (out4) to node[left,near start] {$2N$} (seed3);
		\draw[style=cedge] (end3) to node[left] {$2N$} (end3|-re2.north);
		\draw[style=qedge] (end2) to (end2|-re2.north);
		\end{tikzpicture}}}
\end{equation}
for any pure state $\Gamma$.  We construct a duplication of the state
on the right side of the above equation.  Let $Q = V \otimes V$ denote
the quantum register represented by the thick input wire received by
the process $r$.  The effect
\begin{equation}
  \begin{tikzpicture}
  \node (in1) at (-.3,-.8) {};
  \node (in2) at (.5,-.8) {};
  \node[style=terminal] (end1) at (-.3,.6) {};
  \node[style=terminal] (end2) at (.5, .6) {};
  \node[style=qprocess] (re) at (0,0) {~~~~$r$~~~~};
  \draw[style=cedge] (in1) to (in1|-re.south);
  \draw[style=qedge] (in2) to (in2|-re.south);
  \draw[style=cedge] (end1) to (end1|-re.north);
  \draw[style=qedge] (end2) to (end2|-re.north);
  \end{tikzpicture}
\end{equation}
is an element in the dual of $\mathbb{C}^N \otimes V \otimes V$ which
can be written as $\sum_{ijk} \sigma_{jk}^i \left< i \right| \otimes
\left< j \right| \otimes \left< k \right|$, where the matrices
$\Sigma^i = [ \sigma^i_{jk} ]_{jk}$ are positive semidefinite.  Let
$b:\mathbb{C}^N\otimes V \otimes V \to \mathbb{C}^N\otimes V\otimes V$
denote the controlled pure process
\begin{equation}
X \mapsto \left( \sum_i \left| i \right> \left< i \right| \otimes
\sqrt{ \Sigma_i} \otimes \overline{\sqrt{\Sigma_i }} \right) X
\end{equation}
Then, the state
\begin{equation}
\scalebox{0.8}{\begin{tikzpicture}
      \node (out1) at (-1.2,2) {};
      \node (out2) at (-.4,2) {};
      \node (out3) at (.4,2) {};
      \node (out4) at (1.2,2) {};
      \node[style=qstate] (dstate2) at (.8,0) {~~~$\Gamma$~~~};
      \node[style=uniform] (seed2) at (-.8,0) {};
      \node[style=qprocess] (re2) at (0,1) {~~~~$b$~~~~};
      \draw[style=qedge] (out3) to (out3|-re2.north);
      \draw[style=qedge] (out3|-re2.south) to (out3|-dstate2.north);
      \draw[style=qedge] (out4) to (out4|-dstate2.north);
      \draw[style=cedge] (out2) to (out2|-re2.north);
      \draw[style=cedge] (out2|-re2.south) to[out=-90,in=0] (seed2);
      \draw[style=cedge] (out1) to[out=-90,in=90] (out1|-re2.south) to[out=-90,in=180] (seed2);
      \end{tikzpicture}
}
\end{equation}
is a duplication of the state
\begin{equation}
\scalebox{0.8}{      \begin{tikzpicture}
	\node (out1) at (-1.2,2) {};
	\node[style=terminal] (out2) at (-.4,2) {};
	\node[style=terminal] (out3) at (.4,2) {};
	\node (out4) at (1.2,2) {};
	\node[style=qstate] (dstate2) at (.8,0) {~~~$\Gamma$~~~};
	\node[style=uniform] (seed2) at (-.8,0) {};
	\node[style=qprocess] (re2) at (0,1) {~~~~$r$~~~~};
	\draw[style=qedge] (out3) to (out3|-re2.north);
	\draw[style=qedge] (out3|-re2.south) to (out3|-dstate2.north);
	\draw[style=qedge] (out4) to (out4|-dstate2.north);
	\draw[style=cedge] (out2) to (out2|-re2.north);
	\draw[style=cedge] (out2|-re2.south) to[out=-90,in=0] (seed2);
	\draw[style=cedge] (out1) to[out=-90,in=90] (out1|-re2.south) to[out=-90,in=180] (seed2);
	\end{tikzpicture}
}
\end{equation}
Likewise, the state
\begin{equation}
\scalebox{0.8}{      \begin{tikzpicture}
      \node (out1) at (-1.5,2.5) {};
      \node (out2) at (-1,2.5) {};
      \node (out3) at (1,2.5) {};
      \node (out4) at (1.5,2.5) {};
      \node (out5) at (-.4,2.5) {};
      \node (out6) at (.4,2.5) {};
      \node[style=uniform] (seed3) at (0,1.75) {};
      \node[style=qstate] (dstate2) at (.8,0) {~~~$\Gamma$~~~};
      \node[style=uniform] (seed2) at (-1,0) {};
      \node[style=qprocess] (re2) at (0,1) {~~~~$b$~~~~};
      \draw[style=qedge] (out3) to[out=-90,in=90] (out6|-re2.north);
      \draw[style=qedge] (out6|-re2.south) to (out6|-dstate2.north);
      \draw[style=qedge] (out4) to (out4|-dstate2.north);
      \draw[style=cedge] (out2) to[out=-90,in=90] (out5|-re2.north);
      \draw[style=cedge] (out5|-re2.south) to[out=-90,in=0] (seed2);
      \draw[style=cedge] (out1) to[out=-90,in=90] (out1|-re2.south) to[out=-90,in=180] (seed2);
      \draw[style=cedge] (out5) to[out=-90,in=180] node[right, near start] {\footnotesize $~2N$} (seed3);
      \draw[style=cedge] (out6) to[out=-90,in=0] (seed3);
      \end{tikzpicture}
}
\end{equation}
is a duplication of the state on the right side of (\ref{gammapic}).
Therefore by Corollary~\ref{cor:dup}, there is a causal process $c$
such that
\begin{equation}
\mpp{0.5}{\scalebox{0.8}{\begin{tikzpicture}
	\node (out1) at (-1.2,2.8) {};
	\node (out2) at (-.4,2.8) {};
	\node (out3) at (.5,2.8) {};
	\node (out4) at (1.2,2.8) {};
	\node[style=qstate] (dstate) at (.8,0) {~~~$\Gamma$~~~};
	\node[style=uniform] (seed) at (-.8,.4) {};
	\node[style=qprocess] (re) at (0,1.4) {~~~~$r$~~~~};
	\draw[style=qedge] (out3) to (out3|-re.north);
	\draw[style=qedge] (out3|-re.south) to (out3|-dstate.north);
	\draw[style=qedge] (out4) to (out4|-dstate.north);
	\draw[style=cedge] (out2) to (out2|-re.north);
	\draw[style=cedge] (out2|-re.south) to[out=-90,in=0] (seed);
	\draw[style=cedge] (out1) to[out=-90,in=90] (out1|-re.south) to[out=-90,in=180] (seed);
\end{tikzpicture}}}
~~ =_{\sqrt{2 \delta }} ~~
\mpp{0.5}{\scalebox{0.8}{\begin{tikzpicture}
	\node (out5) at (3.2,2.8) {};        
	\node (out6) at (3.7,2.8) {};
	\node (out7) at (5.2,2.8) {};
	\node (out8) at (6.7,2.8) {};
	\node[style=uniform] (seed2) at (4.1,0) {};
	\node (seed2label) at (4.1,-.4) {$N$};
	\node[style=uniform] (seed3) at (4.3,1.2) {};
	\node (seed3label) at (4.2,.8) {$2N$};
	\node[style=qprocess] (cause) at (5.2,2) {~~~~~$c$~~~~~};
	\node[style=qprocess] (corr) at (5.5,1) {~~~~$b$~~~~};
	\node[style=qstate] (dstate2) at (6,0) {~~~$\Gamma$~~~};
	\draw[style=cedge] (corr.145) to (corr.145|-cause.south);
	\draw[style=qedge] (corr.45) to (corr.45|-cause.south);
	\draw[style=qedge] (corr.-45) to (corr.-45|-dstate2.north);
	\draw[style=cedge] (corr.-145) to[out=-90,in=0] (seed2);
	\draw[style=cedge] (out5) to[out=-90,in=90] (out5|-corr.south) to[out=-90,in=180] (seed2);
	\draw[style=cedge] (cause.-145) to[out=-90,in=0] (seed3);
	\draw[style=cedge] (out6) to[out=-90,in=90] (out6|-cause.south) to[out=-90,in=180] (seed3);
	\draw[style=qedge] (out7) to (out7|-cause.north);
	\draw[style=qedge] (out7) to (out7|-cause.north);
	\draw[style=qedge] (out7) to (out7|-cause.north);
	\draw[style=qedge] (out8) to (out8|-dstate2.north);
\end{tikzpicture}}}
\end{equation}
The desired result follows (with $\epsilon = \sqrt{2 \delta}$).
\end{proof}

Additionally, we note the following alternative form of the
completeness assertion for $R( N )$.  The next lemma asserts that the
set of all possible classical outputs of the spot-checking protocol
must contain a state that is close to the (normalized) uniform state
on $\mathbb{C}^{2^N}$. This is similar to the use of ``adjustment
completeness error'' in \cite{MY14-1}.

\begin{lemma}
	\label{complemma}
	There exists $\zeta(N) \in 2^{-\Omega ( N ) }$ such that
	\begin{equation}
	\mpp{0.4}{\scalebox{0.8}{\begin{tikzpicture}
			\node[style=uniform] (sc) at (0, 0) {};
			\node (out1) at (0,2.5) {};
			\draw[style=cedge] (sc) to node[left] {$2N$} (out1);
			\end{tikzpicture}}}
	~~ \in_\zeta ~~
	\mpp{0.4}{\scalebox{0.8}{\begin{tikzpicture}
			\node[style=qprocess] (re) at (2.5,1) {~~~~$R$~~~~};
			\node (out2) at (2.1,2.5) {};
			\node[style=terminal] (end3) at (3,2) {};
			\node[style=qstate] (dstate) at (3,0) {};
			\node[style=uniform] (seed) at (2.1,0) {};
			\draw[style=qedge] (dstate) to (dstate|-re.south);
			\draw[style=cedge] (seed) to node[left,near start] { $N$} (seed|-re.south);
			\draw[style=qedge] (end3) to (end3|-re.north);
			\draw[style=cedge] (out2) to node[left,near start] { $2N$} (out2|-re.north);
			\end{tikzpicture}}}
	\end{equation}
\end{lemma}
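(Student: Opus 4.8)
The plan is to combine the two halves of Theorem~\ref{startingthm}. Soundness already says that the classical output of $R(N)$ is close to uniform once we retain the seed and the quantum side-information, but on its own this is an approximation between \emph{subnormalized} states and would be vacuous if the protocol always aborted. The role of completeness is precisely to pin down the trace of the output: it guarantees that the protocol can be made to succeed with probability close to $1$, so that the subnormalized output is in fact close to the honestly normalized uniform state. So the proof is really an argument of the shape ``output $\propto$ uniform'' (from soundness) together with ``the constant of proportionality is $\approx 1$'' (from completeness).

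First I would specialize the soundness equation (\ref{soundnesspic}) to an arbitrary $r \in R(N)$ and device state $\Gamma$, and then post-compose both sides with terminations on the retained-seed wire and on the preserved quantum wire carrying $\Gamma$. Since $=_\delta$ is preserved by composition and uniform states absorb terminations (\ref{eq:spiders-discard}), the copied seed collapses back to a single uniform seed feeding into $r$. The left-hand side then becomes exactly the diagram on the right of Lemma~\ref{complemma}---run $R(N)$ on a uniform seed and device state $\Gamma$, discard the quantum output, keep the $2N$ classical output---call this subnormalized state $\rho_{\mathrm{out}}$. On the right-hand side the same terminations reduce the run of $r$ to its scalar success probability $p$, tensored with the fresh uniform state on $\mathbb{C}^{2^{2N}}$, giving $\rho_{\mathrm{out}} =_\delta p\cdot u$, where $u$ denotes the uniform state.

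Next I would invoke completeness: the value $1-\delta$ lies in the set of achievable success probabilities, so there is a choice of $r^\star \in R(N)$ and device state $\Gamma^\star$ for which $p = 1-\delta$. For this choice the approximation reads $\rho_{\mathrm{out}} =_\delta (1-\delta)\,u$. Because $(1-\delta)\,u$ and $u$ differ by $\delta$ in trace, they are related by $=_{\delta'}$ with $\delta'$ a constant multiple of $\delta$ (the exact factor depending only on the normalization convention for trace distance); the triangle inequality then gives $u =_{\delta+\delta'}\rho_{\mathrm{out}}$, where $\rho_{\mathrm{out}}$ is a member of the set on the right of Lemma~\ref{complemma}. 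By the singleton instance of Definition~\ref{def:distancesets}, this is exactly $u \in_{\zeta}(\text{that set})$ with $\zeta = \delta + \delta'$, which is still in $2^{-\Omega(N)}$.

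The main obstacle---and the reason completeness cannot be dropped---is controlling the trace of the output. Soundness by itself yields only $\rho_{\mathrm{out}} =_\delta p\cdot u$ with an unknown $p \in [0,1]$; a protocol that always aborts makes $\rho_{\mathrm{out}} = 0$, which is far from uniform, so no amount of soundness alone proves the claim. It is completeness that forces $p \approx 1$ and converts the proportionality into a genuine approximation. The remaining work is bookkeeping: applying the two terminations identically to both sides of (\ref{soundnesspic}), checking that the simplified left-hand side coincides literally with the diagram in Lemma~\ref{complemma}, and tracking the additive constants through the triangle inequality.
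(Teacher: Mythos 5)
Your proposal is correct and is essentially the paper's own proof: the paper likewise combines the soundness and completeness claims of Theorem~\ref{startingthm} to obtain the intermediate statement that the scaled uniform state $(1-\delta)\cdot u$ lies within $\delta$ of the set of outputs of $R(N)$ (with quantum output discarded), and then concludes with $\zeta = 2\delta$. Your write-up simply makes explicit the bookkeeping the paper compresses into one displayed equation --- terminating the retained seed and device wires, invoking Equation~(\ref{eq:spiders-discard}), using completeness to pin the success probability at $1-\delta$, and applying the triangle inequality to absorb the trace deficit.
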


\begin{proof}
Combining the soundness and completeness claims in \Cref{startingthm},
we have
\begin{equation}
\mpp{0.4}{\scalebox{0.8}{\begin{tikzpicture}
		\node[style=uniform] (sc) at (0, 0) {};
		\node (out1) at (0,2.5) {};
		\node[style=scalar] at (-1.5,1) {$1-\delta$};
		\draw[style=cedge] (sc) to node[right] {$2N$} (out1);
		\end{tikzpicture}}}
~~ \in_\delta ~~
\mpp{0.4}{\scalebox{0.8}{\begin{tikzpicture}
		\node[style=qprocess] (re) at (2.5,1) {~~~~$R$~~~~};
		\node (out2) at (2.1,2.5) {};
		\node[style=terminal] (end3) at (3,2) {};
		\node[style=qstate] (dstate) at (3,0) {};
		\node[style=uniform] (seed) at (2.1,0) {};
		\draw[style=qedge] (dstate) to (dstate|-re.south);
		\draw[style=cedge] (seed) to node[left,near start] { $N$} (seed|-re.south);
		\draw[style=qedge] (end3) to (end3|-re.north);
		\draw[style=cedge] (out2) to node[left,near start] { $2N$} (out2|-re.north);
		\end{tikzpicture}}}
\end{equation}
The desired result follows, with $\zeta = 2 \delta$.
\end{proof}

\subsection{Unbounded randomness expansion}

Now we discuss a graphical proof of unbounded (rather than linear)
randomness expansion. We would like to apply the spot-checking
protocol and lemma repeatedly, in order to obtain unbounded randomness
expansion. Naively stacking $R(N)$ operations atop one another does
not work. Intuitively, this is because the results of
subsection~\ref{lrsubsec} only apply if the initial seed is
independent of the state of the devices in the protocol $R(N )$.  If
we reuse devices in two successive iterations of the protocol, that
independence assumption may not hold. However, it was observed in
\cite{CSW14,CY13, MY14-1} that we can still obtain unbounded
randomness by employing two pairs of devices and alternating which
pair is employed in the protocol. In effect, one proves that the
second application of the protocol wipes out any correlation with the
first device, which may then be used in the third step to erase
correlation with the second, and so on.

\begin{definition}
  \label{def:sn}
  For any integer $N \geq 1$, let $S ( N)$ denote the set of processes
  given by
  \[
  \mpp{0.5}{
  \begin{tikzpicture}
    \node[style=qprocess] (scp2) at (-2,0) {~~$S(N)$~~};
    \node (in1) at (-2.5,-1) {};
    \node (in2) at (-2,-1) {};
    \node (in3) at (-1.5,-1) {};
    \node (out1) at (-2.5,1) {};
    \node (out2) at (-2,1) {};
    \node (out3) at (-1.5,1) {};
    \draw[style=cedge] (in1) to (in1|-scp2.south);
    \draw[style=qedge] (in2) to (in2|-scp2.south);
    \draw[style=qedge] (in3) to (in3|-scp2.south);
    \draw[style=cedge] (out1) to (out1|-scp2.north);
    \draw[style=qedge] (out2) to (out2|-scp2.north);
    \draw[style=qedge] (out3) to (out3|-scp2.north);
    \node (eq) at (-0.2,0) {$:=$};
    \node[style=qprocess] (scp3) at (.8,-0.9) {$R(N)$};
    \node[style=qprocess] (scp4) at (2,.4) {$R(2N)$};
    \node (in4) at (.5,-1.8) {};
    \node (in5) at (1.1,-1.8) {};
    \node (in6) at (2.4,-1.8) {};
    \node (out4) at (.5,1.8) {};
    \node (out5) at (1.1,1.8) {};
    \node (out6) at (2.4,1.8) {};
    \draw[style=cedge] (in4) to (in4|-scp3.south);
    \draw[style=qedge] (in5) to (in5|-scp3.south);
    \draw[style=qedge] (in6) to (in6|-scp4.south);
    \draw[style=qedge] (out5) to (out5|-scp3.north);
    \draw[style=qedge] (out6) to (out6|-scp4.north);
    \draw[style=cedge] (out4) to[out=-90,in=90] (scp4.120);
    \draw[style=cedge] (out4|-scp3.north) to[out=90,in=-90] (scp4.-120);
  \end{tikzpicture}}
  \]
where $R(N)$ denotes the process set from Theorem~\ref{startingthm}.
Let $S_k(N)$ denote the composition $S(4^{k-1}N) \circ S(4^{k-2}N)
\circ \ldots \circ S(N)$.
\end{definition}

We prove the following lemma (which will be a building block for a
later induction proof).  For the remainder of this section, let
$\epsilon = \epsilon (M )$ be the error function from
Lemma~\ref{lem:spotcheck}.
\begin{lemma}
  \label{lem:sm}
  For every integer $M \geq 1$, we have
\begin{equation}
  \mpp{0.5}{\scalebox{.8}{
    \begin{tikzpicture}
      \node (in1) at (0,-0.5) {};
      \node (in2) at (.5,-0.5) {};
      \node (out1) at (-1.5,3) {};     
      \node (out2) at (-.5,3) {};
      \node (out3) at (0,3) {};
      \node (out4) at (.5,3) {};
      \node[style=uniform] (seed) at (-1,0) {};
      \node (seedlabel) at (-.9,-.4) {$M$};
      \node[style=qprocess] (re) at (0,1.2) {~~$S(M)$~~};
      \draw[style=qedge] (in1) to (in1|-re.south);
      \draw[style=qedge] (in2) to (in2|-re.south);
      \draw[style=cedge] (out1) to (out1|-re.south) to[out=-90,in=180] (seed);
      \draw[style=cedge] (out2) to node[left,near start] {$4M$} (out2|-re.north);
      \draw[style=cedge] (out2|-re.south) to[out=-90,in=0] (seed);
      \draw[style=qedge] (out3) to (out3|-re.north);
      \draw[style=qedge] (out4) to (out4|-re.north);
    \end{tikzpicture}
  }}
  ~~\displaystyle \subseteq_{\epsilon'}~~
   \mpp{0.5}{\scalebox{.8}{
    \begin{tikzpicture}
    \node (out5) at (2.3,3) {};        
    \node (out6) at (2.8,3) {};
    \node (out7) at (3.2,3) {};
    \node (out8) at (4.6,3) {};
    \node (in3) at (4.7,-0.5) {};
    \node (in4) at (5.1,-.5) {};
    \node[style=uniform] (seed2) at (3.6,0) {};
    \node (seed2label) at (3.6,-.4) {$M$};
    \node[style=uniform] (seed3) at (3.6,1) {};
    \node (seed3label) at (3.6,.6) {$4M$};
    \node[style=qprocess] (cause) at (4.6,2.2) {~~~~$\causal$~~~~};
    \node[style=qprocess] (corr) at (4.7,.7) {\color{white} blank};
    \draw[style=cedge] (seed3) to[in=-90, out=180] (out6);
    \draw[style=cedge] (seed2) to[in=-90, out=180] (out5);    
     \draw[style=cedge] (corr.-140) to[in=0, out=-90, looseness=1] (seed2);
     \draw[style=cedge] (seed3) to[in=-90, out=0] (cause.-140); 
     \draw[style=qedge] (in3) to[in=-90, out=90] (in3|-corr.south);
     \draw[style=qedge] (in4) to[in=-90, out=90] (in4|-corr.south);
     \draw[style=cedge] (in3|-corr.north) to[in=-90, out=90] (in3|-cause.south);
     \draw[style=qedge] (in4|-corr.north) to[in=-90, out=90] (in4|-cause.south);
     \draw[style=qedge] (corr.140) to[in=-90,out=90,looseness=.8] (out7);
    \draw[style=qedge] (out8) to (out8|-cause.north);
    \end{tikzpicture}
  }}
  \label{eq:lem1}
\end{equation}
where $\epsilon' ( M ) = \epsilon (M) + \epsilon ( 2 M )$.
\end{lemma}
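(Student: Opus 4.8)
The plan is to apply the Spot-Check Lemma (Lemma~\ref{lem:spotcheck}) twice, using the fact that, by Definition~\ref{def:sn}, $S(M)$ is the composite $R(2M)\circ R(M)$ in which the $2M$-bit classical output of $R(M)$ is fed as the seed of $R(2M)$, while the two sub-protocols act on the distinct device registers $Q_1$ and $Q_2$. Throughout I will use the two properties of the approximation relations recorded after Definitions~\ref{def:distanceprocesses} and~\ref{def:distancesets}: the triangle inequality for $\subseteq_\epsilon$, and the fact that $\subseteq_\epsilon$ is preserved by composition with a fixed process (this last follows for sets from the corresponding property of $=_\epsilon$).

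First I would apply Lemma~\ref{lem:spotcheck} with $N=M$ to the inner block $R(M)$. Viewing the left-hand side of~\eqref{eq:lem1} as the spot-check diagram for $R(M)$ post-composed with $R(2M)$ (the latter acting on the classical output of $R(M)$ together with $Q_2$), the lemma replaces the $R(M)$ block, up to error $\epsilon(M)$, by a retained copy of the $M$-bit seed, a fresh uniform state on $\mathbb{C}^{2^{2M}}$, and a causal reconstruction of the device output $Q_1'$ from those two. Since $R(2M)$ is a fixed post-composition, preservation under composition produces an intermediate diagram that is $\epsilon(M)$-close to the left-hand side and in which $R(2M)$ now receives a genuinely uniform $2M$-bit seed.

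Next I would apply Lemma~\ref{lem:spotcheck} with $N=2M$ to the block $R(2M)$ in this intermediate diagram. The essential point is that the seed of $R(2M)$ is the fresh uniform state produced in the first step, hence uniform and independent of the device register $Q_2$ on which $R(2M)$ acts; this independence is exactly what the use of two separate device pairs buys us, since $Q_2$ is untouched by the first application (its possible entanglement with $Q_1$ is harmless, being carried along the quantum input wire of the second spot-check). The copy of this $2M$-bit seed that the spot-check retains is identified with the copy feeding the reconstruction of $Q_1'$; routing it there instead of leaving it as an output is again a fixed post-composition. The lemma therefore replaces, up to error $\epsilon(2M)$, the $4M$-bit output of $R(2M)$ by a fresh uniform state on $\mathbb{C}^{2^{4M}}$ and a causal reconstruction of $Q_2'$.

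Finally I would assemble the two steps. By the triangle inequality the composite inclusion holds with error $\epsilon(M)+\epsilon(2M)=\epsilon'(M)$, the claimed bound. In the resulting diagram the original seed $\mathbb{C}^{2^M}$ and the fresh uniform state $\mathbb{C}^{2^{4M}}$ persist as the two uniform outputs, while the intermediate fresh uniform on $\mathbb{C}^{2^{2M}}$ becomes internal, shared by the two reconstructions and then consumed; since a composite of causal processes (with an internal uniform state) is again causal, the two reconstructions assemble into the correction box and the causal box on the right-hand side of~\eqref{eq:lem1}. The main obstacle is the bookkeeping in the second step: one must verify that the uniform $2M$-bit state simultaneously serves as the retained seed copy for the second spot-check and as the input to the first reconstruction, and that the device seen by $R(2M)$ is independent of this seed. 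Once these wire identifications are in place the error accounting is immediate, so the delicate part is diagrammatic rather than quantitative.
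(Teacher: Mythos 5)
Your proposal is correct and follows essentially the same route as the paper: expand $S(M)$ via Definition~\ref{def:sn}, apply Lemma~\ref{lem:spotcheck} first to $R(M)$ (error $\epsilon(M)$) and then, after rerouting the fresh uniform $2M$-bit state so that it serves as the seed of $R(2M)$ (the paper phrases this as sliding the spider along its wire), to $R(2M)$ (error $\epsilon(2M)$), and finally absorb the lower boxes into a single unlabeled stochastic process beneath the remaining causal box, with the triangle inequality giving $\epsilon'(M)=\epsilon(M)+\epsilon(2M)$. The paper's proof is exactly this sequence of diagram manipulations, so no comparison beyond this is needed.
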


\begin{proof}
  We first use \Cref{def:sn} to expand the left-hand side of
  \Cref{eq:lem1} and then apply \Cref{lem:spotcheck} to the occurrence
  of $R(M)$ in the resulting diagram.
\begin{equation}
  \mpp{0.5}{\scalebox{0.8}{
    \begin{tikzpicture}
      \node (in1) at (0,-0.5) {};
      \node (in2) at (.5,-0.5) {};
      \node (out1) at (-1.5,3.5) {};     
      \node (out2) at (-.5,3.5) {};
      \node (out3) at (0,3.5) {};
      \node (out4) at (.5,3.5) {};
      \node[style=uniform] (seed) at (-1,.3) {};
      \node (seedlabel) at (-1,-.1) {$M$};
      \node[style=qprocess] (re) at (0,1.5) {~~$S(M)$~~};
      \draw[style=qedge] (in1) to (in1|-re.south);
      \draw[style=qedge] (in2) to (in2|-re.south);
      \draw[style=cedge] (out1) to (out1|-re.south) to[out=-90,in=180] (seed);
      \draw[style=cedge] (out2) to node[left,near start] {$4M$} (out2|-re.north);
      \draw[style=cedge] (out2|-re.south) to[out=-90,in=0] (seed);
      \draw[style=qedge] (out3) to (out3|-re.north);
      \draw[style=qedge] (out4) to (out4|-re.north);
    \end{tikzpicture}
  }}
    ~~ = ~~
  \mpp{0.5}{\scalebox{0.8}{
    \begin{tikzpicture}
      \node[style=qprocess] (scp3) at (3.5,.7) {$R(N)$};
      \node[style=qprocess] (scp4) at (4.7,2) {$R(2N)$};
      \node[style=uniform] (seed2) at (2.6,-.2) {};
      \node (in3) at (3.8,-.5) {};
      \node (in4) at (4.8,-.5) {};
      \node (out5) at (2,3.5) {};
      \node (out6) at (3.3,3.5) {};
      \node (out7) at (3.8,3.5) {};
      \node (out8) at (4.8,3.5) {};
      \draw[style=qedge] (in3) to (in3|-scp3.south);
      \draw[style=qedge] (out7) to (out7|-scp3.north);
      \draw[style=qedge] (in4) to (in4|-scp4.south);
      \draw[style=qedge] (out8) to (out8|-scp4.north);
      \draw[style=cedge] (out5) to (out5|-scp3.south) to[out=-90,in=180] (seed2);
      \draw[style=cedge] (out6|-scp3.south) to[out=-90,in=0] (seed2);
      \draw[style=cedge] (out6) to[out=-90,in=90] (scp4.120);
      \draw[style=cedge] (out6|-scp3.north) to[out=90,in=-90] (scp4.-120);
    \end{tikzpicture}
  }}
    ~~ \subseteq_{\epsilon(M)} ~~
  \mpp{0.5}{\scalebox{0.8}{
    \begin{tikzpicture}
      \node[style=qprocess] (corr) at (0,1.2) {\color{white} blank};
      \node[style=qprocess] (cause) at (0,2.7) {~~~~$\causal$~~~~};
      \node[style=qprocess] (re) at (1,4.2) {$R(2M)$};
      \node[style=uniform] (seed) at (-1.2,.4) {};
      \node (seedlabel) at (-1.2,0) {$M$};
      \node[style=uniform] (seed2) at (-1,1.9) {};
      \node (seed2label) at (-1,1.5) {$2M$};
      \node (in1) at (.3,0) {};
      \node (in2) at (1.3,0) {};
      \node (out1) at (-2,5.3) {};     
      \node (out2) at (-1,5.3) {};
      \node (out3) at (0,5.3) {};
      \node (out4) at (1.3,5.3) {};
      \draw[style=qedge] (in1) to (in1|-corr.south);
      \draw[style=qedge] (in1|-corr.north) to (in1|-cause.south);
      \draw[style=qedge] (in2) to (in2|-re.south);
      \draw[style=cedge] (out1) to (out1|-corr.south) to[out=-90,in=180] (seed);
      \draw[style=cedge] (corr.-120) to[out=-90,in=0] (seed);
      \draw[style=cedge] (corr.120) to (corr.120|-cause.south);
      \draw[style=cedge] (out2) to[out=-90,in=90] node[left,near start] {$4M$~~} (re.135);
      \draw[style=cedge] (re.-135) to[out=-90,in=90] (-1.5,3) to[out=-90,in=180] (seed2);
      \draw[style=cedge] (cause.-150) to[out=-90,in=0] (seed2);
      \draw[style=qedge] (out3) to (cause);
      \draw[style=qedge] (out4) to (out4|-re.north);
    \end{tikzpicture}
  }}
\end{equation}
  We can now move the spider of type $2M$ along its wire to place it
  below the occurrence of $R(2M)$ and apply \Cref{lem:spotcheck} again.
\begin{equation}
\mpp{0.5}{\scalebox{0.8}{
		\begin{tikzpicture}
		\node[style=qprocess] (corr) at (0,1.2) {\color{white} blank};
		\node[style=qprocess] (cause) at (0,2.7) {~~~~$\causal$~~~~};
		\node[style=qprocess] (re) at (1,4.2) {$R(2M)$};
		\node[style=uniform] (seed) at (-1.2,.4) {};
		\node (seedlabel) at (-1.2,0) {$M$};
		\node[style=uniform] (seed2) at (-1,1.9) {};
		\node (seed2label) at (-1,1.5) {$2M$};
		\node (in1) at (.3,0) {};
		\node (in2) at (1.3,0) {};
		\node (out1) at (-2,5.3) {};     
		\node (out2) at (-1,5.3) {};
		\node (out3) at (0,5.3) {};
		\node (out4) at (1.3,5.3) {};
		\draw[style=qedge] (in1) to (in1|-corr.south);
		\draw[style=qedge] (in1|-corr.north) to (in1|-cause.south);
		\draw[style=qedge] (in2) to (in2|-re.south);
		\draw[style=cedge] (out1) to (out1|-corr.south) to[out=-90,in=180] (seed);
		\draw[style=cedge] (corr.-120) to[out=-90,in=0] (seed);
		\draw[style=cedge] (corr.120) to (corr.120|-cause.south);
		\draw[style=cedge] (out2) to[out=-90,in=90] node[left,near start] {$4M$~~} (re.135);
		\draw[style=cedge] (re.-135) to[out=-90,in=90] (-1.5,3) to[out=-90,in=180] (seed2);
		\draw[style=cedge] (cause.-150) to[out=-90,in=0] (seed2);
		\draw[style=qedge] (out3) to (cause);
		\draw[style=qedge] (out4) to (out4|-re.north);
		\end{tikzpicture}
}}
~~ = ~~  
\mpp{0.5}{\scalebox{0.8}{
		\begin{tikzpicture}
		\node[style=qprocess] (corr) at (0,1.2) {\color{white} blank};
		\node[style=qprocess] (cause) at (0,2.4) {~~~~$\causal$~~~~};
		\node[style=qprocess] (re) at (2,4.2) {$R(2M)$};
		\node[style=uniform] (seed) at (-1.2,.4) {};
		\node (seedlabel) at (-1.2,0) {$M$};
		\node[style=uniform] (seed2) at (1,3.4) {};
		\node (seed2label) at (1,3) {$2M$};
		\node (in1) at (.3,0) {};
		\node (in2) at (2.3,0) {};
		\node (out1) at (-2,5.3) {};     
		\node (out2) at (-1,5.3) {};
		\node (out3) at (0,5.3) {};
		\node (out4) at (2.3,5.3) {};
		\draw[style=qedge] (in1) to (in1|-corr.south);
		\draw[style=qedge] (in1|-corr.north) to (in1|-cause.south);
		\draw[style=qedge] (in2) to (in2|-re.south);
		\draw[style=cedge] (out1) to (out1|-corr.south) to[out=-90,in=180] (seed);
		\draw[style=cedge] (corr.-120) to[out=-90,in=0] (seed);
		\draw[style=cedge] (corr.120) to (corr.120|-cause.south);
		\draw[style=cedge] (out2) to[out=-90,in=90] node[left,near start] {$4M~$~~} (re.135);
		\draw[style=cedge] (cause.-150) to[out=-90,in=0] (-.9,1.6) to[out=180,in=-90] (-1.5,2.5) to[out=90,in=180] (seed2);
		\draw[style=cedge] (re.-135) to[out=-90,in=0] (seed2);
		\draw[style=qedge] (out3) to (cause);
		\draw[style=qedge] (out4) to (out4|-re.north);
		\end{tikzpicture}
}}
~~ \subseteq_{\epsilon ( 2 M )} ~~
\mpp{0.5}{\scalebox{0.8}{
		\begin{tikzpicture}
		\node[style=qprocess] (corr) at (0,1.2) {\color{white} blank};
		\node[style=qprocess] (cause) at (0,2.4) {~~~~$\causal$~~~~};
		\node[style=uniform] (seed) at (-1.2,.4) {};
		\node (seedlabel) at (-1.2,0) {$M$};
		\node[style=uniform] (seed2) at (1,3.4) {};
		\node (seed2label) at (1,3) {$2M$};
		\node[style=qprocess] (corr2) at (2,4.2) {\color{white} blank};
		\node[style=qprocess] (cause2) at (1.8,5.4) {~~~~$\causal$~~~~};
		\node[style=uniform] (seed3) at (.8,4.6) {};
		\node (seed3label) at (.8,4.2) {$4M$};
		\node (in1) at (.3,0) {};
		\node (in2) at (2.3,0) {};
		\node (out1) at (-2,6.5) {};     
		\node (out2) at (-1,6.5) {};
		\node (out3) at (0,6.5) {};
		\node (out4) at (1.8,6.5) {};
		\draw[style=qedge] (in1) to (in1|-corr.south);
		\draw[style=qedge] (in1|-corr.north) to (in1|-cause.south);
		\draw[style=qedge] (in2) to (in2|-corr2.south);
		\draw[style=qedge] (in2|-corr2.north) to (in2|-cause2.south);
		\draw[style=cedge] (out1) to (out1|-corr.south) to[out=-90,in=180] (seed);
		\draw[style=cedge] (corr.-120) to[out=-90,in=0] (seed);
		\draw[style=cedge] (corr.120) to (corr.120|-cause.south);
		\draw[style=cedge] (out2) to[out=-90,in=180] (seed3);
		\draw[style=cedge] (cause2.-135) to[out=-90,in=0] (seed3);
		\draw[style=cedge] (cause.-150) to[out=-90,in=0] (-.9,1.6) to[out=180,in=-90] (-1.5,2.5) to[out=90,in=180] (seed2);
		\draw[style=cedge] (corr2.-135) to[out=-90,in=0] (seed2);
		\draw[style=cedge] (corr2.135) to (corr2.135|-cause2.south);
		\draw[style=qedge] (out3) to (cause);
		\draw[style=qedge] (out4) to (cause2);
		\end{tikzpicture}
}}
\end{equation}
In the rightmost diagram above, the three lower boxes form a
stochastic process, and this completes the proof.
\end{proof}

\begin{lemma}
  \label{lem:inductionk}
  For all integers $N, k \geq 1$, we have
\begin{equation}
\label{eq:lem2}
\mpp{0.5}{
	\begin{tikzpicture}
	\def\outY{1}
	\def\inY{-1}
	\def\lineA{-2.1}
	\def\lineB{-.7}
	\def\lineC{0}
	\def\lineD{.7}
	\node[style=qprocess] (scp) at (0,0) {~~~$S_k(N)$~~~};
	\node[style=uniform] (seed) at ({((\lineA+\lineB)/2)},-1) {};
	\node at ({((\lineA+\lineB)/2)},-1.4) {$N$};
	\node (in1) at (\lineC,\inY) {};
	\node (in2) at (\lineD,\inY) {};
	\node (out1) at (\lineA,\outY) {};     
	\node (out2) at (\lineB,\outY) {};
	\node (out3) at (\lineC,\outY) {};
	\node (out4) at (\lineD,\outY) {};
	\draw[style=qedge] (in1) to (in1|-scp.south);
	\draw[style=qedge] (in2) to (in2|-scp.south);
	\draw[style=cedge] (out1) to (out1|-scp.south) to[out=-90,in=180] (seed);
	\draw[style=cedge] (out2|-scp.south) to[out=-90,in=0] (seed);
	\draw[style=cedge] (out2) to (out2|-scp.north);
	\draw[style=qedge] (out3) to (out3|-scp.north);
	\draw[style=qedge] (out4) to (out4|-scp.north);
	\end{tikzpicture}
}
~~ \subseteq_\gamma ~~
\mpp{0.5}{
	\begin{tikzpicture}
	\node[style=qprocess] (corr) at (.2,-.8) {~~~\color{white} blank~~~};
	\node[style=qprocess] (cause) at (.2,.8) {~~~~$\causal$~~~~};
	\node[style=uniform] (seed) at (-1.5,-1.8) {};
	\node (seedlabel) at (-1.5,-2.2) {$N$};
	\node[style=uniform] (seed2) at (-1.2,0) {};
	\node (seed2label) at (-1.2,-.4) {$4^k N$};
	\def\outY{1.5}
	\def\inY{-2}
	\def\lineA{-2.5}
	\def\lineB{-1.8}
	\def\lineC{.2}
	\def\lineD{.7}
	\node (in1) at (\lineC,\inY) {};
	\node (in2) at (\lineD,\inY) {};
	\node (out1) at (\lineA,\outY) {};     
	\node (out2) at (\lineB,\outY) {};
	\node (out3) at (-1.2,\outY) {};
	\node (out4) at (\lineC,\outY) {};
	\draw[style=cedge] (out1) to (out1|-corr.south) to[out=-90,in=180] (seed);
	\draw[style=cedge] (corr.-145) to[out=-90,in=0] (seed);
	\draw[style=cedge] (out2) to (out2|-cause.south) to[out=-90,in=180] (seed2);
	\draw[style=qedge] (in1) to (in1|-corr.south);
	\draw[style=cedge] (cause.-150) to[out=-90,in=0] (seed2);
	\draw[style=qedge] (in1) to (in1|-corr.south);
	\draw[style=cedge] (in1|-corr.north) to (in1|-cause.south);
	\draw[style=qedge] (out4) to (out4|-cause.north);
	\draw[style=qedge] (in2) to (in2|-corr.south);
	\draw[style=qedge] (in2|-corr.north) to (in2|-cause.south);
	\draw[style=qedge] (corr.145) to[out=90,in=-90] (out3);
	\end{tikzpicture}
}
\end{equation}
where $\gamma = \gamma ( N, k) =\sum_{i=0}^{2k-1} \epsilon ( 2^ i N)$.
\end{lemma}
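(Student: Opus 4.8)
The plan is to prove the statement by induction on $k$, using \Cref{lem:sm} as the engine that peels off one spot-check layer at a time. For the base case $k=1$ note that $S_1(N) = S(N)$ by \Cref{def:sn}, so the claim is exactly \Cref{lem:sm} with $M = N$: the error is $\epsilon'(N) = \epsilon(N) + \epsilon(2N) = \sum_{i=0}^{1}\epsilon(2^i N) = \gamma(N,1)$, and the right-hand side of \Cref{lem:sm} already has the required shape (a fresh uniform seed of type $N$, a fresh uniform state of type $4N = 4^1 N$, and a causal process acting on the devices).

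For the inductive step I would write $S_{k+1}(N) = S(4^k N) \circ S_k(N)$ and feed the uniform seed of type $N$ into the bottom block $S_k(N)$. First I apply the induction hypothesis to the sub-diagram consisting of $S_k(N)$ together with its copied seed; since $\subseteq_\epsilon$ is preserved under composition (post-composing the untouched top layer $S(4^k N)$ on both sides), this replaces $S_k(N)$ by a fresh uniform of type $N$, a fresh uniform of type $4^k N$, and a causal correction on the devices, at the cost of $\gamma(N,k)$. The key point is that the fresh uniform of type $4^k N$ produced here is exactly the classical register that $S(4^k N)$ receives as its seed, and it already appears in the copied form (one leg into the device correction, one leg into $S(4^k N)$) demanded by the left-hand side of \Cref{lem:sm}. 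I then apply \Cref{lem:sm} with $M = 4^k N$ to this top block, incurring the additional error $\epsilon'(4^k N) = \epsilon(4^k N) + \epsilon(2 \cdot 4^k N)$.

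Writing $4^k N = 2^{2k} N$, the two new terms are $\epsilon(2^{2k}N) + \epsilon(2^{2k+1}N)$, so the triangle inequality for $\subseteq_\epsilon$ gives total error $\gamma(N,k) + \epsilon'(4^k N) = \sum_{i=0}^{2k-1}\epsilon(2^i N) + \epsilon(2^{2k}N) + \epsilon(2^{2k+1}N) = \sum_{i=0}^{2(k+1)-1}\epsilon(2^i N) = \gamma(N,k+1)$, as required. It then remains to recognize that the diagram obtained after the two substitutions has the shape of the right-hand side of \eqref{eq:lem2} for $k+1$. After both applications I am left with a fresh uniform of type $N$, an intermediate fresh uniform of type $4^k N$ that is now entirely internal (one copy feeds the device correction from the hypothesis, the other feeds the new stochastic block), the two stochastic correction boxes together with the causal box coming from the hypothesis, and finally the topmost causal box with its fresh uniform seed of type $4^{k+1}N$.

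Collecting all the lower boxes and the internal uniform into a single process---which is stochastic, since composites of stochastic (in particular causal) processes are stochastic---and keeping the topmost causal box separate yields exactly the stochastic-below-causal form on the right of \eqref{eq:lem2}, with classical outputs the fresh uniforms of types $N$ and $4^{k+1}N$. I expect the main obstacle to be precisely this last bookkeeping: verifying that the seed of $S(4^k N)$ produced by the induction hypothesis is already in the copied, device-coupled configuration matching the hypotheses of \Cref{lem:sm}, and that the accumulated correction boxes and the internal uniform of type $4^k N$ genuinely collapse---using the absorption identity \eqref{eq:spiders-discard} and causality of composites---into the single stochastic box of the claimed normal form, rather than leaving a spurious external output of type $4^k N$.
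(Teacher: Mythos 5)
Your proposal is correct and takes essentially the same route as the paper's own proof: induction on $k$, expanding $S_k(N)$ as $S(4^{k-1}N)$ composed on top of $S_{k-1}(N)$, applying the induction hypothesis to the bottom block and then \Cref{lem:sm} to the top block, with identical error bookkeeping $\gamma(N,k-1) + \epsilon(4^{k-1}N) + \epsilon(2\cdot 4^{k-1}N) = \gamma(N,k)$. Your closing observation---that the accumulated lower boxes together with the now-internal uniform of type $4^{k-1}N$ form a single stochastic process while the topmost causal box stays separate---is precisely how the paper concludes as well.
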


\begin{proof}
  By induction on $k$. The case $k=1$ follows from \Cref{lem:sm} by
  setting $M=N$. So suppose that $k>1$. Then we can expand the
  left-hand side of \Cref{eq:lem2} and apply the induction hypothesis
  to obtain
\begin{equation}
\mpp{0.5}{\scalebox{.8}{
		\begin{tikzpicture}
		\def\outY{2.5}
		\def\inY{-1}
		\def\lineA{-2.1}
		\def\lineB{-.7}
		\def\lineC{0}
		\def\lineD{.7}
		\node[style=qprocess] (scp) at (0,0) {~~~$S_{k-1}(N)$~~~};
		\node[style=qprocess] (scp2) at (0,1.5) {~~~$S(4^{k-1}N)$~~~};
		\node[style=uniform] (seed) at ({((\lineA+\lineB)/2)},-1) {};
		\node at ({((\lineA+\lineB)/2)},-1.4) {$N$};
		\node (in1) at (\lineC,\inY) {};
		\node (in2) at (\lineD,\inY) {};
		\node (out1) at (\lineA,\outY) {};     
		\node (out2) at (\lineB,\outY) {};
		\node (out3) at (\lineC,\outY) {};
		\node (out4) at (\lineD,\outY) {};
		\draw[style=cedge] (out1) to (out1|-scp.south) to[out=-90,in=180] (seed);
		\draw[style=cedge] (out2|-scp.south) to[out=-90,in=0] (seed);
		\draw[style=cedge] (out2) to (out2|-scp2.north);
		\draw[style=cedge] (out2|-scp.north) to (out2|-scp2.south);
		\draw[style=qedge] (in1) to (in1|-scp.south);
		\draw[style=qedge] (in1|-scp.north) to (in1|-scp2.south);
		\draw[style=qedge] (in2) to (in2|-scp.south);
		\draw[style=qedge] (in2|-scp.north) to (in2|-scp2.south);
		\draw[style=qedge] (out3) to (out3|-scp2.north);
		\draw[style=qedge] (out4) to (out4|-scp2.north);
		\end{tikzpicture}
}}
~~ \subseteq_\alpha ~~
\mpp{0.5}{\scalebox{.8}{
		\begin{tikzpicture}
		\node[style=qprocess] (corr) at (.2,-.8) {~~~\color{white} blank~~~};
		\node[style=qprocess] (cause) at (.2,.8) {~~~~$\causal$~~~~};
		\node[style=qprocess] (scp) at (-1.2,2.2) {~~~$S(4^{k-1}N)$~~~};
		\node[style=uniform] (seed) at (-1.5,-1.8) {};
		\node (seedlabel) at (-1.5,-2.2) {$N$};
		\node[style=uniform] (seed2) at (-1.2,0) {};
		\node (seed2label) at (-1.2,-.4) {$4^{k-1} N$};
		\def\outY{3.5}
		\def\inY{-2}
		\def\lineA{-2.8}
		\def\lineB{-2}
		\def\lineC{-.4}
		\def\lineD{.7}
		\node (in1) at (.2,\inY) {};
		\node (in2) at (\lineD,\inY) {};
		\node (out1) at (\lineA,\outY) {};     
		\node (out2) at (\lineB,\outY) {};
		\node (out3) at (-1.2,\outY) {};
		\node (out4) at (-.4,\outY) {};
		\draw[style=cedge] (out1) to (out1|-corr.south) to[out=-90,in=180] (seed);
		\draw[style=cedge] (corr.-145) to[out=-90,in=0] (seed);
		\draw[style=cedge] (out2) to (out2|-scp.north);
		\draw[style=cedge] (out2|-scp.south) to (out2|-cause.south) to[out=-90,in=180] (seed2);
		\draw[style=qedge] (in1) to (in1|-corr.south);
		\draw[style=cedge] (cause.-150) to[out=-90,in=0] (seed2);
		\draw[style=qedge] (in1) to (in1|-corr.south);
		\draw[style=cedge] (in1|-corr.north) to (in1|-cause.south);
		\draw[style=qedge] (corr.145) to[out=90,in=-90] (scp.south);
		\draw[style=qedge] (scp) to[out=90,in=-90] (out3);
		\draw[style=qedge] (out4) to (out4|-scp.north);
		\draw[style=qedge] (out4|-scp.south) to[out=-90,in=90] (cause.north);
		\draw[style=qedge] (in2) to (in2|-corr.south);
		\draw[style=qedge] (in2|-corr.north) to (in2|-cause.south);
		\end{tikzpicture}
}}
~~ \subseteq_\beta ~~
\mpp{0.5}{\scalebox{.8}{
		\begin{tikzpicture}
		\node[style=qprocess] (corr) at (.2,-.8) {~~~\color{white} blank~~~};
		\node[style=qprocess] (cause) at (.2,.8) {~~~~$\causal$~~~~};
		\node[style=uniform] (seed) at (-1.8,-1.8) {};
		\node (seedlabel) at (-1.8,-2.2) {$N$};
		\node[style=uniform] (seed2) at (-1.2,0) {};
		\node (seed2label) at (-1.2,-.4) {$4^{k-1} N$};
		\def\outY{4.8}
		\def\inY{-2}
		\def\lineA{-3.5}
		\def\lineB{-1.7}
		\def\lineC{-.4}
		\def\lineD{.7}
		\node (in1) at (.2,\inY) {};
		\node (in2) at (\lineD,\inY) {};
		\node (out1) at (\lineA,\outY) {};     
		\node (out2) at (\lineB,\outY) {};
		\node (out3) at (-2.5,\outY) {};
		\node (out4) at (-.7,\outY) {};
		\node[style=qprocess] (corr2) at (-1.2,2.2) {~~~\color{white} blank~~~};
		\node[style=qprocess] (cause2) at (-1.2,3.8) {~~~~$\causal$~~~~};
		\node[style=uniform] (seed3) at (-2.4,3) {};
		\node (seed3label) at (-2.4,2.6) {$4^k N$};
		\def\lineBB{-3}
		\node (out22) at (\lineBB,\outY) {};
		\node (out33) at (-1.2,\outY) {};
		\draw[style=cedge] (out1) to (out1|-corr.south) to[out=-90,in=180] (seed);
		\draw[style=cedge] (corr.-145) to[out=-90,in=0] (seed);
		\draw[style=cedge] (out2|-corr2.south) to (out2|-cause.south) to[out=-90,in=180] (seed2);
		\draw[style=cedge] (out22) to (out22|-cause2.south) to[out=-90,in=180] (seed3);
		\draw[style=cedge] (out2|-cause2.south) to[out=-90,in=0] (seed3);
		\draw[style=qedge] (in1) to (in1|-corr.south);
		\draw[style=cedge] (cause.-150) to[out=-90,in=0] (seed2);
		\draw[style=qedge] (in1) to (in1|-corr.south);
		\draw[style=cedge] (in1|-corr.north) to (in1|-cause.south);
		\draw[style=qedge] (corr.145) to[out=90,in=-90] (corr2.south);
		\draw[style=qedge] (corr.145) to[out=90,in=-90] (corr2.south);
		\draw[style=cedge] (corr2.north) to[out=90,in=-90] (cause2.south);
		\draw[style=qedge] (out33) to (cause2.north);
		\draw[style=qedge] (out4|-corr2.south) to[out=-90,in=90] (cause.north);
		\draw[style=qedge] (out4|-corr2.north) to (out4|-cause2.south);
		\draw[style=qedge] (out2|-corr2.north) to[out=90,in=-90] (out3);
		\draw[style=qedge] (in2) to (in2|-corr.south);
		\draw[style=qedge] (in2|-corr.north) to (in2|-cause.south);
		\end{tikzpicture}
}}
\end{equation}
where $\alpha = \epsilon ( M ) + \epsilon ( 2 M ) + \ldots + \epsilon
( 2^{2k-3} M) $ and $\beta = \epsilon ( 2^{2k-2} M ) + \epsilon (
2^{2k-1} M )$.  The desired result follows.
\end{proof}

\begin{theorem}[Soundness for Unbounded Expansion]
  \label{URE}
  There is a function $\lambda = \lambda ( N ) \in 2^{-\Omega ( N ) }$
  such that for any $N, k \geq 1$,
\begin{equation}
\label{eq:unbounded}
\mpp{0.5}{
	\begin{tikzpicture}
	\def\outY{1.5}
	\def\lineA{-.8}
	\def\lineB{0}
	\def\lineC{.8}
	\def\lineD{1.6}
	\node (out1) at (\lineA,\outY) {};
	\node (out2) at (\lineD,\outY) {};
	\node[style=qstate] (enemy) at (.5,-1.5) {\rule{0pt}{1ex}\rule{8ex}{0pt}};
	\node[style=qprocess] (scp) at (0,0) {~~~~~$S_k(N)$~~~~~};
	\node[style=uniform] (seed1) at (\lineA,-1.5) {};
	\node at (\lineA,-1.9) {$N$};  
	\node[style=terminal] (term1) at (\lineB,1) {};
	\node[style=terminal] (term2) at (\lineC,1) {};
	\draw[style=cedge] (seed1) to (seed1|-scp.south);
	\draw[style=qedge] (term1|-enemy.north) to (term1|-scp.south);
	\draw[style=qedge] (term2|-enemy.north) to (term2|-scp.south);
	\draw[style=qedge] (out2|-enemy.north) to (out2);
	\draw[style=qedge] (term1) to (term1|-scp.north);
	\draw[style=qedge] (term2) to (term2|-scp.north);
	\draw[style=cedge] (out1) to (out1|-scp.north);
	\end{tikzpicture}
}
~~ \subseteq_{\lambda} ~~
\mpp{0.5}{
	\begin{tikzpicture}
	\def\outY{.5}
	\def\lineA{-.8}
	\def\lineB{0}
	\def\lineC{.8}
	\def\lineD{1.6}
	\node (out1) at (\lineA,\outY) {};
	\node (out2) at (\lineD,\outY) {};
	\node[style=scalar] (sc) at (-1.5,-.6) {};
	\node[style=qstate] (enemy) at (.5,-1.5) {\rule{0pt}{1ex}\rule{8ex}{0pt}};
	\node[style=uniform] (seed1) at (\lineA,-1.5) {};
	\node at (\lineA,-1.9) {$4^k N$};  
	\node (end) at (\lineC,\outY) {};
	\draw[style=cedge] (seed1) to (out1);
	\draw[style=qedge] (end|-enemy.north) to (end);
	\end{tikzpicture}
}
\end{equation}
\end{theorem}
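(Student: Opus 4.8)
The plan is to deduce \Cref{URE} from \Cref{lem:inductionk} by composing with the adversary's state and then terminating the wires that do not appear in the conclusion. The left-hand diagram of \Cref{lem:inductionk} has two free quantum input wires (the device registers) and emits a copy of the $N$-bit seed, the $4^kN$-bit classical string, and two quantum device wires. The left-hand side of \eqref{eq:unbounded} is obtained from it by three operations: instantiating the two quantum inputs with the (unlabeled) adversary state $\Gamma$ while retaining $\Gamma$'s extra ``side-information'' wire; terminating the two quantum output wires; and terminating the external copy of the seed, since the seed is consumed rather than preserved in \Cref{URE}. First I would observe that, because the relation $=_\epsilon$ of \Cref{def:distanceprocesses} already quantifies over states and effects acting on an auxiliary wire that bypasses the process, it is stable under tensoring with the identity on $\Gamma$'s side-information register; combined with the stated preservation of $\subseteq_\epsilon$ under composition, this lets me apply \Cref{lem:inductionk} verbatim after composing $\Gamma$ onto the device inputs and composing the terminations onto the outputs, at cost $\gamma(N,k)$.

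Next I would simplify the right-hand side of \Cref{lem:inductionk} under these same terminations. The key feature of that diagram is that the clean uniform state on $4^kN$ bits is copied, with one copy emitted as output and the other fed into the $\causal$ box as a conditioning input. Terminating the quantum output of the $\causal$ box and invoking causality in the form of \eqref{eq:causal} collapses that box into a termination of all of its inputs; in particular the conditioning copy of the uniform state is discarded, so by \eqref{eq:spiders-discard} the emitted copy becomes a genuinely independent uniform state on $4^kN$ bits. This is precisely the clean, adversary-independent output demanded by the right-hand side of \eqref{eq:unbounded}, and it is the step where causality does the essential work of severing the last correlation between the output and the adversary.

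With the $\causal$ box removed, every output wire of the lower (stochastic, possibly sub-normalized) box is now terminated: its two quantum outputs by the argument just given together with the direct termination, and its seed leg by \eqref{eq:spiders-discard}. Terminating all outputs of a stochastic process turns it into an effect recording the probability of not aborting; applied to the device part of $\Gamma$ together with the uniform $N$-bit seed, this yields a sub-normalized state on $\Gamma$'s side-information register. Writing that sub-normalized state as a scalar in $[0,1]$ times a normalized state matches the scalar node and the unlabeled residual adversary state on the right-hand side of \eqref{eq:unbounded}. Hence the simplified right-hand side is exactly the tensor product of the independent uniform $4^kN$-bit output with a weighted residual adversary state, as required, and the entire chain costs $\gamma(N,k)$.

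Finally I would make the error bound uniform in $k$. Since $\epsilon(M)\in 2^{-\Omega(M)}$ and $2^iN\ge (i+1)N$, the tail $\sum_{i=0}^{\infty}\epsilon(2^iN)$ converges and lies in $2^{-\Omega(N)}$, so setting $\lambda(N):=\sum_{i=0}^{\infty}\epsilon(2^iN)$ gives $\gamma(N,k)\le\lambda(N)$ for all $k\ge 1$, completing the proof. I expect the main obstacle to be the second and third steps above: verifying diagrammatically that terminating the $\causal$ box really does make the uniform output independent of the adversary, and correctly accounting for the abort probability as the emergent scalar on the right rather than as lost trace.
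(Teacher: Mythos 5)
Your proposal is correct and takes essentially the same route as the paper's own proof: rewrite the seed using \eqref{eq:spiders-discard} so that \Cref{lem:inductionk} applies after composing in the adversary state and terminations, then use causality \eqref{eq:causal} followed by \eqref{eq:spiders-discard} to collapse the $\causal$ box and leave an independent uniform $4^kN$-bit output tensored with a subnormalized residual state on the side-information wire, and finally bound $\gamma(N,k)$ uniformly by $\lambda(N)=\sum_{i=0}^{\infty}\epsilon(2^iN)\in 2^{-\Omega(N)}$. The only cosmetic difference is directional: the paper adds a terminated branch to the left-hand side of \eqref{eq:unbounded} to match the lemma, whereas you terminate the lemma's external seed copy, which is the same manipulation read in reverse.
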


\begin{proof}
  By the property noted in \Cref{eq:spiders-discard}, we can add a
  terminated branch to the left-hand side of \Cref{eq:unbounded} so
  that we can apply \Cref{lem:inductionk}.
  \begin{equation}
\mpp{0.5}{\scalebox{.8}{
		\begin{tikzpicture}
		\def\outY{1.5}
		\def\lineZ{-2}
		\def\lineA{-.8}
		\def\lineB{0}
		\def\lineC{.8}
		\def\lineD{1.6}
		\node (out1) at (\lineA,\outY) {};
		\node (out2) at (\lineD,\outY) {};
		\node[style=qstate] (enemy) at (.5,-1.5) {\rule{0pt}{1ex}\rule{8ex}{0pt}};
		\node[style=qprocess] (scp) at (0,0) {~~~~~$S_k(N)$~~~~~};
		\node[style=uniform] (seed1) at ({(\lineA+\lineZ)/2},-1.5) {};
		\node at ({(\lineA+\lineZ)/2},-1.9) {$N$};  
		\node[style=terminal] (term1) at (\lineB,1) {};
		\node[style=terminal] (term2) at (\lineC,1) {};
		\node[style=terminal] (term3) at (\lineZ,1) {};
		\draw[style=cedge] (term3) to (term3|-scp.south) to[out=-90,in=180] (seed1);
		\draw[style=cedge] (out1|-scp.south) to[out=-90,in=0] (seed1);
		\draw[style=qedge] (term1|-enemy.north) to (term1|-scp.south);
		\draw[style=qedge] (term2|-enemy.north) to (term2|-scp.south);
		\draw[style=qedge] (out2|-enemy.north) to node[right,near end] {$X$} (out2);
		\draw[style=qedge] (term1) to (term1|-scp.north);
		\draw[style=qedge] (term2) to (term2|-scp.north);
		\draw[style=cedge] (out1) to node[left,near start] {$4^k N$}(out1|-scp.north);
		\end{tikzpicture}
}}
~~ \subseteq_{\lambda} ~~
\mpp{0.5}{\scalebox{.8}{
		\begin{tikzpicture}
		\node[style=qprocess] (corr) at (.2,-.8) {~~~\color{white} blank~~~};
		\node[style=qprocess] (cause) at (.2,.8) {~~~~$\causal$~~~~};
		\node[style=uniform] (seed) at (-1.5,-1.8) {};
		\node (seedlabel) at (-1.5,-2.2) {$N$};
		\node[style=uniform] (seed2) at (-1.2,0) {};
		\node (seed2label) at (-1.2,-.4) {$4^k N$};
		\node[style=qstate] (enemy) at (.5,-2) {\rule{0pt}{1ex}\rule{8ex}{0pt}};
		\def\outY{1.5}
		\def\inY{-2}
		\def\lineA{-2.5}
		\def\lineB{-1.8}
		\def\lineC{.2}
		\def\lineD{.7}
		\def\lineE{1.5}
		\node (in1) at (\lineC,\inY) {};
		\node (in2) at (\lineD,\inY) {};
		\node[style=terminal] (out1) at (\lineA,\outY) {};     
		\node (out2) at (\lineB,\outY+.5) {};
		\node[style=terminal] (out3) at (-1.2,\outY) {};
		\node[style=terminal] (out4) at (\lineC,\outY) {};
		\node (out5) at (\lineE,\outY+.5) {};
		\draw[style=cedge] (out1) to (out1|-corr.south) to[out=-90,in=180] (seed);
		\draw[style=cedge] (corr.-145) to[out=-90,in=0] (seed);
		\draw[style=cedge] (out2) to (out2|-cause.south) to[out=-90,in=180] (seed2);
		\draw[style=cedge] (cause.-150) to[out=-90,in=0] (seed2);
		\draw[style=qedge] (in1|-enemy.north) to (in1|-corr.south);
		\draw[style=cedge] (in1|-corr.north) to (in1|-cause.south);
		\draw[style=qedge] (out4) to (out4|-cause.north);
		\draw[style=qedge] (in2|-enemy.north) to (in2|-corr.south);
		\draw[style=qedge] (in2|-corr.north) to (in2|-cause.south);
		\draw[style=qedge] (out3) to[out=-90,in=90] (corr.145);
		\draw[style=qedge] (out5) to node[right,near start] {$X$} (out5|-enemy.north);
		\end{tikzpicture}
}}
\end{equation}

  We next apply causality (see the discussion of \Cref{eq:causal})
  followed by \Cref{eq:spiders-discard}.
\begin{equation}
\mpp{0.5}{\scalebox{.8}{
		\begin{tikzpicture}
		\node[style=qprocess] (corr) at (.2,-.8) {~~~\color{white} blank~~~};
		\node[style=qprocess] (cause) at (.2,.8) {~~~~$\causal$~~~~};
		\node[style=uniform] (seed) at (-1.5,-1.8) {};
		\node (seedlabel) at (-1.5,-2.2) {$N$};
		\node[style=uniform] (seed2) at (-1.2,0) {};
		\node (seed2label) at (-1.2,-.4) {$4^k N$};
		\node[style=qstate] (enemy) at (.5,-2) {\rule{0pt}{1ex}\rule{8ex}{0pt}};
		\def\outY{1.5}
		\def\inY{-2}
		\def\lineA{-2.5}
		\def\lineB{-1.8}
		\def\lineC{.2}
		\def\lineD{.7}
		\def\lineE{1.5}
		\node (in1) at (\lineC,\inY) {};
		\node (in2) at (\lineD,\inY) {};
		\node[style=terminal] (out1) at (\lineA,\outY) {};     
		\node (out2) at (\lineB,\outY+.5) {};
		\node[style=terminal] (out3) at (-1.2,\outY) {};
		\node[style=terminal] (out4) at (\lineC,\outY) {};
		\node (out5) at (\lineE,\outY+.5) {};
		\draw[style=cedge] (out1) to (out1|-corr.south) to[out=-90,in=180] (seed);
		\draw[style=cedge] (corr.-145) to[out=-90,in=0] (seed);
		\draw[style=cedge] (out2) to (out2|-cause.south) to[out=-90,in=180] (seed2);
		\draw[style=cedge] (cause.-150) to[out=-90,in=0] (seed2);
		\draw[style=qedge] (in1|-enemy.north) to (in1|-corr.south);
		\draw[style=cedge] (in1|-corr.north) to (in1|-cause.south);
		\draw[style=qedge] (out4) to (out4|-cause.north);
		\draw[style=qedge] (in2|-enemy.north) to (in2|-corr.south);
		\draw[style=qedge] (in2|-corr.north) to (in2|-cause.south);
		\draw[style=qedge] (out3) to[out=-90,in=90] (corr.145);
		\draw[style=qedge] (out5) to node[right,near start] {$X$} (out5|-enemy.north);
		\end{tikzpicture}
}}
~~ = ~~
\mpp{0.5}{\scalebox{.8}{
		\begin{tikzpicture}
		\node[style=qprocess] (corr) at (.2,-.8) {~~~\color{white} blank~~~};
		\node (cause) at (.2,.8) {};
		\node[style=uniform] (seed) at (-1.5,-1.8) {};
		\node (seedlabel) at (-1.5,-2.2) {$N$};
		\node[style=uniform] (seed2) at (-1.2,0) {};
		\node (seed2label) at (-1.2,-.4) {$4^k N$};
		\node[style=qstate] (enemy) at (.5,-2) {\rule{0pt}{1ex}\rule{8ex}{0pt}};
		\def\outY{1.5}
		\def\inY{-2}
		\def\lineA{-2.5}
		\def\lineB{-1.8}
		\def\lineC{.2}
		\def\lineD{.7}
		\def\lineE{1.5}
		\node (in1) at (\lineC,\inY) {};
		\node (in2) at (\lineD,\inY) {};
		\node[style=terminal] (out1) at (\lineA,\outY) {};     
		\node (out2) at (\lineB,\outY+.5) {};
		\node[style=terminal] (out3) at (-1.2,\outY) {};
		\node[style=terminal] (out4a) at (\lineC-.6,\outY-.5) {};
		\node[style=terminal] (out4b) at (\lineC,\outY-.5) {};
		\node[style=terminal] (out4c) at (\lineC+.6,\outY-.5) {};
		\node (out5) at (\lineE,\outY+.5) {};
		\draw[style=cedge] (out1) to (out1|-corr.south) to[out=-90,in=180] (seed);
		\draw[style=cedge] (corr.-145) to[out=-90,in=0] (seed);
		\draw[style=cedge] (out2) to (out2|-cause.south) to[out=-90,in=180] (seed2);
		\draw[style=cedge] (out4a) to (out4a|-cause.south) to[out=-90,in=0] (seed2);
		\draw[style=qedge] (in1|-enemy.north) to (in1|-corr.south);
		\draw[style=cedge] (in1|-corr.north) to (out4b);
		\draw[style=qedge] (out4c|-enemy.north) to (out4c|-corr.south);
		\draw[style=qedge] (out4c) to (out4c|-corr.north);
		\draw[style=qedge] (out3) to[out=-90,in=90] (corr.145);
		\draw[style=qedge] (out5) to node[right,near start] {$X$} (out5|-enemy.north);
		\end{tikzpicture}
}}
~~ = ~~
\mpp{0.5}{\scalebox{.8}{
		\begin{tikzpicture}
		\def\outY{.2}
		\def\inY{-2}
		\def\lineA{-2.5}
		\def\lineB{-1.5}
		\def\lineC{.2}
		\def\lineD{.7}
		\def\lineE{1.5}
		\node[style=qprocess] (corr) at (.2,-.8) {~~~\color{white} blank~~~};
		\node (cause) at (.2,.8) {};
		\node[style=uniform] (seed) at (\lineC-.6,-1.8) {};
		\node (seedlabel) at (\lineC-.6,-2.2) {$N$};
		\node[style=uniform] (seed2) at (\lineB,-1.8) {};
		\node (seed2label) at (\lineB,-2.2) {$4^k N$};
		\node[style=qstate] (enemy) at (.7,-2) {\rule{0pt}{1ex}\rule{8ex}{0pt}};
		\node (in1) at (\lineC,\inY) {};
		\node (in2) at (\lineD,\inY) {};
		\node (out2) at (\lineB,\outY+.5) {};
		\node[style=terminal] (out4a) at (\lineC-.6,\outY-.2) {};
		\node[style=terminal] (out4b) at (\lineC,\outY-.2) {};
		\node[style=terminal] (out4c) at (\lineC+.6,\outY-.2) {};
		\node (out5) at (\lineE,\outY+.5) {};
		\draw[style=cedge] (seed) to (seed|-corr.south);
		\draw[style=cedge] (out2) to (seed2);
		\draw[style=qedge] (out4a) to (out4a|-corr.north);
		\draw[style=qedge] (in1|-enemy.north) to (in1|-corr.south);
		\draw[style=cedge] (in1|-corr.north) to (out4b);
		\draw[style=qedge] (out4c|-enemy.north) to (out4c|-corr.south);
		\draw[style=qedge] (out4c) to (out4c|-corr.north);
		\draw[style=qedge] (out5) to node[right,near start] {$X$} (out5|-enemy.north);
		\end{tikzpicture}
}}
\end{equation}
The set of processes described by the rightmost diagram consist of a
uniform state of dimension $4^k N$ together with a subnormalized state
of $X$, as desired.

Let $\lambda ( N ) = \sum_{i=0}^\infty \epsilon ( 2^i N )$, which
upper bounds the error term $\gamma ( N, k )$.  Note that for any
nonnegative function $f$ on the set $\{ 0, 1, 2, \ldots, \}$
\begin{eqnarray}
\label{implication}
f \in 2^{-\Omega (N ) } & \Longrightarrow & \sum_{i=0}^{\infty} f (
2^i N ) \in 2^{-\Omega ( N )}.
\end{eqnarray}
Thus $\lambda \in 2^{-\Omega ( N ) }$.  This completes the proof.
\end{proof}  

Theorem~\ref{URE} asserts soundness for $S_k(N)$.  Completeness for
$S_k(N)$ follows from Lemma~\ref{complemma} and the implication
(\ref{implication}) stated above.

\subsection{Formalization}
\label{formalsubsec}

In addition to their intuitive appeal, the graphical structures of
categorical quantum mechanics are amenable to computer
formalization. In the long term, this will be critically important for
managing the complexity of medium- and large-scale security proofs. In
this respect, computers can play a number of roles including
validation and verification, copying and reuse, and proof search and
discovery.

As part of our investigations, we have produced a computer-verified
skeleton version of the main sequence of steps from our proof, for the
case $k = 2$. We used the Globular proof assistant \cite{BKV16}. The
reader can find and explore the proof object at \cite{globularfile},
and a video of the diagrammatic moves involved in the proof is
available at \cite{globularvideo}. The Globular proof assistant
provides a system for creating string diagram proofs, based on the
perspective of higher-dimensional re-writing. In this project we used
the system to prototype our arguments, and found the tool quite useful
despite a few rough edges.

In Globular, one begins by declaring generators, atomic components
which can be joined together into more complex diagrams. These
generators come in several dimensions; strings in dimension 1
(classical, quantum), processes in dimension 2 (e.g., the spot-check
protocols), and equations in dimension 3 (e.g., the spot-check lemma,
the causality principle). More general protocols are complex
two-dimensional diagrams. A proof becomes a three-dimensional diagram,
though we often think of it as a ``movie'' whose frames are iterated
slices of the 3-D diagram, tracing through the diagrammatic moves of
the proof.

We used Globular to prototype the proof of Theorem \ref{URE}. In
particular, we used it to prototype and validate the general strategy
of our proof in the case $k=2$. Figure \ref{globpics} shows part of
the sequence of Globular diagrams in our proof of Theorem \ref{URE}
for $k=2$; the entire sequence involved 60 steps.  (The sequence
includes four applications of \Cref{lem:spotcheck} followed by the
final application of causality.)

\begin{figure}[h]
  \begin{center}
    \begin{tabular}{ccccm{.5cm}ccm{.5cm}c}
      \includegraphics[height=3cm]{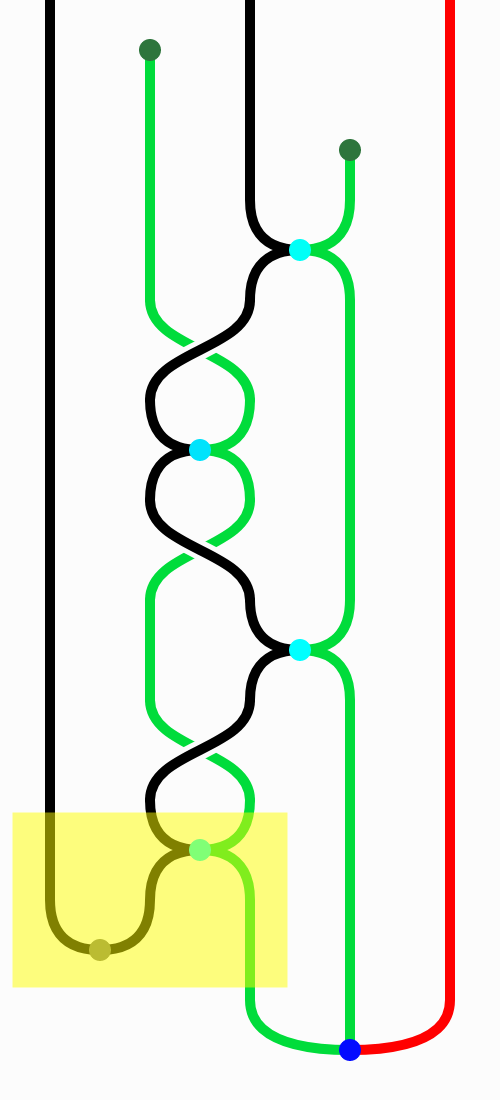}
      &
        \includegraphics[height=3cm]{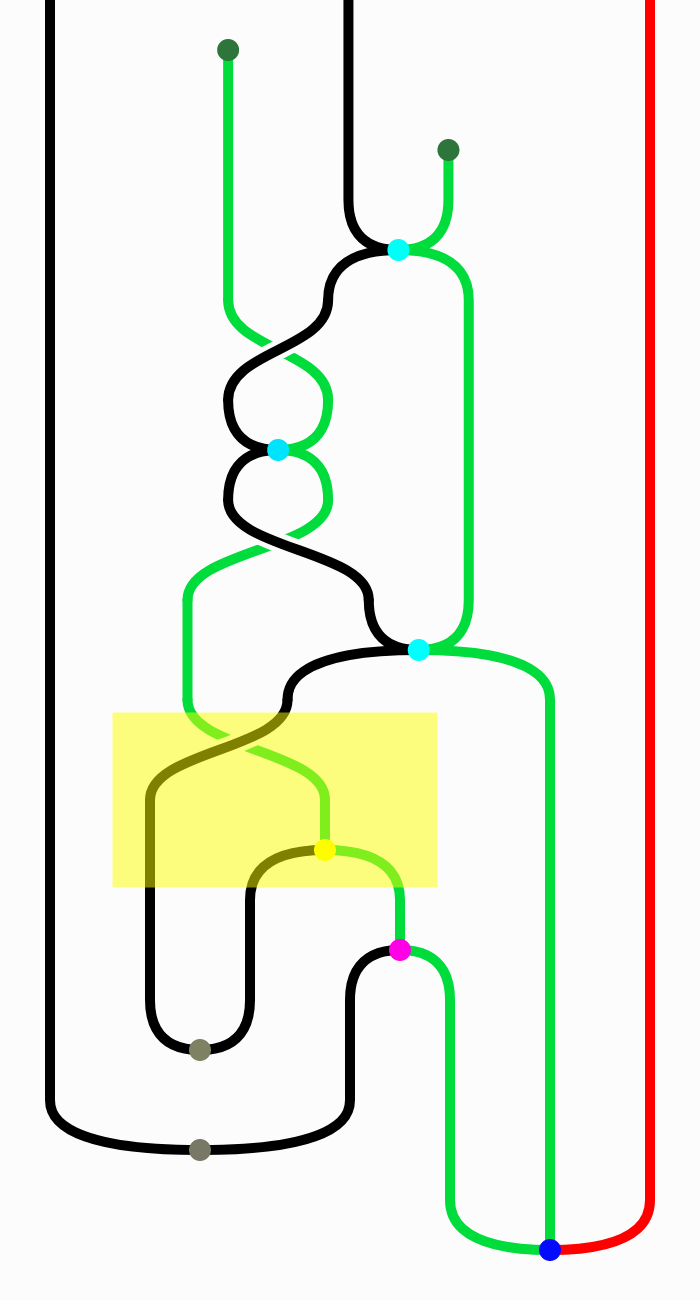}
      &
        \includegraphics[height=3cm]{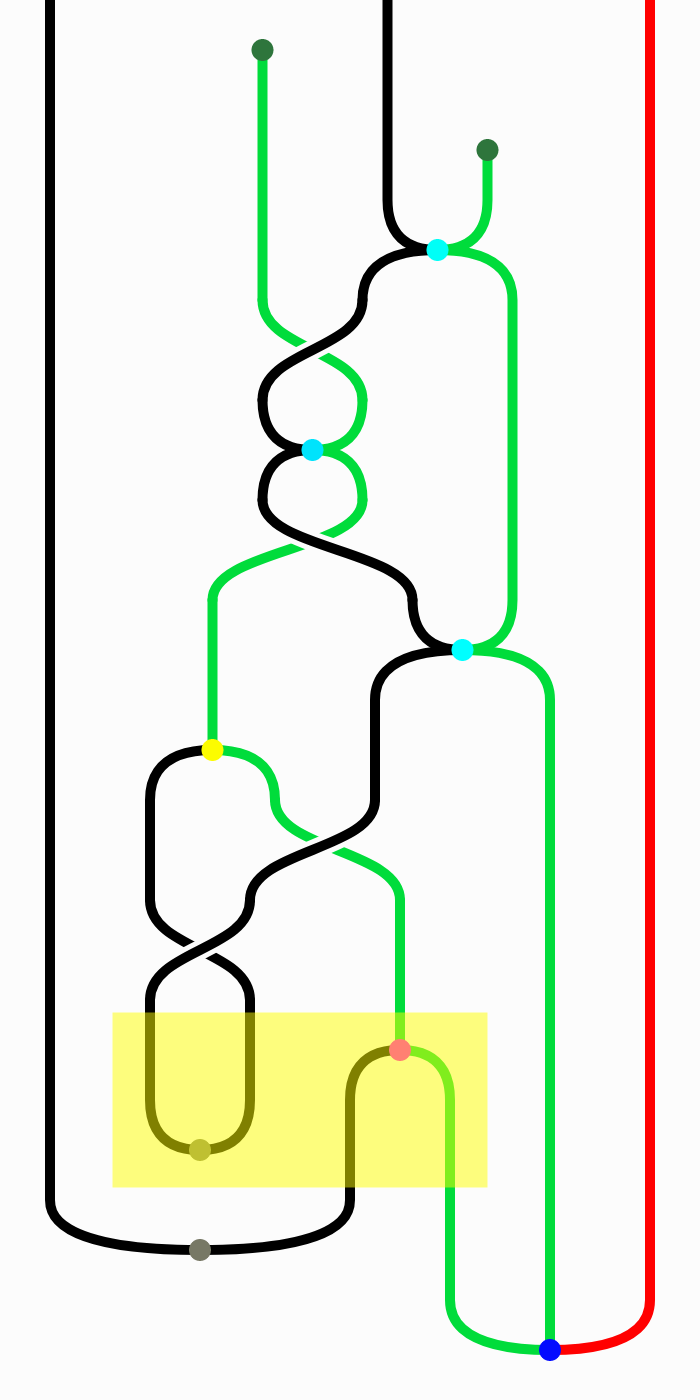}
      &
        \includegraphics[height=3cm]{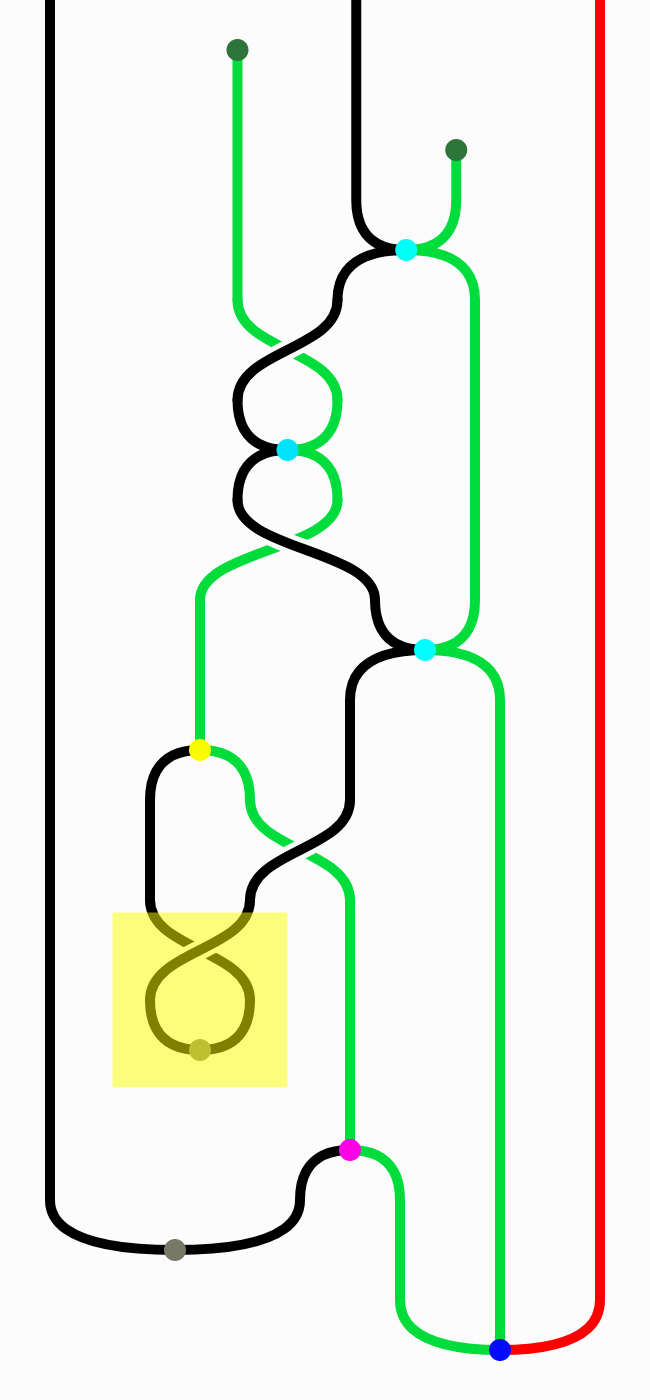}
      & ... &
        \includegraphics[height=3cm]{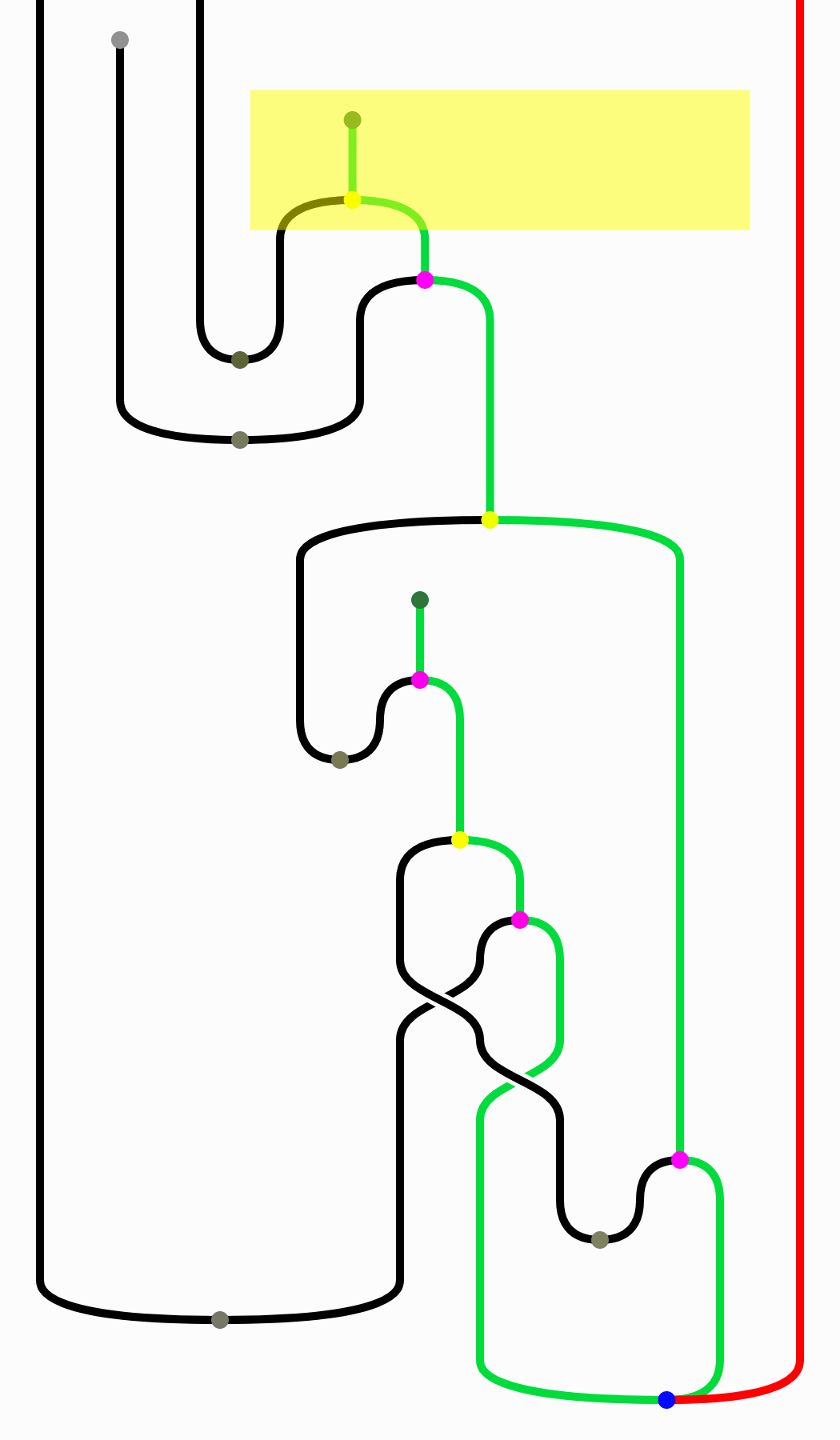}
      &
        \includegraphics[height=3cm]{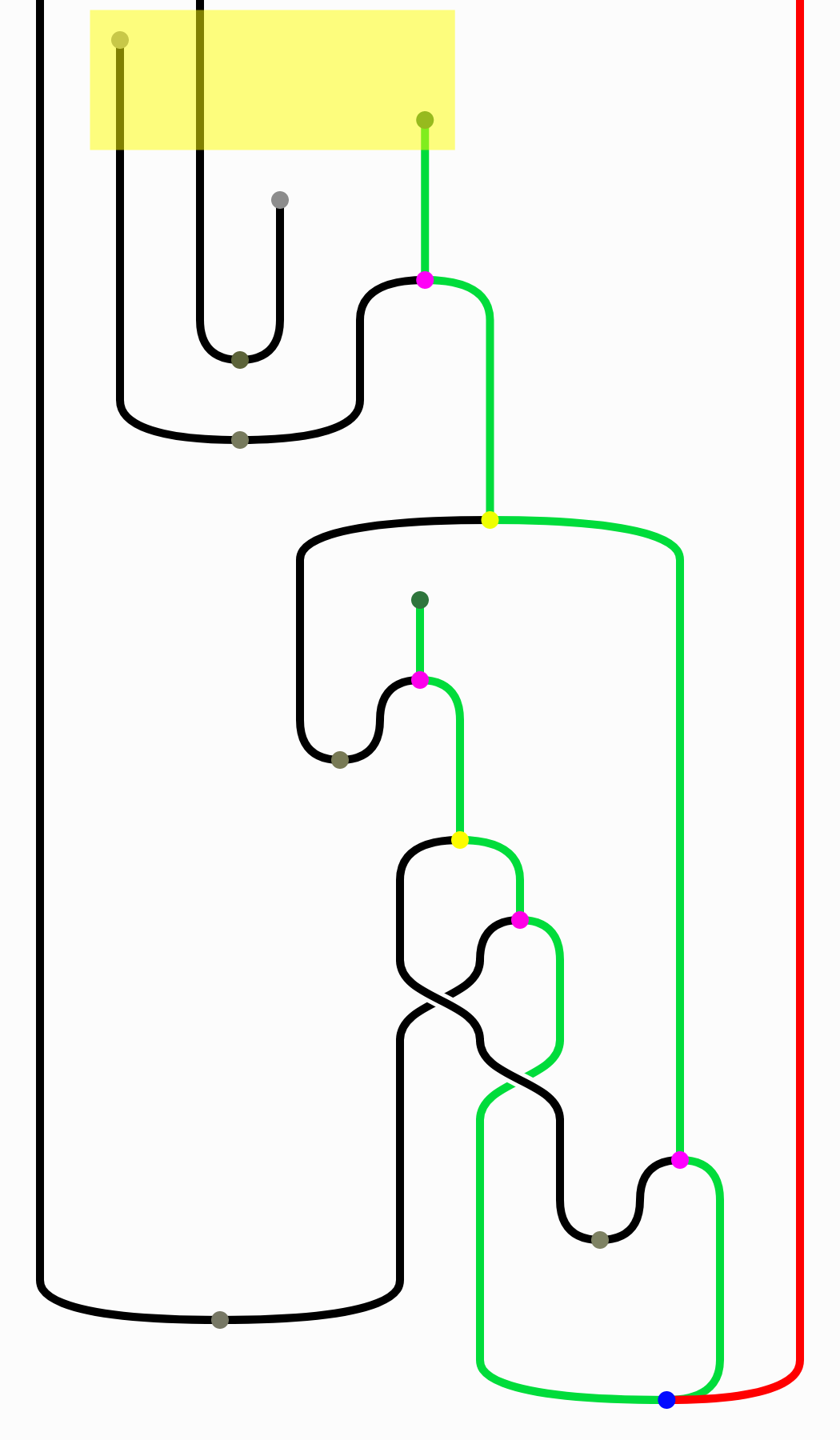}
      & ... &    
        \includegraphics[height=3cm]{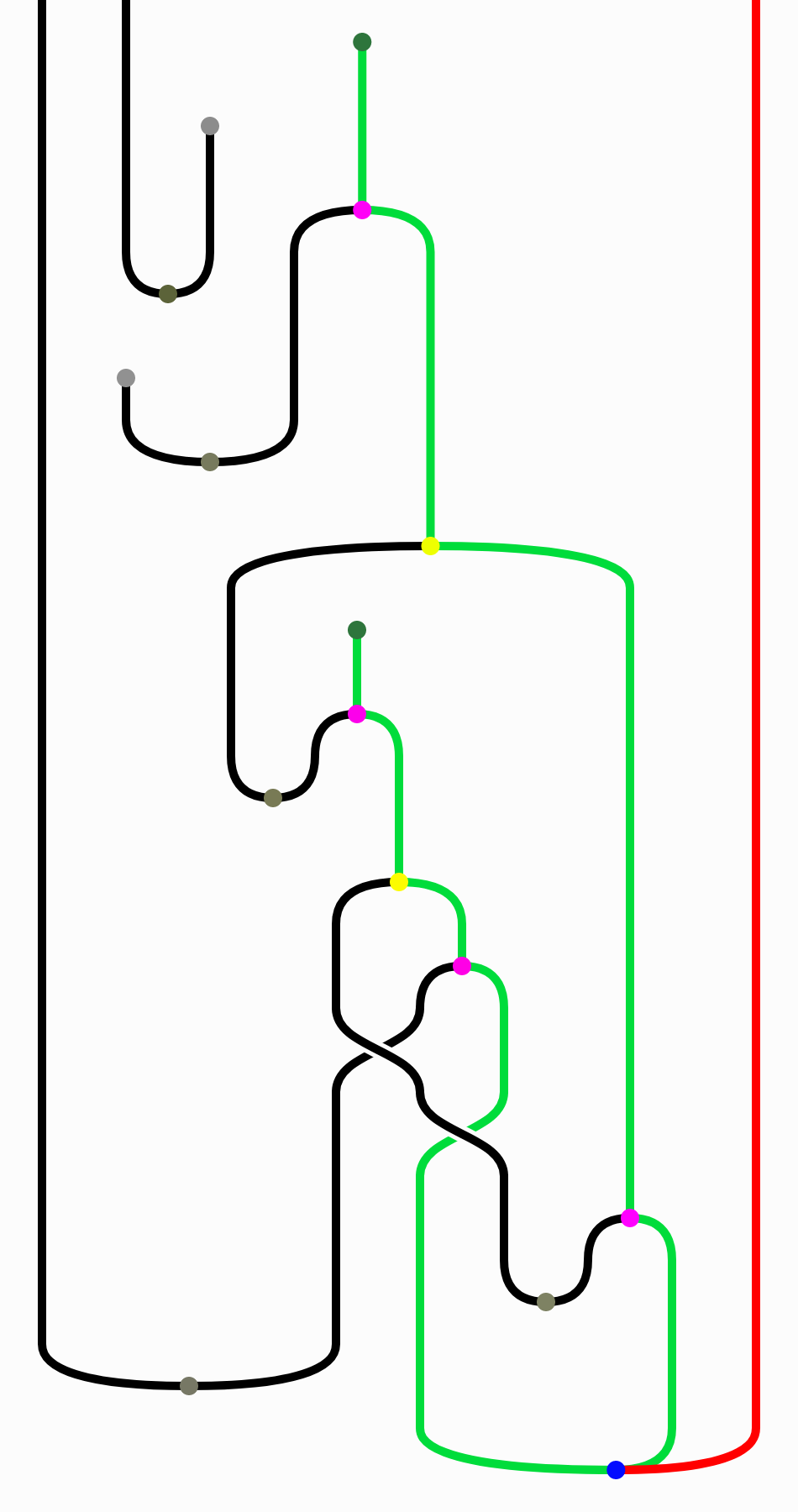}
      \\
      0 & 1 & 2 & 3 & ... & 55 & 56 & ... & 60 \\
    \end{tabular}
    \caption{Steps in a Globular proof of Theorem \ref{URE} for $k=2$}
    \label{globpics}
  \end{center}
\end{figure}

We found several significant benefits to building proofs in
Globular. As diagrams become more complex, our ability to manipulate
them with pen and paper is limited. Globular automates the management
of diagrams, allowing for easy reuse and undo.

The proof assistant also ``type-checks'' the user, admitting only
valid constructions and proof. This allows the user to explore the
space of possible diagrams and proofs without accidentally introducing
errors. This will be particularly important for learning because it
permits a focus on concepts over calculation.

More generally, formalization serves to identify gaps in our
reasoning. In this case, we first identified the graphical form of the
spot-check lemma (\Cref{lem:spotcheck}) and the final form of Theorem
\ref{URE}. By trying to prove the theorem in Globular we first
validated the back-and-forth approach described in the proof of our
theorem, but also showed that it was insufficient to yield our desired
result. This, in turn, helped to identify the role of causality in our
argument.

Overall, we found the Globular proof assistant to be quite helpful in
prototyping and managing our arguments, although some issues limit its
practical application. For example, three steps are needed to pass
from step 56 to step 60 (see above), despite the fact that the two
diagrams are more-or-less identical. This problem reifies as diagrams
become more complicated; in our proof, 19 steps of sliding operations
were needed between the second and third applications of the
spot-check protocol. Improvements to such tools, both in underlying
computation and representation and in user interface would be valuable
areas for future research.

\section{Conclusion}
\label{sec:conc}

Our graphical proof of unbounded expansion was based on two central
steps: one was the application of the Spot-Check Lemma
(Lemma~\ref{lem:spotcheck}), and the other was the principle of
\emph{causality}.  Causality is an elementary step in symbolic proofs
for quantum information, but in the case of our graphical proof it is
an important manipulation.

We have used the tools of categorical quantum mechanics to give a
streamlined proof that unbounded randomness expansion can be obtained
via the spot-checking protocol. We hope to have convinced the reader
of the usefulness and potential of graphical methods in quantum
cryptography for proof exposition.  Also, when graphical proofs are
appropriately created, they open the door to automated proof-checking.
Our experience using the Globular proof assistant can be seen as
interesting case study in the usefulness of the software and we hope
that our experience can motivate future work.

Our goal for later work is to develop a language for quantum
cryptography that allows a wide range of expositions of old results
and proofs of new results.  Some proofs (including unbounded
randomness expansion) seem easiest to understand in graphical form,
while others (such as Proposition~\ref{prop:approxprop}) may be most
accessible as algebraic proofs.  Thus, an ideal framework would allow
easy translation back and forth between algebraic and graphical
expositions.

\section*{Acknowledgements}

We are greatly indebted to David Spivak for providing some of the
original inspiration for this project and for helping us to get it
started.  CAM would also like to thank Brad Lackey for seminars at the
University of Maryland that deepened his understanding of axiomatic
quantum information. NJR is funded by the Department of Defense.  This
paper includes contributions from the U. S. National Institute of
Standards and Technology, and is not subject to copyright in the
United States.

\nocite{*} \bibliographystyle{apsrev4-1} \bibliography{graphicalcrypto}

\begin{thebibliography}{45}%
\makeatletter
\providecommand \@ifxundefined [1]{%
 \@ifx{#1\undefined}
}%
\providecommand \@ifnum [1]{%
 \ifnum #1\expandafter \@firstoftwo
 \else \expandafter \@secondoftwo
 \fi
}%
\providecommand \@ifx [1]{%
 \ifx #1\expandafter \@firstoftwo
 \else \expandafter \@secondoftwo
 \fi
}%
\providecommand \natexlab [1]{#1}%
\providecommand \enquote  [1]{``#1''}%
\providecommand \bibnamefont  [1]{#1}%
\providecommand \bibfnamefont [1]{#1}%
\providecommand \citenamefont [1]{#1}%
\providecommand \href@noop [0]{\@secondoftwo}%
\providecommand \href [0]{\begingroup \@sanitize@url \@href}%
\providecommand \@href[1]{\@@startlink{#1}\@@href}%
\providecommand \@@href[1]{\endgroup#1\@@endlink}%
\providecommand \@sanitize@url [0]{\catcode `\\12\catcode `\$12\catcode
  `\&12\catcode `\#12\catcode `\^12\catcode `\_12\catcode `\%12\relax}%
\providecommand \@@startlink[1]{}%
\providecommand \@@endlink[0]{}%
\providecommand \url  [0]{\begingroup\@sanitize@url \@url }%
\providecommand \@url [1]{\endgroup\@href {#1}{\urlprefix }}%
\providecommand \urlprefix  [0]{URL }%
\providecommand \Eprint [0]{\href }%
\providecommand \doibase [0]{http://dx.doi.org/}%
\providecommand \selectlanguage [0]{\@gobble}%
\providecommand \bibinfo  [0]{\@secondoftwo}%
\providecommand \bibfield  [0]{\@secondoftwo}%
\providecommand \translation [1]{[#1]}%
\providecommand \BibitemOpen [0]{}%
\providecommand \bibitemStop [0]{}%
\providecommand \bibitemNoStop [0]{.\EOS\space}%
\providecommand \EOS [0]{\spacefactor3000\relax}%
\providecommand \BibitemShut  [1]{\csname bibitem#1\endcsname}%
\let\auto@bib@innerbib\@empty
\bibitem [{\citenamefont {Penrose}(1971)}]{Penrose}%
  \BibitemOpen
  \bibfield  {author} {\bibinfo {author} {\bibfnamefont {R.}~\bibnamefont
  {Penrose}},\ }in\ \href@noop {} {\emph {\bibinfo {booktitle} {Combinatorial
  Mathematics and its Applications}}},\ \bibinfo {editor} {edited by\ \bibinfo
  {editor} {\bibfnamefont {D.}~\bibnamefont {Welsh}}}\ (\bibinfo  {publisher}
  {Academic Press},\ \bibinfo {address} {New York},\ \bibinfo {year} {1971})\
  pp.\ \bibinfo {pages} {221--244}\BibitemShut {NoStop}%
\bibitem [{\citenamefont {Joyal}\ and\ \citenamefont {Street}(1991)}]{JS91}%
  \BibitemOpen
  \bibfield  {author} {\bibinfo {author} {\bibfnamefont {A.}~\bibnamefont
  {Joyal}}\ and\ \bibinfo {author} {\bibfnamefont {R.}~\bibnamefont {Street}},\
  }\href {https://doi.org/10.1016/0001-8708(91)90003-P} {\bibfield  {journal}
  {\bibinfo  {journal} {Advances in Mathematics}\ }\textbf {\bibinfo {volume}
  {88}},\ \bibinfo {pages} {55} (\bibinfo {year} {1991})}\BibitemShut {NoStop}%
\bibitem [{\citenamefont {Selinger}(2011)}]{Selinger2011}%
  \BibitemOpen
  \bibfield  {author} {\bibinfo {author} {\bibfnamefont {P.}~\bibnamefont
  {Selinger}},\ }\enquote {\bibinfo {title} {{A Survey of Graphical Languages
  for Monoidal Categories}},}\ in\ \href
  {https://doi.org/10.1007/978-3-642-12821-9_4} {\emph {\bibinfo {booktitle}
  {New Structures for Physics}}}\ (\bibinfo  {publisher} {Springer Berlin
  Heidelberg},\ \bibinfo {address} {Berlin, Heidelberg},\ \bibinfo {year}
  {2011})\ pp.\ \bibinfo {pages} {289--355}\BibitemShut {NoStop}%
\bibitem [{\citenamefont {Abramsky}\ and\ \citenamefont {Coecke}(2004)}]{AC04}%
  \BibitemOpen
  \bibfield  {author} {\bibinfo {author} {\bibfnamefont {S.}~\bibnamefont
  {Abramsky}}\ and\ \bibinfo {author} {\bibfnamefont {B.}~\bibnamefont
  {Coecke}},\ }in\ \href {https://doi.org/10.1109/LICS.2004.1319636} {\emph
  {\bibinfo {booktitle} {Proceedings of the 19th Annual {IEEE} Symposium on
  Logic in Computer Science (LICS 2004)}}}\ (\bibinfo {year} {2004})\ pp.\
  \bibinfo {pages} {415--425}\BibitemShut {NoStop}%
\bibitem [{\citenamefont {Vicary}(2013)}]{Vic13}%
  \BibitemOpen
  \bibfield  {author} {\bibinfo {author} {\bibfnamefont {J.}~\bibnamefont
  {Vicary}},\ }in\ \href {http://dx.doi.org/10.1109/LICS.2013.14} {\emph
  {\bibinfo {booktitle} {Proceedings of the 28th Annual {ACM/IEEE} Symposium on
  Logic in Computer Science (LICS 2013)}}}\ (\bibinfo {year} {2013})\ pp.\
  \bibinfo {pages} {93--102}\BibitemShut {NoStop}%
\bibitem [{\citenamefont {Chancellor}\ \emph {et~al.}(2016)\citenamefont
  {Chancellor}, \citenamefont {Kissinger}, \citenamefont {Zohren},\ and\
  \citenamefont {Horsman}}]{CKZH16}%
  \BibitemOpen
  \bibfield  {author} {\bibinfo {author} {\bibfnamefont {N.}~\bibnamefont
  {Chancellor}}, \bibinfo {author} {\bibfnamefont {A.}~\bibnamefont
  {Kissinger}}, \bibinfo {author} {\bibfnamefont {S.}~\bibnamefont {Zohren}}, \
  and\ \bibinfo {author} {\bibfnamefont {D.}~\bibnamefont {Horsman}},\
  }\href@noop {} {\enquote {\bibinfo {title} {{Coherent Parity Check
  Construction for Quantum Error Correction}},}\ } (\bibinfo {year} {2016}),\
  \bibinfo {note} {available from \arxiv{1611.08012}}\BibitemShut {NoStop}%
\bibitem [{\citenamefont {Coecke}(2016)}]{C16}%
  \BibitemOpen
  \bibfield  {author} {\bibinfo {author} {\bibfnamefont {B.}~\bibnamefont
  {Coecke}},\ }\href@noop {} {\enquote {\bibinfo {title} {{From Quantum
  Foundations via Natural Language Meaning to a Theory of Everything}},}\ }
  (\bibinfo {year} {2016}),\ \bibinfo {note} {available from
  \arxiv{1602.07618}}\BibitemShut {NoStop}%
\bibitem [{\citenamefont {Coecke}\ and\ \citenamefont
  {Kissinger}(2017)}]{CKPict}%
  \BibitemOpen
  \bibfield  {author} {\bibinfo {author} {\bibfnamefont {B.}~\bibnamefont
  {Coecke}}\ and\ \bibinfo {author} {\bibfnamefont {A.}~\bibnamefont
  {Kissinger}},\ }\href {\doibase 10.1017/9781316219317} {\emph {\bibinfo
  {title} {Picturing Quantum Processes: A First Course in Quantum Theory and
  Diagrammatic Reasoning}}}\ (\bibinfo  {publisher} {Cambridge University
  Press},\ \bibinfo {year} {2017})\BibitemShut {NoStop}%
\bibitem [{\citenamefont {Nielsen}\ and\ \citenamefont {Chuang}(2002)}]{NC02}%
  \BibitemOpen
  \bibfield  {author} {\bibinfo {author} {\bibfnamefont {M.~A.}\ \bibnamefont
  {Nielsen}}\ and\ \bibinfo {author} {\bibfnamefont {I.~L.}\ \bibnamefont
  {Chuang}},\ }\href {\doibase 10.1017/CBO9780511976667} {\emph {\bibinfo
  {title} {{Quantum Computation and Quantum Information}}}}\ (\bibinfo
  {publisher} {Cambridge University Press},\ \bibinfo {year}
  {2002})\BibitemShut {NoStop}%
\bibitem [{\citenamefont {Coecke}\ and\ \citenamefont
  {Perdrix}(2010)}]{Coecke2010}%
  \BibitemOpen
  \bibfield  {author} {\bibinfo {author} {\bibfnamefont {B.}~\bibnamefont
  {Coecke}}\ and\ \bibinfo {author} {\bibfnamefont {S.}~\bibnamefont
  {Perdrix}},\ }in\ \href {\doibase 10.1007/978-3-642-15205-4_20} {\emph
  {\bibinfo {booktitle} {Computer Science Logic: 24th International Workshop,
  CSL 2010, 19th Annual Conference of the EACSL, Brno, Czech Republic, August
  23-27, 2010. Proceedings}}}\ (\bibinfo  {publisher} {Springer Berlin
  Heidelberg},\ \bibinfo {address} {Berlin, Heidelberg},\ \bibinfo {year}
  {2010})\ pp.\ \bibinfo {pages} {230--244},\ \bibinfo {note} {preprint
  available from \arxiv{1004.1598}}\BibitemShut {NoStop}%
\bibitem [{\citenamefont {Coecke}\ \emph {et~al.}(2011)\citenamefont {Coecke},
  \citenamefont {Wang}, \citenamefont {Wang}, \citenamefont {Wang},\ and\
  \citenamefont {Zhang}}]{CoeckeQKD}%
  \BibitemOpen
  \bibfield  {author} {\bibinfo {author} {\bibfnamefont {B.}~\bibnamefont
  {Coecke}}, \bibinfo {author} {\bibfnamefont {Q.}~\bibnamefont {Wang}},
  \bibinfo {author} {\bibfnamefont {B.}~\bibnamefont {Wang}}, \bibinfo {author}
  {\bibfnamefont {Y.}~\bibnamefont {Wang}}, \ and\ \bibinfo {author}
  {\bibfnamefont {Q.}~\bibnamefont {Zhang}},\ }in\ \href {\doibase
  https://doi.org/10.1016/j.entcs.2011.01.034} {\emph {\bibinfo {booktitle}
  {Proceedings of the 6th International Workshop on Quantum Physics and Logic
  (QPL 2009)}}}\ (\bibinfo {year} {2011})\ pp.\ \bibinfo {pages} {231 --
  249}\BibitemShut {NoStop}%
\bibitem [{\citenamefont {Hillebrand}(2012)}]{Hillebrand2011SuperdenseCW}%
  \BibitemOpen
  \bibfield  {author} {\bibinfo {author} {\bibfnamefont {A.}~\bibnamefont
  {Hillebrand}},\ }in\ \href {\doibase 10.4204/EPTCS.95.10} {\emph {\bibinfo
  {booktitle} {Proceedings of the 8th International Workshop on Quantum Physics
  and Logic (QPL 2011)}}}\ (\bibinfo {year} {2012})\ pp.\ \bibinfo {pages}
  {103--121}\BibitemShut {NoStop}%
\bibitem [{\citenamefont {Chung}\ \emph {et~al.}(2014)\citenamefont {Chung},
  \citenamefont {Shi},\ and\ \citenamefont {Wu}}]{CSW14}%
  \BibitemOpen
  \bibfield  {author} {\bibinfo {author} {\bibfnamefont {K.-M.}\ \bibnamefont
  {Chung}}, \bibinfo {author} {\bibfnamefont {Y.}~\bibnamefont {Shi}}, \ and\
  \bibinfo {author} {\bibfnamefont {X.}~\bibnamefont {Wu}},\ }\href@noop {}
  {\enquote {\bibinfo {title} {{Physical Randomness Extractors: Generating
  Random Numbers with Minimal Assumptions}},}\ } (\bibinfo {year} {2014}),\
  \bibinfo {note} {available from \arxiv{1402.4797}}\BibitemShut {NoStop}%
\bibitem [{\citenamefont {Heunen}(2008)}]{Heunen}%
  \BibitemOpen
  \bibfield  {author} {\bibinfo {author} {\bibfnamefont {C.}~\bibnamefont
  {Heunen}},\ }\href {\doibase 10.2168/LMCS-4(4:9)2008} {\bibfield  {journal}
  {\bibinfo  {journal} {Logical Methods in Computer Science}\ }\textbf
  {\bibinfo {volume} {4}},\ \bibinfo {pages} {4} (\bibinfo {year}
  {2008})}\BibitemShut {NoStop}%
\bibitem [{\citenamefont {Maurer}\ and\ \citenamefont
  {Renner}(2011)}]{Maurer11abstractcryptography}%
  \BibitemOpen
  \bibfield  {author} {\bibinfo {author} {\bibfnamefont {U.}~\bibnamefont
  {Maurer}}\ and\ \bibinfo {author} {\bibfnamefont {R.}~\bibnamefont
  {Renner}},\ }in\ \href@noop {} {\emph {\bibinfo {booktitle} {Innovations in
  Computer Science}}}\ (\bibinfo  {publisher} {Tsinghua University Press},\
  \bibinfo {year} {2011})\ pp.\ \bibinfo {pages} {1--21}\BibitemShut {NoStop}%
\bibitem [{\citenamefont {Pavlovic}(2014)}]{Pav14}%
  \BibitemOpen
  \bibfield  {author} {\bibinfo {author} {\bibfnamefont {D.}~\bibnamefont
  {Pavlovic}},\ }in\ \href {\doibase 10.1007/978-3-642-54789-8_19} {\emph
  {\bibinfo {booktitle} {Categories and Types in Logic, Language, and Physics -
  Essays Dedicated to Jim Lambek on the Occasion of His 90th Birthday}}},\
  Vol.\ \bibinfo {volume} {8222}\ (\bibinfo {year} {2014})\BibitemShut
  {NoStop}%
\bibitem [{\citenamefont {Stay}\ and\ \citenamefont
  {Vicary}(2013)}]{StayVicary}%
  \BibitemOpen
  \bibfield  {author} {\bibinfo {author} {\bibfnamefont {M.}~\bibnamefont
  {Stay}}\ and\ \bibinfo {author} {\bibfnamefont {J.}~\bibnamefont {Vicary}},\
  }\href {https://doi.org/10.1016/j.entcs.2013.09.022} {\bibfield  {journal}
  {\bibinfo  {journal} {Electronic Notes in Theoretical Computer Science}\
  }\textbf {\bibinfo {volume} {298}},\ \bibinfo {pages} {367 } (\bibinfo {year}
  {2013})},\ \bibinfo {note} {proceedings of the Twenty-ninth Conference on the
  Mathematical Foundations of Programming Semantics, MFPS XXIX}\BibitemShut
  {NoStop}%
\bibitem [{\citenamefont {{Mayers}}\ and\ \citenamefont
  {{Yao}}(1998)}]{mayersyao}%
  \BibitemOpen
  \bibfield  {author} {\bibinfo {author} {\bibfnamefont {D.}~\bibnamefont
  {{Mayers}}}\ and\ \bibinfo {author} {\bibfnamefont {A.}~\bibnamefont
  {{Yao}}},\ }in\ \href {\doibase 10.1109/SFCS.1998.743501} {\emph {\bibinfo
  {booktitle} {Proceedings 39th Annual Symposium on Foundations of Computer
  Science (Cat. No.98CB36280)}}}\ (\bibinfo {year} {1998})\ pp.\ \bibinfo
  {pages} {503--509}\BibitemShut {NoStop}%
\bibitem [{\citenamefont {Arnon-Friedman}\ \emph {et~al.}(2018)\citenamefont
  {Arnon-Friedman}, \citenamefont {Dupuis}, \citenamefont {Fawzi},
  \citenamefont {Renner},\ and\ \citenamefont {Vidick}}]{Arnon-Friedman:2018}%
  \BibitemOpen
  \bibfield  {author} {\bibinfo {author} {\bibfnamefont {R.}~\bibnamefont
  {Arnon-Friedman}}, \bibinfo {author} {\bibfnamefont {F.}~\bibnamefont
  {Dupuis}}, \bibinfo {author} {\bibfnamefont {O.}~\bibnamefont {Fawzi}},
  \bibinfo {author} {\bibfnamefont {R.}~\bibnamefont {Renner}}, \ and\ \bibinfo
  {author} {\bibfnamefont {T.}~\bibnamefont {Vidick}},\ }\href {\doibase
  10.1038/s41467-017-02307-4} {\bibfield  {journal} {\bibinfo  {journal}
  {Nature Communications}\ }\textbf {\bibinfo {volume} {9}},\ \bibinfo {pages}
  {459} (\bibinfo {year} {2018})}\BibitemShut {NoStop}%
\bibitem [{\citenamefont {Arnon-Friedman}\ \emph {et~al.}(2016)\citenamefont
  {Arnon-Friedman}, \citenamefont {Renner},\ and\ \citenamefont
  {Vidick}}]{Arnon:2016}%
  \BibitemOpen
  \bibfield  {author} {\bibinfo {author} {\bibfnamefont {R.}~\bibnamefont
  {Arnon-Friedman}}, \bibinfo {author} {\bibfnamefont {R.}~\bibnamefont
  {Renner}}, \ and\ \bibinfo {author} {\bibfnamefont {T.}~\bibnamefont
  {Vidick}},\ }\href@noop {} {\enquote {\bibinfo {title} {{Simple and Tight
  Device-Independent Security Proofs}},}\ } (\bibinfo {year} {2016}),\ \bibinfo
  {note} {available from \arxiv{1607.01797}}\BibitemShut {NoStop}%
\bibitem [{\citenamefont {Bierhorst}\ \emph {et~al.}(2017)\citenamefont
  {Bierhorst}, \citenamefont {Knill}, \citenamefont {Glancy}, \citenamefont
  {Mink}, \citenamefont {Jordan}, \citenamefont {Rommal}, \citenamefont {Liu},
  \citenamefont {Christensen}, \citenamefont {Nam},\ and\ \citenamefont
  {Shalm}}]{bierhorst2017}%
  \BibitemOpen
  \bibfield  {author} {\bibinfo {author} {\bibfnamefont {P.}~\bibnamefont
  {Bierhorst}}, \bibinfo {author} {\bibfnamefont {E.}~\bibnamefont {Knill}},
  \bibinfo {author} {\bibfnamefont {S.}~\bibnamefont {Glancy}}, \bibinfo
  {author} {\bibfnamefont {A.}~\bibnamefont {Mink}}, \bibinfo {author}
  {\bibfnamefont {S.}~\bibnamefont {Jordan}}, \bibinfo {author} {\bibfnamefont
  {A.}~\bibnamefont {Rommal}}, \bibinfo {author} {\bibfnamefont {Y.-K.}\
  \bibnamefont {Liu}}, \bibinfo {author} {\bibfnamefont {B.}~\bibnamefont
  {Christensen}}, \bibinfo {author} {\bibfnamefont {S.~W.}\ \bibnamefont
  {Nam}}, \ and\ \bibinfo {author} {\bibfnamefont {L.~K.}\ \bibnamefont
  {Shalm}},\ }\href@noop {} {\enquote {\bibinfo {title} {{Experimentally
  Generated Random Numbers Certified by the Impossibility of Superluminal
  Signaling}},}\ } (\bibinfo {year} {2017}),\ \bibinfo {note} {available from
  \arxiv{1702.05178}}\BibitemShut {NoStop}%
\bibitem [{\citenamefont {Colbeck}(2007)}]{colbeck2007quantum}%
  \BibitemOpen
  \bibfield  {author} {\bibinfo {author} {\bibfnamefont {R.}~\bibnamefont
  {Colbeck}},\ }\href@noop {} {\enquote {\bibinfo {title} {{Quantum And
  Relativistic Protocols For Secure Multi-Party Computation}},}\ }\bibinfo
  {howpublished} {{Ph.D.} thesis, University of York} (\bibinfo {year}
  {2007}),\ \bibinfo {note} {available from \arxiv{0911.3814}}\BibitemShut
  {NoStop}%
\bibitem [{\citenamefont {Coudron}\ and\ \citenamefont {Yuen}(2014)}]{CY13}%
  \BibitemOpen
  \bibfield  {author} {\bibinfo {author} {\bibfnamefont {M.}~\bibnamefont
  {Coudron}}\ and\ \bibinfo {author} {\bibfnamefont {H.}~\bibnamefont {Yuen}},\
  }in\ \href {http://dx.doi.org/10.1145/2591796.2591873} {\emph {\bibinfo
  {booktitle} {Proceedings of the Forty-sixth Annual ACM Symposium on Theory of
  Computing}}},\ \bibinfo {series and number} {STOC '14}\ (\bibinfo
  {publisher} {ACM},\ \bibinfo {address} {New York, NY, USA},\ \bibinfo {year}
  {2014})\ pp.\ \bibinfo {pages} {427--436}\BibitemShut {NoStop}%
\bibitem [{\citenamefont {Dupuis}\ \emph {et~al.}(2016)\citenamefont {Dupuis},
  \citenamefont {Fawzi},\ and\ \citenamefont {Renner}}]{Dupuis:2016}%
  \BibitemOpen
  \bibfield  {author} {\bibinfo {author} {\bibfnamefont {F.}~\bibnamefont
  {Dupuis}}, \bibinfo {author} {\bibfnamefont {O.}~\bibnamefont {Fawzi}}, \
  and\ \bibinfo {author} {\bibfnamefont {R.}~\bibnamefont {Renner}},\
  }\href@noop {} {\enquote {\bibinfo {title} {{Entropy Accumulation}},}\ }
  (\bibinfo {year} {2016}),\ \bibinfo {note} {available from
  \arxiv{1607.01796}}\BibitemShut {NoStop}%
\bibitem [{\citenamefont {Fehr}\ \emph {et~al.}(2011)\citenamefont {Fehr},
  \citenamefont {Gelles},\ and\ \citenamefont {Schaffner}}]{fehr2013security}%
  \BibitemOpen
  \bibfield  {author} {\bibinfo {author} {\bibfnamefont {S.}~\bibnamefont
  {Fehr}}, \bibinfo {author} {\bibfnamefont {R.}~\bibnamefont {Gelles}}, \ and\
  \bibinfo {author} {\bibfnamefont {C.}~\bibnamefont {Schaffner}},\ }\href
  {\doibase 10.1103/PhysRevA.87.012335} {\bibfield  {journal} {\bibinfo
  {journal} {Physical Review A}\ }\textbf {\bibinfo {volume} {87}} (\bibinfo
  {year} {2011}),\ 10.1103/PhysRevA.87.012335}\BibitemShut {NoStop}%
\bibitem [{\citenamefont {Knill}\ \emph {et~al.}(2018)\citenamefont {Knill},
  \citenamefont {Zhang},\ and\ \citenamefont {Fu}}]{Knill:2018}%
  \BibitemOpen
  \bibfield  {author} {\bibinfo {author} {\bibfnamefont {E.}~\bibnamefont
  {Knill}}, \bibinfo {author} {\bibfnamefont {Y.}~\bibnamefont {Zhang}}, \ and\
  \bibinfo {author} {\bibfnamefont {H.}~\bibnamefont {Fu}},\ }\href@noop {}
  {\enquote {\bibinfo {title} {Quantum probability estimation for randomness
  with quantum side information},}\ } (\bibinfo {year} {2018}),\ \bibinfo
  {note} {\arxiv{1806.04553}}\BibitemShut {NoStop}%
\bibitem [{\citenamefont {Miller}\ and\ \citenamefont {Shi}(2017)}]{MY14-2}%
  \BibitemOpen
  \bibfield  {author} {\bibinfo {author} {\bibfnamefont {C.~A.}\ \bibnamefont
  {Miller}}\ and\ \bibinfo {author} {\bibfnamefont {Y.}~\bibnamefont {Shi}},\
  }\href {https://doi.org/10.1137/15M1044333} {\bibfield  {journal} {\bibinfo
  {journal} {SIAM Journal on Computing}\ }\textbf {\bibinfo {volume} {46}},\
  \bibinfo {pages} {1304} (\bibinfo {year} {2017})}\BibitemShut {NoStop}%
\bibitem [{\citenamefont {Miller}\ and\ \citenamefont {Shi}(2016)}]{MY14-1}%
  \BibitemOpen
  \bibfield  {author} {\bibinfo {author} {\bibfnamefont {C.~A.}\ \bibnamefont
  {Miller}}\ and\ \bibinfo {author} {\bibfnamefont {Y.}~\bibnamefont {Shi}},\
  }\href {http://dx.doi.org/10.1145/2885493} {\bibfield  {journal} {\bibinfo
  {journal} {Journal of the ACM}\ }\textbf {\bibinfo {volume} {63}},\ \bibinfo
  {pages} {33:1} (\bibinfo {year} {2016})}\BibitemShut {NoStop}%
\bibitem [{\citenamefont {Pironio}\ \emph {et~al.}(2010)\citenamefont
  {Pironio}, \citenamefont {Ac{\'\i}n}, \citenamefont {Massar}, \citenamefont
  {Boyer de~la Giroday}, \citenamefont {Matsukevich}, \citenamefont {Maunz},
  \citenamefont {Olmschenk}, \citenamefont {Hayes}, \citenamefont {Luo},
  \citenamefont {A~Manning},\ and\ \citenamefont {Monroe}}]{pironio2010random}%
  \BibitemOpen
  \bibfield  {author} {\bibinfo {author} {\bibfnamefont {S.}~\bibnamefont
  {Pironio}}, \bibinfo {author} {\bibfnamefont {A.}~\bibnamefont {Ac{\'\i}n}},
  \bibinfo {author} {\bibfnamefont {S.}~\bibnamefont {Massar}}, \bibinfo
  {author} {\bibfnamefont {A.}~\bibnamefont {Boyer de~la Giroday}}, \bibinfo
  {author} {\bibfnamefont {D.}~\bibnamefont {Matsukevich}}, \bibinfo {author}
  {\bibfnamefont {P.}~\bibnamefont {Maunz}}, \bibinfo {author} {\bibfnamefont
  {S.}~\bibnamefont {Olmschenk}}, \bibinfo {author} {\bibfnamefont
  {D.}~\bibnamefont {Hayes}}, \bibinfo {author} {\bibfnamefont
  {L.}~\bibnamefont {Luo}}, \bibinfo {author} {\bibfnamefont {T.}~\bibnamefont
  {A~Manning}}, \ and\ \bibinfo {author} {\bibfnamefont {C.}~\bibnamefont
  {Monroe}},\ }\href {\doibase 10.1038/nature09008} {\bibfield  {journal}
  {\bibinfo  {journal} {Nature}\ }\textbf {\bibinfo {volume} {464}},\ \bibinfo
  {pages} {1021} (\bibinfo {year} {2010})}\BibitemShut {NoStop}%
\bibitem [{\citenamefont {Pironio}\ and\ \citenamefont
  {Massar}(2011)}]{pironio2013security}%
  \BibitemOpen
  \bibfield  {author} {\bibinfo {author} {\bibfnamefont {S.}~\bibnamefont
  {Pironio}}\ and\ \bibinfo {author} {\bibfnamefont {S.}~\bibnamefont
  {Massar}},\ }\href {\doibase 10.1103/PhysRevA.87.012336} {\bibfield
  {journal} {\bibinfo  {journal} {Physical Review A}\ }\textbf {\bibinfo
  {volume} {87}} (\bibinfo {year} {2011}),\
  10.1103/PhysRevA.87.012336}\BibitemShut {NoStop}%
\bibitem [{\citenamefont {Aaronson}(2014)}]{A14}%
  \BibitemOpen
  \bibfield  {author} {\bibinfo {author} {\bibfnamefont {S.}~\bibnamefont
  {Aaronson}},\ }\href {https://doi.org/10.1511/2014.109.266} {\bibfield
  {journal} {\bibinfo  {journal} {American Scientist}\ }\textbf {\bibinfo
  {volume} {102}},\ \bibinfo {pages} {266} (\bibinfo {year}
  {2014})}\BibitemShut {NoStop}%
\bibitem [{\citenamefont {{National Institute of Standards and Technology
  (NIST)}}()}]{Beacon}%
  \BibitemOpen
  \bibfield  {author} {\bibinfo {author} {\bibnamefont {{National Institute of
  Standards and Technology (NIST)}}},\ }\href@noop {} {\enquote {\bibinfo
  {title} {{Randomness Beacon Program}},}\ }\bibinfo {howpublished}
  {\url{https://www.nist.gov/programs-projects/nist-randomness-beacon}},\
  \bibinfo {note} {accessed: 2019-02-01}\BibitemShut {NoStop}%
\bibitem [{\citenamefont {Coudron}\ and\ \citenamefont
  {Yuen}(2013)}]{coudron2013arXiv}%
  \BibitemOpen
  \bibfield  {author} {\bibinfo {author} {\bibfnamefont {M.}~\bibnamefont
  {Coudron}}\ and\ \bibinfo {author} {\bibfnamefont {H.}~\bibnamefont {Yuen}},\
  }\href@noop {} {\enquote {\bibinfo {title} {{Infinite Randomness Expansion
  and Amplification with a Constant Number of Devices}},}\ } (\bibinfo {year}
  {2013}),\ \bibinfo {note} {preprint available from
  \arxiv{1310.6755}}\BibitemShut {NoStop}%
\bibitem [{\citenamefont {Bar}\ \emph {et~al.}(2018)\citenamefont {Bar},
  \citenamefont {Kissinger},\ and\ \citenamefont {Vicary}}]{BKV16}%
  \BibitemOpen
  \bibfield  {author} {\bibinfo {author} {\bibfnamefont {K.}~\bibnamefont
  {Bar}}, \bibinfo {author} {\bibfnamefont {A.}~\bibnamefont {Kissinger}}, \
  and\ \bibinfo {author} {\bibfnamefont {J.}~\bibnamefont {Vicary}},\ }\href
  {http://dx.doi.org/10.23638/LMCS-14(1:8)2018} {\bibfield  {journal} {\bibinfo
   {journal} {{Logical Methods in Computer Science}}\ }\textbf {\bibinfo
  {volume} {14}} (\bibinfo {year} {2018})}\BibitemShut {NoStop}%
\bibitem [{\citenamefont {{Globular}}(2019{\natexlab{a}})}]{globularvideo}%
  \BibitemOpen
  \bibfield  {author} {\bibinfo {author} {\bibnamefont {{Globular}}},\
  }\href@noop {} {\enquote {\bibinfo {title} {{Video}},}\ }\bibinfo
  {howpublished} {Available as ancillary material at \arxiv{1705.09213}}
  (\bibinfo {year} {2019}{\natexlab{a}}),\ \bibinfo {note} {accessed:
  2019-04-25}\BibitemShut {NoStop}%
\bibitem [{\citenamefont {Coecke}\ and\ \citenamefont
  {Kissinger}(2015)}]{CK15}%
  \BibitemOpen
  \bibfield  {author} {\bibinfo {author} {\bibfnamefont {B.}~\bibnamefont
  {Coecke}}\ and\ \bibinfo {author} {\bibfnamefont {A.}~\bibnamefont
  {Kissinger}},\ }\href@noop {} {\enquote {\bibinfo {title} {{Categorical
  Quantum Mechanics I: Causal Quantum Processes}},}\ } (\bibinfo {year}
  {2015}),\ \bibinfo {note} {available from \arxiv{1510.05468}}\BibitemShut
  {NoStop}%
\bibitem [{\citenamefont {Coecke}\ and\ \citenamefont
  {Kissinger}(2016)}]{CK16}%
  \BibitemOpen
  \bibfield  {author} {\bibinfo {author} {\bibfnamefont {B.}~\bibnamefont
  {Coecke}}\ and\ \bibinfo {author} {\bibfnamefont {A.}~\bibnamefont
  {Kissinger}},\ }\href@noop {} {\enquote {\bibinfo {title} {{Categorical
  Quantum Mechanics II: Classical-Quantum Interaction}},}\ } (\bibinfo {year}
  {2016}),\ \bibinfo {note} {available from \arxiv{1605.08617}}\BibitemShut
  {NoStop}%
\bibitem [{\citenamefont {Kissinger}\ \emph {et~al.}(2017)\citenamefont
  {Kissinger}, \citenamefont {Tull},\ and\ \citenamefont
  {Westerbaan}}]{Kiss-et-al:2017}%
  \BibitemOpen
  \bibfield  {author} {\bibinfo {author} {\bibfnamefont {A.}~\bibnamefont
  {Kissinger}}, \bibinfo {author} {\bibfnamefont {S.}~\bibnamefont {Tull}}, \
  and\ \bibinfo {author} {\bibfnamefont {B.}~\bibnamefont {Westerbaan}},\
  }\href@noop {} {\enquote {\bibinfo {title} {{Picture-perfect Quantum Key
  Distribution}},}\ } (\bibinfo {year} {2017}),\ \bibinfo {note} {available
  from \arxiv{1704.08668}}\BibitemShut {NoStop}%
\bibitem [{\citenamefont {Watrous}(2018)}]{Wat17}%
  \BibitemOpen
  \bibfield  {author} {\bibinfo {author} {\bibfnamefont {J.}~\bibnamefont
  {Watrous}},\ }\href {http://dx.doi.org/10.1017/9781316848142} {\emph
  {\bibinfo {title} {{The Theory of Quantum Information}}}}\ (\bibinfo
  {publisher} {Cambridge University Press},\ \bibinfo {year}
  {2018})\BibitemShut {NoStop}%
\bibitem [{\citenamefont {Ekert}(1991)}]{Ekert91}%
  \BibitemOpen
  \bibfield  {author} {\bibinfo {author} {\bibfnamefont {A.~K.}\ \bibnamefont
  {Ekert}},\ }\href {\doibase 10.1103/PhysRevLett.67.661} {\bibfield  {journal}
  {\bibinfo  {journal} {Phys. Rev. Lett.}\ }\textbf {\bibinfo {volume} {67}},\
  \bibinfo {pages} {661} (\bibinfo {year} {1991})}\BibitemShut {NoStop}%
\bibitem [{\citenamefont {Vazirani}\ and\ \citenamefont
  {Vidick}(2012)}]{vazirani2012}%
  \BibitemOpen
  \bibfield  {author} {\bibinfo {author} {\bibfnamefont {U.}~\bibnamefont
  {Vazirani}}\ and\ \bibinfo {author} {\bibfnamefont {T.}~\bibnamefont
  {Vidick}},\ }in\ \href {\doibase 10.1145/2213977.2213984} {\emph {\bibinfo
  {booktitle} {Proceedings of the Forty-fourth Annual ACM Symposium on Theory
  of Computing}}},\ \bibinfo {series and number} {STOC '12}\ (\bibinfo
  {publisher} {ACM},\ \bibinfo {address} {New York, NY, USA},\ \bibinfo {year}
  {2012})\ pp.\ \bibinfo {pages} {61--76}\BibitemShut {NoStop}%
\bibitem [{\citenamefont {{Globular}}(2019{\natexlab{b}})}]{globularfile}%
  \BibitemOpen
  \bibfield  {author} {\bibinfo {author} {\bibnamefont {{Globular}}},\
  }\href@noop {} {\enquote {\bibinfo {title} {{Proof}},}\ }\bibinfo
  {howpublished} {\url{http://globular.science/1904.001}} (\bibinfo {year}
  {2019}{\natexlab{b}}),\ \bibinfo {note} {accessed: 2019-04-25}\BibitemShut
  {NoStop}%
\bibitem [{\citenamefont {De}\ \emph {et~al.}(2012)\citenamefont {De},
  \citenamefont {Portmann}, \citenamefont {Vidick},\ and\ \citenamefont
  {Renner}}]{De2012}%
  \BibitemOpen
  \bibfield  {author} {\bibinfo {author} {\bibfnamefont {A.}~\bibnamefont
  {De}}, \bibinfo {author} {\bibfnamefont {C.}~\bibnamefont {Portmann}},
  \bibinfo {author} {\bibfnamefont {T.}~\bibnamefont {Vidick}}, \ and\ \bibinfo
  {author} {\bibfnamefont {R.}~\bibnamefont {Renner}},\ }\href
  {https://doi.org/10.1137/100813683} {\bibfield  {journal} {\bibinfo
  {journal} {SIAM Journal on Computing}\ }\textbf {\bibinfo {volume} {41}},\
  \bibinfo {pages} {915} (\bibinfo {year} {2012})}\BibitemShut {NoStop}%
\bibitem [{\citenamefont {Bonchi}\ \emph {et~al.}(2014)\citenamefont {Bonchi},
  \citenamefont {Soboci{'{n}}ski},\ and\ \citenamefont {Zanasi}}]{Bone}%
  \BibitemOpen
  \bibfield  {author} {\bibinfo {author} {\bibfnamefont {F.}~\bibnamefont
  {Bonchi}}, \bibinfo {author} {\bibfnamefont {P.}~\bibnamefont
  {Soboci{'{n}}ski}}, \ and\ \bibinfo {author} {\bibfnamefont {F.}~\bibnamefont
  {Zanasi}},\ }in\ \href {\doibase 10.1007/978-3-662-44584-6\_30} {\emph
  {\bibinfo {booktitle} {{CONCUR} 2014 - Concurrency Theory}}},\ \bibinfo
  {series} {Lecture Notes in Computer Science}, Vol.\ \bibinfo {volume} {8704}\
  (\bibinfo {year} {2014})\ pp.\ \bibinfo {pages} {435--450}\BibitemShut
  {NoStop}%
\bibitem [{\citenamefont {{Kissinger}}\ and\ \citenamefont
  {{Uijlen}}(2017)}]{KU17}%
  \BibitemOpen
  \bibfield  {author} {\bibinfo {author} {\bibfnamefont {A.}~\bibnamefont
  {{Kissinger}}}\ and\ \bibinfo {author} {\bibfnamefont {S.}~\bibnamefont
  {{Uijlen}}},\ }in\ \href {\doibase 10.1109/LICS.2017.8005095} {\emph
  {\bibinfo {booktitle} {Proceedings of the 32nd Annual {ACM/IEEE} Symposium on
  Logic in Computer Science (LICS 2017)}}}\ (\bibinfo {year} {2017})\ pp.\
  \bibinfo {pages} {1--12}\BibitemShut {NoStop}%
\end{thebibliography}%

\appendix
\section{Proof of Proposition~\ref{prop:approxprop}}
\label{approxpropsec}

In this section we revert to standard notation for quantum systems.
The state $\Psi$ can be written as a density operator
\begin{equation}
  \Psi = \sum_i \left| i \right> \left< i \right| \otimes M_i,
\end{equation}
where $M_i$ are positive semidefinite operators on $Q_1$, with $Q =
Q_1 \otimes Q_1$.  Then, the duplicated state of $\Psi$ is given by
\begin{eqnarray}
  \label{sympure}
  \Psi' & = & \sum_i \left| i \right> \left< i \right| \otimes
             \left| i \right> \left< i \right| \otimes 
             \left( \Vec \sqrt{M_i} \right) \left( \Vec \sqrt{M_i} \right)^*,
\end{eqnarray}
where the sum is taken over $i \in \{ 1, 2, \ldots, \dim ( C ) \}$,
and where $\Vec ( X )$ denotes the vector $\sum_{ij} x_{ij} \left| i
\right> \otimes \left| j \right>$ for any $X = \sum_{ij} x_{ij} \left|
i \right> \left< j \right|$.  If $\Phi$ is such that
\begin{equation}
  \Psi =_\epsilon \Phi,
\end{equation}
then
\begin{equation}
  \left\| \Psi - \Phi \right\|_1 \leq 2 \epsilon,
\end{equation}
and therefore if we let
\begin{equation}
  \Phi = \sum_i \left| i \right> \left< i \right| \otimes M'_i,
\end{equation}
we have
\begin{equation}
  \sum_i \left\| M_i - M'_i \right\|_1 \leq 2 \epsilon.
\end{equation}
By Lemma 3.37 from \cite{Wat17}, we have
\begin{equation}
  \sum_i \left\| \sqrt{M_i} - \sqrt{M'_i} \right\|_2^2 \leq 2
  \epsilon.
\end{equation}
For any $i$,
\begin{eqnarray}
  \lefteqn{\left\| (\Vec \sqrt{M_i} ) ( \Vec \sqrt{M_i} )^* - 
  (\Vec \sqrt{M_i} ) ( \Vec \sqrt{M_i} )^* \right\|_1}  \\
& \leq & \left\| (\Vec \sqrt{M_i} ) ( \Vec \sqrt{M_i} )^* - 
         (\Vec \sqrt{M_i} ) ( \Vec \sqrt{M'_i} )^* \right\|_1 \\
& & + \left\| (\Vec \sqrt{M_i} ) ( \Vec \sqrt{M'_i} )^* - 
    (\Vec \sqrt{M'_i} ) ( \Vec \sqrt{M'_i} )^* \right\|_1 \\
& \leq & \left\| \Vec \sqrt{M_i} \right\| \left\| \Vec \sqrt{M_i} - 
      \Vec \sqrt{M'_i} \right\| + \left\| \Vec \sqrt{M'_i} \right\| 
      \left\| \Vec \sqrt{M_i} - \Vec \sqrt{M'_i} \right\|
\end{eqnarray}
Therefore, applying the Cauchy-Schwartz inequality, the trace distance
between the duplicated states of $\Phi$ and $\Psi$ is upper bounded by
\begin{eqnarray}
  \lefteqn{\sum_i  \left( \left\| \Vec \sqrt{M_i} \right\| + 
  \left\| \Vec \sqrt{M'_i } \right\| \right)  
  \left\| \Vec \sqrt{M_i} - \Vec \sqrt{M'_i} \right\|} \\
& \leq & \sqrt{ \left[ \sum_i \left( \left\| \Vec \sqrt{M_i} \right\| 
         + \left\| \Vec \sqrt{M'_i } \right\| \right)^2 \right]
         \left[ \sum_i \left\| \Vec \sqrt{M_i} - 
         \Vec \sqrt{M'_i} \right\|^2 \right] } \\
& \leq & \sqrt{ \left[ \sum_i \left( 2 \left\| 
         \Vec \sqrt{M_i} \right\|^2 + 
         2 \left\| \Vec \sqrt{M'_i } \right\|^2 \right) \right]
         \left[ \sum_i \left\| \Vec \sqrt{M_i} - 
         \Vec \sqrt{M'_i} \right\|^2 \right] } \\
& \leq & \sqrt{4 \cdot 2 \epsilon },
\end{eqnarray}
as desired.

\section{Formal Justification of Theorem~\ref{startingthm}}

\label{justificationapp}

Theorem~\ref{startingthm} is a special case of known results on
randomness expansion \cite{vazirani2012, MY14-1, MY14-2, Arnon:2016}.
In this appendix we demonstrate one way to derive
Theorem~\ref{startingthm}.

We first define the conditional min-entropy of a classical-quantum
state.  Any subnormalized classical-quantum state
\begin{equation}
\label{copiedstate}
\mpp{0.5}{\scalebox{0.8}{
		\begin{tikzpicture}
		\node (out1) at (-.3,1) {};
		\node (out2) at (.7,1) {};
		\node[style=qstate] (psi) at (0, 0) {~~~~~$\psi$~~~~~~};
		\draw[style=cedge] (out1) to node[left] {$C$} (out1|-psi.north);
		\draw[style=qedge] (out2) to node[left] {$Q$} (out2|-psi.north);
		\end{tikzpicture}}}
\end{equation}
where $Q = V \otimes V$, can be expressed in conventional form as an
operator on $C \otimes V$:
\begin{eqnarray}
\Psi & = & \sum_i \left| i \right> \left< i \right| \otimes M_i,
\end{eqnarray}
where each $M_i$ is a positive semidefinite matrix on $V$.  Then, the
smooth min-entropy of $C$ conditioned on $V$ is given by
\begin{eqnarray}
H_{min} ( C \mid M )_{\Psi} & = & - \log \left[ \min_{\sigma \geq M_i}
  \Tr ( \sigma ) \right],
\end{eqnarray}
where the minimum is taken over all Hermitian operators $\sigma$ on
$Q$ that satisfy $\sigma \geq M_i$ for all $i$.  (If $\Psi$ is
normalized, this quantity is the negative log of the optimal
probability with which an adversary can guess the value of $C$ given
$Q$.)

Let
\begin{equation}
\label{copiedstate2}
\mpp{0.5}{\scalebox{0.8}{
		\begin{tikzpicture}
		\node (out1) at (-.3,2) {};
		\node (out2) at (.7,2) {};
		\node[style=qstate] (psi) at (0, 0) {~~$H_{min} ( C \mid M ) \geq K$~~~};
		\draw[style=cedge] (out1) to node[left] {$C$} (out1|-psi.north);
		\draw[style=qedge] (out2) to node[left] {$Q$} (out2|-psi.north);
		\end{tikzpicture}}}
\end{equation}
denote the set of all \textit{sub}normalized classical-quantum states
$\psi$ of $CQ$ satisfying
\begin{eqnarray}
H_{min} ( C \mid M )_\psi & \geq & K.
\end{eqnarray}

\subsection{Achieving high min-entropy from a uniform seed}

We will work with Protocol $R_{gen}$ from Figure 2 in \cite{MY14-2}.
Fix the game $G$ to be the CHSH game, and let the score threshold
$\chi$ be equal to $0.85$.  Let $T, A_1, A_2, X_1, X_2$ denote
classical bit registers, and let $I = (T, A_1, A_2)$ and $O = (X_1,
X_2)$.  For any real number $q$ with $0 < q < 1$, let $B_q$ denote the
distribution on $(T, A_1, A_2)$ given by
\begin{eqnarray}
\mathbf{P} ( t, a_1, a_2 ) & = \left\{ \begin{array}{cll} 
q/4 & & \textnormal{ if } t = 1 \\
(1-q) & & \textnormal{ if } t = a_1 = a_2 = 0 \\
0 && \textnormal{ otherwise.} \end{array}  \right. 
\end{eqnarray}
(This is the distribution used in a single round of Protocol $R_{gen}$
when the game $G$ is the CHSH game.)  The protocol $R_{gen}$ can be
expressed diagrammatically as follows.
\begin{equation}
\label{appsecstatement}
\mpp{0.5}{\scalebox{0.8}{
		\begin{tikzpicture}
		\def\lineA{-.8}
		\def\lineB{.8}
		\node (out1) at (\lineA,1.5) {};
		\node[style=qstate] (dstate) at (\lineB,-1.5) {~~~\rule{0pt}{1ex}~~~};
		\node[style=cstate] (seed) at (\lineA,-1.5) {$B_q^{\otimes M}$};
		\node[style=qprocess] (re) at (0,0) {~~~$U ( M , q) $~~~};
		\node[style=terminal] (end) at (\lineB,1) {};
		\draw[style=qedge] (dstate) to (dstate|-re.south);
		\draw[style=cedge] (seed) to node[left] {$I^{\otimes M}$} (seed|-re.south);
		\draw[style=cedge] (out1) to node[left,near start] {$O^{\otimes M}$} (out1|-re.north);
		\draw[style=qedge] (end) to (end|-re.north);
		\end{tikzpicture}
}}
\end{equation}
where $U ( M, q )$ denotes the process described in steps $1-6$ in the
device-independent Protocol $R_{gen}$.

\begin{remark}
\label{completenessrem}
The highest possible quantum winning probability for the CHSH game is
$\frac{1}{2} + \frac{\sqrt{2}}{4}$, which is strictly greater than
$\chi = 0.85$.  Therefore, an honest quantum device employing the
optimal strategy at each round will succeed at Protocol $R_{gen}$ with
probability $1 - 2^{-\Omega ( q N )}$.
\end{remark}

Applying Theorem 1.1 from \cite{MY14-2} with $b := q$, and making use
of formula (5.18) from \cite{MY14-2}, we obtain the following.

\begin{theorem}
\label{altsecthm}
There exist device-independent protocols $\{ U ( M, q ) \mid M \in \{
1, 2, \ldots \}, 0 < q < 1 \}$ such that the following holds
(soundness):
\begin{equation}
\label{appsecstatementB}
\mpp{0.5}{\scalebox{0.8}{
		\begin{tikzpicture}
		\def\lineA{-.8};
		\def\lineB{.8};
		\def\lineAA{-1.5};
		\def\lineBB{1.5};
		\node (out1) at (\lineA,1.5) {};
		\node (out0) at (\lineAA,1.5) {};
		\node (out2) at (\lineBB,1.5) {};
		\node[style=qstate] (dstate) at ({(\lineB+\lineBB)/2},-1.5) {~~~~\rule{0pt}{1ex}~~~~};
		\node[style=cstate] (seed) at ({(\lineA+\lineAA)/2},-1.5) {$B_q^{\otimes M}$};
		\node[style=qprocess] (re) at (0,0) {~~~$U ( M , q) $~~~};
		\node[style=terminal] (end) at (\lineB,1) {};
		\draw[style=cedge] (out0) to node[left,near start] {$\mathbf{I}$} (out0|-seed.north);
		\draw[style=cedge] (out1) to node[right,near start] {$\mathbf{O}$} (out1|-re.north);
		\draw[style=cedge] (out1|-re.south) to node[right] {$\mathbf{I}$} (out1|-seed.north);
		\draw[style=qedge] (end) to (end|-re.north);
		\draw[style=qedge] (end|-re.south) to (end|-dstate.north);
		\draw[style=qedge] (out2) to node[right,near start] {$E$} (out2|-dstate.north);
		\end{tikzpicture}
}}
~~ {\subseteq_{2^{-\Omega ( q^2)M } }} ~~
\mpp{0.5}{\scalebox{0.8}{
		\begin{tikzpicture}
		\def\lineA{-1.5};
		\def\lineB{.1};
		\def\lineC{1.7};
		\node (out1) at (\lineA,1.5) {};
		\node (out2) at (\lineB,1.5) {};
		\node (out3) at (\lineC,1.5) {};
		\node[style=qstate] (dstate) at (0,0) {$H_{min} ( \mathbf{O} \mid \mathbf{I} E ) \geq M \cdot 0.005$};
		\draw[style=cedge] (out1) to node[left,near start] {$\mathbf{I}$} (out1|-dstate.north);
		\draw[style=cedge] (out2) to node[left,near start] {$\mathbf{O}$} (out2|-dstate.north);
		\draw[style=qedge] (out3) to node[left,near start] {$E$} (out3|-dstate.north);
		\end{tikzpicture}
}}
\end{equation}
where we have written $\mathbf{I}$ and $\mathbf{O}$ for $I^{\otimes
  M}$ and $O^{\otimes M}$ respectively, and the following also holds
(completeness):
\begin{equation}
\mpp{0.5}{\scalebox{0.8}{
		\begin{tikzpicture}
		\node[style=scalar] (scale) {$1 - 2^{{-\Omega ( q N )}}$};
\end{tikzpicture}}}
~~ {\in} ~~
\mpp{0.5}{\scalebox{0.8}{
		\begin{tikzpicture}
		\def\lineA{-.8};
		\def\lineB{.8};
		\def\lineAA{-1.5};
		\def\lineBB{1.5};
		\node[style=terminal] (out1) at (\lineA,1.5) {};
		\node[style=terminal] (out0) at (\lineAA,1.5) {};
		\node[style=terminal] (out2) at (\lineBB,1.5) {};
		\node[style=qstate] (dstate) at ({(\lineB+\lineBB)/2},-1.5) {~~~~\rule{0pt}{1ex}~~~~};
		\node[style=cstate] (seed) at ({(\lineA+\lineAA)/2},-1.5) {$B_q^{\otimes M}$};
		\node[style=qprocess] (re) at (0,0) {~~~$U ( M , q) $~~~};
		\node[style=terminal] (end) at (\lineB,1) {};
		\draw[style=cedge] (out0) to node[left,near start] {$\mathbf{I}$} (out0|-seed.north);
		\draw[style=cedge] (out1) to node[right,near start] {$\mathbf{O}$} (out1|-re.north);
		\draw[style=cedge] (out1|-re.south) to node[right] {$\mathbf{I}$} (out1|-seed.north);
		\draw[style=qedge] (end) to (end|-re.north);
		\draw[style=qedge] (end|-re.south) to (end|-dstate.north);
		\draw[style=qedge] (out2) to node[right,near start] {$E$} (out2|-dstate.north);
		\end{tikzpicture}
}}
\end{equation}
\end{theorem}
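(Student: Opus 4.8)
The plan is to obtain the statement as a diagrammatic repackaging of Theorem~1.1 of \cite{MY14-2}, specialized to the CHSH game. First I would fix $U(M,q)$ to be exactly the process realized by steps $1$–$6$ of Protocol $R_{gen}$, so that for each choice of the untrusted quantum device the left-hand diagram of (\ref{appsecstatementB}) is precisely the subnormalized classical-quantum state produced by running $R_{gen}$ on the seed $B_q^{\otimes M}$ against that device (with the adversary's side register $E$ retained). Here I would take care that the $\subseteq_\epsilon$ relation quantifies over all device behaviors: the left-hand side denotes the set of output states of $U(M,q)$ on $B_q^{\otimes M}$ as the untrusted process ranges over all admissible device behaviors, while the right-hand side is the set (\ref{copiedstate2}) of all subnormalized states satisfying $H_{min}(\mathbf{O}\mid\mathbf{I}E)\geq M\cdot 0.005$.

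For soundness I would then invoke Theorem~1.1 of \cite{MY14-2} with $b:=q$. That theorem asserts, for a protocol built on a nonlocal game with a spectral gap, that the subnormalized output (weighted by the success event) lies within trace distance $2^{-\Omega(q^2 M)}$ of a state whose smooth conditional min-entropy $H_{min}(\mathbf{O}\mid\mathbf{I}E)$ is bounded below by a rate linear in $M$. Specializing to the CHSH game with threshold $\chi=0.85$ and reading off the rate constant from formula (5.18) of \cite{MY14-2} yields the explicit bound $M\cdot 0.005$. Converting trace distance to the $=_\epsilon$ relation (recall from Definition~\ref{def:distanceprocesses} that for states $=_\epsilon$ means within $2\epsilon$ in trace distance), this says that for every device behavior the output state is $2^{-\Omega(q^2)M}$-close to a member of the min-entropy set (\ref{copiedstate2}), which is exactly the asserted $\subseteq_{2^{-\Omega(q^2)M}}$ of Definition~\ref{def:distancesets}.

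For completeness I would appeal directly to Remark~\ref{completenessrem}: since the optimal quantum CHSH strategy wins with probability $\tfrac12+\tfrac{\sqrt2}{4}>0.85=\chi$, a device playing this strategy independently in each round passes the score check except with probability $2^{-\Omega(qN)}$, by a concentration estimate on the empirical score over the tested rounds. Terminating all output wires of that honest run leaves only the scalar success probability, which is therefore at least $1-2^{-\Omega(qN)}$; this is precisely the membership asserted in the completeness diagram.

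The main obstacle I anticipate is the faithful translation of the soundness guarantee of \cite{MY14-2} into the set-approximation language of Definition~\ref{def:distancesets}, and in particular handling the abort branch correctly: the $\subseteq_\epsilon$ statement must absorb both the case in which the devices pass the check and produce genuine min-entropy, and the case in which they fail, so that the subnormalized output has small trace and is close to a high-min-entropy state simply by being close to a near-zero state. Verifying that the single constant $2^{-\Omega(q^2)M}$ simultaneously covers the min-entropy deficit and the abort probability, and confirming that formula (5.18) produces the stated constant $0.005$ for the CHSH gap at $\chi=0.85$, are the technical points requiring the most care.
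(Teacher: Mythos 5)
Your proposal is correct and follows essentially the same route as the paper: the paper likewise defines $U(M,q)$ as the process given by steps $1$--$6$ of Protocol $R_{gen}$ from \cite{MY14-2} (specialized to CHSH with threshold $\chi=0.85$), obtains soundness by applying Theorem~1.1 of \cite{MY14-2} with $b:=q$ together with formula (5.18) to get the rate constant $0.005$, and obtains completeness exactly as you do, from the observation (Remark~\ref{completenessrem}) that the optimal CHSH winning probability $\tfrac12+\tfrac{\sqrt2}{4}$ exceeds $0.85$, so an honest device fails only with probability $2^{-\Omega(qN)}$. Your added care about the abort branch and the set-inclusion translation is a faithful elaboration of what the paper leaves implicit, not a departure from it.
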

(The figure $0.005$ in diagram (\ref{appsecstatementB}) above is
somewhat arbitrary -- any figure that is less than $\pi ( 0.85 - 0.75
) \approx 0.0096$ could be used in its place and the theorem statement
would still hold true.)

Next we address the source distribution $B_q^{\otimes M}$.  The
following definition will be useful.
\begin{definition}
Let $p$ be a subnormalized probability distribution on a finite set
$X$, and let $\ell$ be an integer.  Then, $p$ is
\textnormal{$2^\ell$-rational} if $2^\ell p ( x)$ is an integer for
all $x$.
\end{definition}
Note that the condition above is equivalent to the condition that $p$
can be expressed in the form $F ( U_{2^\ell} )$, where $U$ is a
uniform random variable on a set of size $2^\ell$ and $F$ is a
deterministic process.

\begin{proposition}
\label{asB}
For any $q \in (0, 1/4)$ and $M \in \{ 1, 2, \ldots, \}$, there is a
subnormalized probability distribution $B$ on $\{ 0, 1 \}^{3M}$ such
that
\begin{enumerate}
\item The distribution $B$ is $2^{O ( q \log (1/q) M )}$-rational, and
\item \label{secondB} The statistical distance between $B$ and
  $B_q^{\otimes M}$ is $2^{-\Omega ( q M ) }$.
\end{enumerate}
\end{proposition}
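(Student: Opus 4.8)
The plan is to approximate $B_q^{\otimes M}$ by first discarding its negligible tail and then rounding the surviving probabilities down to integer multiples of $2^{-\ell}$ for a suitable $\ell = O(q\log(1/q)M)$. First I would exploit the structure of $B_q^{\otimes M}$. Since a single draw from $B_q$ places mass $1-q$ on $(0,0,0)$ and mass $q/4$ on each of the four strings with $t=1$, the event ``$t_i = 1$'' in round $i$ is an independent $\mathrm{Bernoulli}(q)$ event, so the number of test rounds $K(x) := \#\{i : t_i = 1\}$ is distributed as $\mathrm{Binomial}(M, q)$. By a multiplicative Chernoff bound, $\Pr_{x \sim B_q^{\otimes M}}[K(x) > 2qM] \le e^{-qM/3} = 2^{-\Omega(qM)}$. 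I would then define the typical set $\mathcal{T} := \{x \in \{0,1\}^{3M} : K(x) \le 2qM\}$; restricting $B_q^{\otimes M}$ to $\mathcal{T}$ and setting it to zero off $\mathcal{T}$ perturbs it by at most this tail mass.

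Next I would count the atoms of $\mathcal{T}$. A string with exactly $k$ test rounds is determined by the positions of those rounds and the two output bits in each of them (the remaining rounds being forced to $(0,0,0)$), so there are $\binom{M}{k}4^k$ such strings. Summing over $k \le 2qM$ and using $\binom{M}{k} \le (eM/k)^k$, the total count $S := |\mathcal{T}|$ satisfies $\log_2 S = O(qM\log(1/q))$.

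Finally, I would set $B(x) := \lfloor 2^\ell B_q^{\otimes M}(x)\rfloor \cdot 2^{-\ell}$ for $x \in \mathcal{T}$ and $B(x) := 0$ otherwise, with $\ell := \lceil \log_2 S\rceil + \Theta(qM)$. By construction $B$ is $2^\ell$-rational, nonnegative, and satisfies $B(x) \le B_q^{\otimes M}(x)$ pointwise, so $\sum_x B(x) \le 1$ and the statistical distance equals (up to the usual factor) the total lost mass. That mass splits into the truncation loss, which is $2^{-\Omega(qM)}$, and the rounding loss, which is at most $S\cdot 2^{-\ell} \le 2^{-\Theta(qM)}$; together they give statistical distance $2^{-\Omega(qM)}$, establishing item~\ref{secondB}, while the choice of $\ell$ keeps the rationality parameter at $O(q\log(1/q)M)$.

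The hard part will be the entropy/counting estimate of the second step: one must verify that $\log_2 \binom{M}{2qM} = O(qM\log(1/q))$, so that the precision $\ell$ needed to round the $2^{O(qM\log(1/q))}$ surviving atoms stays within the claimed budget. This is exactly where the hypothesis $q < 1/4$ is used, since it guarantees that $2qM$ is safely below $M/2$ (so the binomial tail is dominated by its largest term) and that $\log(e/(2q))$ is positive, and since $\log(1/q) > \log 4$ lets the additive $\Theta(qM)$ contribution to $\ell$ be absorbed into the $O(q\log(1/q)M)$ bound.
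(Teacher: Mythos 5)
Your proposal is correct and follows essentially the same route as the paper's proof: truncate $B_q^{\otimes M}$ to the typical set of at most $2qM$ test rounds (tail mass $2^{-\Omega(qM)}$), bound the surviving support size by $2^{O(q\log(1/q)M)}$, and round the remaining probabilities down to multiples of $2^{-\ell}$ with $\ell$ chosen as the log of the support size plus $\Theta(qM)$. The only cosmetic differences are that you invoke a Chernoff bound and the estimate $\binom{M}{k}\leq (eM/k)^k$ where the paper cites ``elementary probability arguments'' and the entropy bound $2^{H(2q)M}\cdot 2^{2qM}$, respectively.
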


\begin{proof}
Let $B'$ be the subnormalized probability distribution which assigns
$0$ to all sequences
\begin{eqnarray}
((t^1, a_1^1, a_2^1), (t^2, a_1^2, a_2^2), \ldots, (t^M, a_1^M, a_2^M) )
\end{eqnarray}
satisfying $\sum_{i=1}^M t^i > 2qM$, and assigns the same value as
$B_q^{\otimes M}$ to all other sequences.  Then, the statistical
distance between $B'$ and $B_q^{\otimes M}$ is precisely the
probability that the sum $\sum_{i=1}^M t_i$ exceeds $2qM$.  By
elementary probability arguments, this probability is $2^{-\Omega ( q
  M )}$.  Additionally, the size of the support of $B'$ is no more
than $2^{H ( 2q) M} \cdot 2^{2qM } = 2^{O ( q \log (1/q ) M)}$.
(Here, $H ( t ) = t \log (1/t) + (1-t) \log (1/(1-t))$ denotes the
binary Shannon entropy function.)

Let $t = \lceil qM + \log \left| \textnormal{Supp } B' \right|
\rceil$, and let $B$ be the subnormalized probability distribution
which assigns to each sequence $(\mathbf{t}, \mathbf{a}_1,
\mathbf{a}_2)$ the value
\begin{eqnarray}
2^{-t} \lfloor 2^{t} \mathbf{P}_{B'} (\mathbf{t}, \mathbf{a}_1,
\mathbf{a}_2) \rfloor.
\end{eqnarray}
Then, the statistical distance between $B$ and $B'$ is no more than
$2^{-t} \left| \textnormal{Supp } B' \right| \leq 2^{-qM}$, and $B$ is
$2^t$-rational.  This implies the desired result.
\end{proof}

The previous proposition asserts that we can simulate the distribution
$B_q^{\otimes M}$ up to error $2^{-\Omega ( q M )}$ by applying a
deterministic process to $O ( q \log ( 1/q) M )$ uniformly random
bits.  When we fix $q \in (0, 1/4)$ to be sufficiently small so that
the function the function represented by $O ( q \log (1/q) M )$ in
Proposition~\ref{asB} is upper bounded by $N := M/1000$ as $M \to
\infty$, and also so that the function represented by $\Omega ( q^2 )$
in Theorem~\ref{altsecthm} is positive, we obtain the following
reformulation of Theorem~\ref{altsecthm}.

\begin{theorem}
\label{altaltsecthm}
There exist device-independent protocols $T(1), T(2), T(3), \ldots$
such that the following holds for any $N \in \{ 1, 2, \ldots \}$:
\begin{equation}
\label{appsecstatementC}
\mpp{0.5}{\scalebox{0.8}{
		\begin{tikzpicture}
		\def\lineA{-.5};
		\def\lineB{.5};
		\def\lineAA{-2};
		\def\lineBB{1.5};
		\node (out1) at (\lineA,1.5) {};
		\node (out0) at (\lineAA,1.5) {};
		\node (out2) at (\lineBB,1.5) {};
		\node[style=qstate] (dstate) at ({(\lineB+\lineBB)/2-.2},-1.5) {~~~~~\rule{0pt}{1ex}~~~~~};
		\node[style=uniform] (seed) at ({(\lineA+\lineAA)/2},-1.5) {};
		\node[style=qprocess] (re) at (0,0) {~~~~$T(N)$~~~~};
		\node[style=terminal] (end) at (\lineB,1) {};
		\draw[style=cedge] (out0) to node[left,near start] {$\mathbf{J}$} (out0|-re.south) to[out=-90,in=180] (seed);
		\draw[style=cedge] (out1) to node[right,near start] {$\mathbf{O}$} (out1|-re.north);
		\draw[style=cedge] (out1|-re.south) to[out=-90,in=0] node[right] {$\mathbf{J}$} (seed);
		\draw[style=qedge] (end) to (end|-re.north);
		\draw[style=qedge] (end|-re.south) to (end|-dstate.north);
		\draw[style=qedge] (out2) to node[right,near start] {$E$} (out2|-dstate.north);
		\end{tikzpicture}
}}
~~ {\subseteq_{2^{-\Omega(N)}}} ~~
\mpp{0.5}{\scalebox{0.8}{
		\begin{tikzpicture}
		\def\lineA{-1.5};
		\def\lineB{.1};
		\def\lineC{1.7};
		\node (out1) at (\lineA,1.5) {};
		\node (out2) at (\lineB,1.5) {};
		\node (out3) at (\lineC,1.5) {};
		\node[style=qstate] (dstate) at (0,0) {$H_{min} ( \mathbf{O} \mid \mathbf{J} E ) \geq 5N$};
		\draw[style=cedge] (out1) to node[left,near start] {$\mathbf{J}$} (out1|-dstate.north);
		\draw[style=cedge] (out2) to node[left,near start] {$\mathbf{O}$} (out2|-dstate.north);
		\draw[style=qedge] (out3) to node[left,near start] {$E$} (out3|-dstate.north);
		\end{tikzpicture}
}}
\end{equation}
where the register $\mathbf{J}$ has dimension $2^N$ and the register
$\mathbf{O}$ has dimension $2^{1000N}$.  Also, the following soundness
claim holds:
\begin{equation}
\mpp{0.5}{\scalebox{0.8}{
		\begin{tikzpicture}
		\node[style=scalar] (scale) {$1 - 2^{{-\Omega ( N )}}$};
\end{tikzpicture}}}
~~ {\in} ~~
\mpp{0.5}{\scalebox{0.8}{
		\begin{tikzpicture}
		\def\lineA{-.5};
		\def\lineB{.5};
		\def\lineAA{-2};
		\def\lineBB{1.5};
		\node[style=terminal] (out1) at (\lineA,1.5) {};
		\node[style=terminal] (out0) at (\lineAA,1.5) {};
		\node[style=terminal] (out2) at (\lineBB,1.5) {};
		\node[style=qstate] (dstate) at ({(\lineB+\lineBB)/2-.2},-1.5) {~~~~~\rule{0pt}{1ex}~~~~~};
		\node[style=uniform] (seed) at ({(\lineA+\lineAA)/2},-1.5) {};
		\node[style=qprocess] (re) at (0,0) {~~~~$T(N)$~~~~};
		\node[style=terminal] (end) at (\lineB,1) {};
		\draw[style=cedge] (out0) to node[left,near start] {$\mathbf{J}$} (out0|-re.south) to[out=-90,in=180] (seed);
		\draw[style=cedge] (out1) to node[right,near start] {$\mathbf{O}$} (out1|-re.north);
		\draw[style=cedge] (out1|-re.south) to[out=-90,in=0] node[right] {$\mathbf{J}$} (seed);
		\draw[style=qedge] (end) to (end|-re.north);
		\draw[style=qedge] (end|-re.south) to (end|-dstate.north);
		\draw[style=qedge] (out2) to node[right,near start] {$E$} (out2|-dstate.north);
		\end{tikzpicture}
}}
\end{equation}
\end{theorem}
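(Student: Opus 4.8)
The plan is to realize each protocol $T(N)$ as the device-independent protocol $U(M,q)$ from \Cref{altsecthm}, precomposed with a deterministic preprocessing of a uniform seed. I would fix $q \in (0,1/4)$ as specified and set $M = 1000N$, so that $M \cdot 0.005 = 5N$, the exponent $\Omega(q^2)M$ equals $\Omega(N)$, and the quantity $O(q\log(1/q)M)$ from \Cref{asB} is bounded by $N$. Applying \Cref{asB} yields a subnormalized distribution $B$ on $I^{\otimes M}$ that is $2^{\ell}$-rational for some $\ell \leq N$ and satisfies $\|B - B_q^{\otimes M}\|_1 = 2^{-\Omega(N)}$. By the remark following the definition of rationality, $B = F(U_{2^\ell})$ for a deterministic map $F$, and since $\ell \leq N$ I may regard $F$ as a deterministic classical process from the uniform register $\mathbf{J}$ of dimension $2^N$ to $\mathbf{I} = I^{\otimes M}$. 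I would then define $T(N)$ to be the composite that generates $\mathbf{I} = F(\mathbf{J})$ and feeds it into $U(M,q)$, inheriting the output and environment registers $\mathbf{O}, E$; the dimension bookkeeping for $\mathbf{J}$ and $\mathbf{O}$ is routine.

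For soundness I would argue in two moves. First, because $F$ is deterministic and $\|B - B_q^{\otimes M}\|_1 = 2^{-\Omega(N)}$, replacing the simulated seed by the true product distribution $B_q^{\otimes M}$ perturbs the joint state of $(\mathbf{I}, \mathbf{O}, E)$ by at most $2^{-\Omega(N)}$ in trace distance, since classical preprocessing cannot increase trace distance. \Cref{altsecthm} then places this ideal state within $2^{-\Omega(q^2)M} = 2^{-\Omega(N)}$ of a state satisfying $H_{\min}(\mathbf{O} \mid \mathbf{I}E) \geq M \cdot 0.005 = 5N$. Combining the two estimates by the triangle inequality for $\subseteq_\epsilon$ (\Cref{def:distancesets}) shows that the state produced by $T(N)$, with $\mathbf{I}$ revealed, lies within $2^{-\Omega(N)}$ of the desired min-entropy set.

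The essential step is to transfer this conditional min-entropy bound from the derived register $\mathbf{I}$ to the uniform seed $\mathbf{J}$ that $T(N)$ actually exposes. Here I would use that $\mathbf{I} = F(\mathbf{J})$ is a deterministic function of $\mathbf{J}$ and that the untrusted devices interact only through $\mathbf{I}$, so that $\mathbf{J} \to \mathbf{I} \to (\mathbf{O}, E)$ is a Markov chain. Writing min-entropy via the optimal guessing probability, any strategy using $(\mathbf{J}, E)$ first computes $\mathbf{I} = F(\mathbf{J})$; the residual randomness of $\mathbf{J}$ given $\mathbf{I}$ is independent of $(\mathbf{O}, E)$ and hence useless for predicting $\mathbf{O}$. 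Grouping the guessing probability by the value of $\mathbf{I}$ gives $p_{\mathrm{guess}}(\mathbf{O} \mid \mathbf{J}E) = p_{\mathrm{guess}}(\mathbf{O} \mid \mathbf{I}E)$, that is $H_{\min}(\mathbf{O} \mid \mathbf{J}E) = H_{\min}(\mathbf{O} \mid \mathbf{I}E)$, so the bound $\geq 5N$ survives the change of conditioning register. This is the point I expect to be the main obstacle, since the naive data-processing inequality only yields an inequality in the \emph{wrong} direction; the equality relies specifically on the Markov structure coming from $\mathbf{I}$ being the only coupling between $\mathbf{J}$ and the devices.

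Finally, completeness follows by the same approximation. \Cref{completenessrem} and the completeness clause of \Cref{altsecthm} give an honest strategy for which $U(M,q)$ accepts with probability $1 - 2^{-\Omega(N)}$ under the seed $B_q^{\otimes M}$; since $T(N)$ runs the same honest strategy on the seed $F(\mathbf{J})$ whose law $B$ is $2^{-\Omega(N)}$-close to $B_q^{\otimes M}$, with the missing mass of the subnormalized $B$ absorbed into this error, the acceptance probability changes by at most $2^{-\Omega(N)}$ and remains $1 - 2^{-\Omega(N)}$. All error terms are fixed-$q$ instances of $2^{-\Omega(qM)}$ or $2^{-\Omega(q^2)M}$ with $M = 1000N$, so each is $2^{-\Omega(N)}$, and the theorem follows.
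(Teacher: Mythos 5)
Your construction coincides with the paper's: the paper proves Theorem~\ref{altaltsecthm} in the single paragraph preceding its statement, by fixing $q\in(0,1/4)$ small enough that the seed length $O(q\log(1/q)M)$ of Proposition~\ref{asB} is bounded by $N:=M/1000$ and that the exponent $\Omega(q^2)M$ of Theorem~\ref{altsecthm} is nontrivial, and then precomposing $U(M,q)$ with the deterministic map $F$ furnished by $2^\ell$-rationality --- exactly your $T(N)$, with the same error accounting for soundness and completeness. You in fact go further than the paper, which silently identifies the two statements: you correctly flag that Theorem~\ref{altsecthm} conditions on $\mathbf{I}$ while the claim to be proved conditions on $\mathbf{J}$, and that naive data processing points the wrong way.

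However, your resolution of that step has a genuine slip. The identity $p_{\mathrm{guess}}(\mathbf{O}\mid\mathbf{J}E)=p_{\mathrm{guess}}(\mathbf{O}\mid\mathbf{I}E)$ is a statement about the \emph{actual} protocol state, which indeed has the Markov structure $\mathbf{J}\to\mathbf{I}\to(\mathbf{O},E)$. But Theorem~\ref{altsecthm} does not say that the actual state has min-entropy at least $5N$; it says the actual state is within $2^{-\Omega(N)}$ of \emph{some} state $\sigma$ on $(\mathbf{I},\mathbf{O},E)$ with $H_{min}(\mathbf{O}\mid\mathbf{I}E)_\sigma\geq 5N$, and $\sigma$ need not have any Markov structure (its $\mathbf{I}$-marginal need not even be supported on the range of $F$). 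So your equality cannot be applied to $\sigma$, and your two moves do not compose: what the theorem demands is a state on $(\mathbf{J},\mathbf{O},E)$ that is close to the actual one and has $H_{min}(\mathbf{O}\mid\mathbf{J}E)\geq 5N$, and an equality of min-entropies of the real state does not produce such a state. The fix is your own idea, phrased as a channel rather than as an equality. Let $G$ be the classical channel from $\mathbf{I}$ to $\mathbf{J}$ sending $i$ to the uniform distribution on $F^{-1}(i)$ (defined arbitrarily on values outside the range of $F$). Since $\mathbf{J}$ is uniform and $\mathbf{I}=F(\mathbf{J})$, applying $G$ to the conditioning register of the actual $(\mathbf{I},\mathbf{O},E)$ state yields exactly the actual $(\mathbf{J},\mathbf{O},E)$ state --- this is precisely the Markov property. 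Then monotonicity of trace distance under $G\otimes\mathrm{id}$ shows the actual $\mathbf{J}$-picture state is within $2^{-\Omega(N)}$ of $G(\sigma)$, and data processing for min-entropy under channels applied to the \emph{conditioning} system (any guessing strategy on $G(\mathbf{I})$ and $E$ is a particular strategy on $\mathbf{I}$ and $E$) gives $H_{min}(\mathbf{O}\mid\mathbf{J}E)_{G(\sigma)}\geq H_{min}(\mathbf{O}\mid\mathbf{I}E)_\sigma\geq 5N$. With that replacement your argument is complete; the remainder (choice of $q$, the trace-distance bound for the seed replacement, and completeness via Remark~\ref{completenessrem}) is sound.
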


\subsection{Randomness extraction}

A randomness extractor converts a high min-entropy source into a
near-uniformly random source, with the aid of a uniformly random seed.
We will make use of a known construction (Trevisan's extractor).

\begin{theorem}
\label{trevisanthm}
Let $a (N), b(N ), c ( N ), d ( N ) \in \Theta ( N )$ be functions and
suppose that $(c(N ) - d ( N)) \in \Theta ( N )$.  Then, there exist
deterministic processes $\{ S_N \mid N \in \{ 1, 2, 3, \ldots \} \}$
and a function $e ( N ) \in \Theta ( N )$ such the following relation
holds
\begin{equation}
\label{appsecstatementD}
\mpp{0.5}{\scalebox{0.8}{
		\begin{tikzpicture}
		\def\lineA{-.6};
		\def\lineB{.6};
		\def\lineAA{-2};
		\def\lineBB{1.5};
		\node (out1) at (\lineA,1.5) {};
		\node (out0) at (\lineAA,1.5) {};
		\node (out2) at (\lineBB,1.5) {};
		\node[style=qstate] (dstate) at ({(\lineB+\lineBB)/2-.2},-1.2) {~~~$y$~~~};
		\node[style=uniform] (seed) at ({(\lineA+\lineAA)/2},-1.2) {};
		\node (seedlabel) at ({(\lineA+\lineAA)/2},-1.6) {$a(N)$};
		\node[style=qprocess] (re) at (0,0) {~~~~$S_N$~~~~};
		\node (end) at (\lineB,1) {};
		\draw[style=cedge] (out0) to  (out0|-re.south) to[out=-90,in=180] (seed);
		\draw[style=cedge] (re) to node[right,near end] {$d(N)$} (re|-out0);
		\draw[style=cedge] (out1|-re.south) to[out=-90,in=0] (seed);
		\draw[style=qedge] (end|-re.south) to node[left] {$b(N)$} (end|-dstate.north);
		\draw[style=qedge] (out2) to node[right,near start] {$E$} (out2|-dstate.north);
		\end{tikzpicture}
}}
~~ {=_{2^{-e(N)}}} ~~
\mpp{0.5}{\scalebox{0.8}{
		\begin{tikzpicture}
		\def\lineA{-1.2};
		\def\lineB{0};
		\def\lineC{1};
		\def\lineD{1.8};
		\node (out1) at (\lineA,1.5) {};
		\node (out2) at (\lineB,1.5) {};
		\node[style=terminal] (end) at (\lineC,1) {};
		\node (out3) at (\lineD,1.5) {};
		\node[style=uniform] (seed1) at (\lineA,0) {};
		\node[style=uniform] (seed2) at (\lineB,0) {};
		\node[style=qstate] (dstate) at (1.2,0) {~~~$y$~~~};
		\draw[style=cedge] (out1) to node[left,near start] {$a(N)$} (seed1);
		\draw[style=cedge] (out2) to node[left,near start] {$d(N)$} (seed2);
		\draw[style=qedge] (out3) to node[right,near start] {$E$} (out3|-dstate.north);
		\draw[style=qedge] (end) to (end|-dstate.north);
		\end{tikzpicture}
}}
\end{equation}
for any normalized state $y$ whose min-entropy conditioned on $E$ is
at least $c ( N)$.
\end{theorem}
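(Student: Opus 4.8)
The plan is to recognize the displayed relation as the defining security property of a \emph{quantum-proof strong seeded extractor} and to obtain it by instantiating Trevisan's extractor together with its security against quantum side information. First I would read off the data encoded in the two diagrams. On the left, $y$ is a normalized state of $B \otimes E$ whose $B$-register carries $b(N)$ bits and satisfies $H_{\min}(B \mid E)_y \geq c(N)$, so that $y$ lies in the set of high-min-entropy states of the form pictured in~\eqref{copiedstate2}; the gray node is an independent uniform seed on $a(N)$ bits; the deterministic map $S_N$ is applied to the pair (source, seed) to produce a $d(N)$-bit string; and crucially the seed wire is copied to the output, so that the left diagram is the joint state of the triple (seed, $S_N(\text{source},\text{seed})$, $E$). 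The right diagram is the product of an independent uniform state on $a(N)$ bits, an independent uniform state on $d(N)$ bits, and the reduced state on $E$, with the source register discarded. Thus the relation asserts exactly that $S_N$ is a strong extractor whose output fools any adversary holding the quantum side information $E$.

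Next I would convert the metric. Since both sides are states with no open inputs, the remark following Definition~\ref{def:distanceprocesses} tells us that $=_{2^{-e(N)}}$ is equivalent to being within trace distance $2\cdot 2^{-e(N)}$; this is the usual extractor error up to a harmless factor of two, and it remains of the form $2^{-\Omega(N)}$. The task therefore reduces to producing a family $\{S_N\}$ and an error exponent $e(N)\in\Theta(N)$ realizing this trace-distance bound.

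I would construct $S_N$ as Trevisan's extractor: encode the $b(N)$-bit source under a list-decodable code, read off its bits at the coordinate sets of a weak design indexed by the $a(N)$-bit seed, and output the resulting $d(N)$ bits. Security against the quantum register $E$ is then supplied by the theorem of De, Portmann, Vidick, and Renner, which shows that Trevisan's construction is quantum-proof with essentially the classical parameters: for conditional min-entropy at least $c(N)$ and output length $d(N) \le c(N) - O(e(N))$, the output is within trace distance $2^{-e(N)}$ of uniform jointly with the seed and with $E$. The hypothesis $c(N)-d(N)\in\Theta(N)$ is precisely what leaves room to absorb the $O(e(N))$ entropy loss while keeping $e(N)\in\Theta(N)$, and the remaining functions being $\Theta(N)$ fixes the code and design parameters.

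The main obstacle lives entirely in the appeal to quantum-proofness, since the diagrammatic translation is routine. Conceptually, classical extractor security does \emph{not} transfer for free to quantum side information, so the heart of the argument is exactly the reconstruction-paradigm analysis of De, Portmann, Vidick, and Renner establishing that Trevisan's extractor resists an adversary holding $E$; this is the step I would cite rather than reprove. Quantitatively, the delicate point is the parameter bookkeeping: one must verify that, in the regime of linear min-entropy and linear output with error $2^{-\Theta(N)}$, the weak design on coordinate sets of size $\ell=\Theta(\log(1/\epsilon))=\Theta(N)$ together with the list-decodable code can be chosen so that the total seed length stays $a(N)\in\Theta(N)$, and that the error incurred by extracting $d(N)$ bits (a union-type loss only polynomial in $d(N)$) is still dominated by $2^{-e(N)}$ with $e(N)\in\Theta(N)$. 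Once these parameters are pinned down, the displayed relation follows immediately.
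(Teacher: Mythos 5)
Your proposal is correct and takes essentially the same route as the paper: the paper's entire proof is an appeal to De, Portmann, Vidick, and Renner's quantum-proofness of Trevisan's extractor (their Theorem 4.6, Lemma C.2, and Proposition C.5, instantiated with $r = 1+\delta$ and $\epsilon = 2^{-\delta N}$ for a sufficiently small constant $\delta$), which is exactly the result you cite rather than reprove. Your unpacking of the diagram semantics and the weak-design/list-decoding parameter bookkeeping is just a more explicit version of the paper's one-line parameter instantiation.
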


\begin{proof}
This follows from Theorem 4.6, Lemma C.2, and Proposition C.5 in
\cite{De2012}, letting $r = 1 + \delta$ and $\epsilon = 2^{-\delta N}$
where $\delta$ is a sufficiently small constant.
\end{proof}

We next assert that Theorem~\ref{trevisanthm} continues to hold when
the phrase ``normalized state'' is replaced with ``subnormalized
state.''
\begin{corollary}
\label{trevisancor}
Let $a (N), b(N ), c ( N ), d ( N ) \in \Theta ( N )$ be functions and
suppose that $(c(N ) - d ( N)) \in \Theta ( N )$.  Then, there exist
deterministic processes $\{ V_N \mid N \in \{ 1, 2, 3, \ldots \} \}$
and a function $e ( N ) \in \Theta ( N )$ such the following relation
holds
\begin{equation}
\label{trevisanpic}
\mpp{0.5}{\scalebox{0.8}{
		\begin{tikzpicture}
		\def\lineA{-.6};
		\def\lineB{.6};
		\def\lineAA{-2};
		\def\lineBB{1.5};
		\node (out1) at (\lineA,1.5) {};
		\node (out0) at (\lineAA,1.5) {};
		\node (out2) at (\lineBB,1.5) {};
		\node[style=qstate] (dstate) at ({(\lineB+\lineBB)/2-.2},-1.2) {~~~$y$~~~};
		\node[style=uniform] (seed) at ({(\lineA+\lineAA)/2},-1.2) {};
		\node (seedlabel) at ({(\lineA+\lineAA)/2},-1.6) {$a(N)$};
		\node[style=qprocess] (re) at (0,0) {~~~~$V_N$~~~~};
		\node (end) at (\lineB,1) {};
		\draw[style=cedge] (out0) to  (out0|-re.south) to[out=-90,in=180] (seed);
		\draw[style=cedge] (re) to node[right,near end] {$d(N)$} (re|-out0);
		\draw[style=cedge] (out1|-re.south) to[out=-90,in=0] (seed);
		\draw[style=qedge] (end|-re.south) to node[left] {$b(N)$} (end|-dstate.north);
		\draw[style=qedge] (out2) to node[right,near start] {$E$} (out2|-dstate.north);
		\end{tikzpicture}
}}
~~ {{=}_{2^{-e(N)}}} ~~
\mpp{0.5}{\scalebox{0.8}{
		\begin{tikzpicture}
		\def\lineA{-1.2};
		\def\lineB{0};
		\def\lineC{1};
		\def\lineD{1.8};
		\node (out1) at (\lineA,1.5) {};
		\node (out2) at (\lineB,1.5) {};
		\node[style=terminal] (end) at (\lineC,1) {};
		\node (out3) at (\lineD,1.5) {};
		\node[style=uniform] (seed1) at (\lineA,0) {};
		\node[style=uniform] (seed2) at (\lineB,0) {};
		\node[style=qstate] (dstate) at (1.2,0) {~~~$y$~~~};
		\draw[style=cedge] (out1) to node[left,near start] {$a(N)$} (seed1);
		\draw[style=cedge] (out2) to node[left,near start] {$d(N)$} (seed2);
		\draw[style=qedge] (out3) to node[right,near start] {$E$} (out3|-dstate.north);
		\draw[style=qedge] (end) to (end|-dstate.north);
		\end{tikzpicture}
}}
\end{equation}
for any \textbf{subnormalized} state $y$ whose min-entropy conditioned
on $E$ is at least $c ( N)$.
\end{corollary}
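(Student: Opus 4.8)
The plan is to reduce the subnormalized case to the normalized case of \Cref{trevisanthm} by exploiting \emph{linearity} of the extractor circuit in the source. First I would observe that, once the uniform seeds are fixed, both sides of the desired relation \eqref{trevisanpic} are obtained from the source state $y$ by applying a single fixed completely positive map: the left-hand side is $V_N$ precomposed with the fixed uniform seed, and the right-hand side tensors fixed uniform states onto the partial trace of $y$ over its $b(N)$-register. Writing $\Delta$ for the difference of these two linear maps, the relation $=_{2^{-e(N)}}$ between states is, by the remark following \Cref{def:distanceprocesses}, exactly the assertion $\lVert \Delta(y)\rVert_1 \le 2\cdot 2^{-e(N)}$. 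Since $\Delta$ is linear, it suffices to write a subnormalized $y$ as a combination of normalized states to which \Cref{trevisanthm} applies.

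Second, I would complete $y$ to a normalized state. Let $p=\Tr(y)\le 1$ and fix once and for all a normalized state $\omega$ on the classical source register tensored with $E$ whose conditional min-entropy $H_{min}(\,\cdot\mid E)_\omega$ equals the full source length $b(N)$ — for instance the maximally mixed source in product with a pure state on $E$. Set $\hat y = y + (1-p)\,\omega$, which is normalized. I claim $H_{min}(\,\cdot\mid E)_{\hat y}\ge c(N)-1$. This follows from subadditivity of the functional $\min_{\sigma\ge M_i}\Tr(\sigma)$ defining the conditional min-entropy: if $\sigma,\sigma'$ witness the minima for $y$ and for $(1-p)\omega$, then $\sigma+\sigma'$ is feasible for $\hat y$, so $\min_{\sigma}\Tr(\sigma)_{\hat y}\le \min\Tr(\sigma)_y + \min\Tr(\sigma)_{(1-p)\omega}$. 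Using the hypothesis $2^{-H_{min}(\cdot\mid E)_y}\le 2^{-c(N)}$ together with $2^{-H_{min}(\cdot\mid E)_{(1-p)\omega}}=(1-p)\,2^{-b(N)}\le 2^{-c(N)}$ (scaling by $1-p$ multiplies the functional, and $b(N)\ge c(N)$), I obtain $2^{-H_{min}(\cdot\mid E)_{\hat y}}\le 2\cdot 2^{-c(N)}=2^{-(c(N)-1)}$, as claimed.

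Third, I would invoke \Cref{trevisanthm} at the threshold $c'(N):=c(N)-1$, which is still in $\Theta(N)$ and still satisfies $c'(N)-d(N)\in\Theta(N)$; this supplies processes $S_N$ and an error $e(N)$, and I set $V_N:=S_N$. Both $\hat y$ (min-entropy $\ge c'(N)$) and the fixed state $\omega$ (min-entropy $b(N)\ge c'(N)$) are normalized and meet the hypothesis, so $\lVert\Delta(\hat y)\rVert_1\le 2\cdot 2^{-e(N)}$ and $\lVert\Delta(\omega)\rVert_1\le 2\cdot 2^{-e(N)}$. Writing $y=\hat y-(1-p)\omega$ and using linearity of $\Delta$ and the triangle inequality gives $\lVert\Delta(y)\rVert_1\le \lVert\Delta(\hat y)\rVert_1 + (1-p)\lVert\Delta(\omega)\rVert_1\le 4\cdot 2^{-e(N)}=2\cdot 2^{-(e(N)-1)}$, which is precisely the claimed relation with error function $e(N)-1\in\Theta(N)$.

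The main obstacle — really the only non-formal point — is the second step: controlling the conditional min-entropy of the completion $\hat y$. Everything hinges on the fact that the guessing functional $\min_{\sigma\ge M_i}\Tr(\sigma)$ is both subadditive under sums of subnormalized states and homogeneous under scalar multiplication, so that padding $y$ up to trace one with a fixed high-entropy state costs at most one bit of min-entropy. The remaining steps are routine bookkeeping: linearity of the extractor circuit in the source, the trace-distance characterization of $=_\epsilon$, and absorbing the additive constants into the $\Theta(N)$ slack already built into the hypotheses.
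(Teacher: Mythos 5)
Your proof takes a genuinely different route from the paper's. The paper argues by a dichotomy on $\Tr(y)$: if $\Tr(y) \geq 2^{-e(N)}$ it normalizes $y$ (which lowers the conditional min-entropy by at most $e(N)$ bits) and applies \Cref{trevisanthm} at a reduced threshold $c'(N)$; otherwise both sides of (\ref{trevisanpic}) have trace below $2^{-e(N)}$, so the relation holds trivially. You instead keep $y$ intact, pad it to a normalized state $\hat{y} = y + (1-\Tr(y))\,\omega$, and subtract using linearity. Several of your ingredients are correct and nicely isolated: the trace-distance reading of $=_\epsilon$ for states, the linearity of both sides of (\ref{trevisanpic}) in the source $y$, and the subadditivity and homogeneity of the guessing functional $\min_{\sigma \geq M_i} \Tr(\sigma)$.

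However, there is a genuine gap: the padding step needs a fixed normalized state $\omega$ whose conditional min-entropy is at least roughly $c(N)$, and you justify its existence by the inequality $b(N) \geq c(N)$, which is not among the hypotheses and is not implied by them (only $a, b, c, d \in \Theta(N)$ and $c - d \in \Theta(N)$ are assumed). Since a normalized classical-quantum state on a $b(N)$-bit register has conditional min-entropy at most $b(N)$, in the regime $c(N) > b(N)$ no admissible $\omega$ exists, yet the corollary is not vacuous there: subnormalized states with min-entropy at least $c(N)$ still exist (they merely have trace at most $2^{b(N)-c(N)}$), and the claim must be proved for them. The natural repair of padding at threshold $\min(b(N),c(N)) - 1$ can also fail outright: take $b(N) = d(N) = N$ and $c(N) = 2N$; the hypotheses hold, but no threshold $t(N) \leq b(N)$ satisfies $t(N) - d(N) \in \Theta(N)$, so \Cref{trevisanthm} can never be invoked on any normalized padded state, even though the corollary is true in this regime --- every qualifying $y$ has trace at most $2^{-N}$, so both sides of (\ref{trevisanpic}) nearly vanish, which is exactly the paper's second case and is the observation your argument lacks. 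The fix is straightforward: either add the standing assumption $b(N) \geq c(N)$, which holds in the paper's application (\ref{defofr}) where $b = 1000\lceil M/2 \rceil$ and $c = 5M/2$, and under which your proof is correct as written; or prepend the paper's case split on $\Tr(y)$, handling small-trace states trivially and noting that any remaining state forces $c(N) \leq b(N) + e(N)$, after which your padding argument goes through with minor adjustments to the threshold.
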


\begin{proof}
Choose a function $c'(N)$ such that $c'(N) - d ( N ) \in \Theta ( N)$
and $c ( N ) - c'(N) \in \Theta ( N)$.  By Theorem~\ref{trevisanthm},
there exists $e'(N) \in \Theta ( N)$ and deterministic processes $\{
V_N \}$ such that the following holds for any normalized state $y$
whose min-entropy conditioned on $E$ is at least $c' ( N )$:
\begin{equation}
\mpp{0.5}{\scalebox{0.8}{
		\begin{tikzpicture}
		\def\lineA{-.6};
		\def\lineB{.6};
		\def\lineAA{-2};
		\def\lineBB{1.5};
		\node (out1) at (\lineA,1.5) {};
		\node (out0) at (\lineAA,1.5) {};
		\node (out2) at (\lineBB,1.5) {};
		\node[style=qstate] (dstate) at ({(\lineB+\lineBB)/2-.2},-1.2) {~~~$y$~~~};
		\node[style=uniform] (seed) at ({(\lineA+\lineAA)/2},-1.2) {};
		\node (seedlabel) at ({(\lineA+\lineAA)/2},-1.6) {$a(N)$};
		\node[style=qprocess] (re) at (0,0) {~~~~$V_N$~~~~};
		\node (end) at (\lineB,1) {};
		\draw[style=cedge] (out0) to  (out0|-re.south) to[out=-90,in=180] (seed);
		\draw[style=cedge] (re) to node[right,near end] {$d(N)$} (re|-out0);
		\draw[style=cedge] (out1|-re.south) to[out=-90,in=0] (seed);
		\draw[style=qedge] (end|-re.south) to node[left] {$b(N)$} (end|-dstate.north);
		\draw[style=qedge] (out2) to node[right,near start] {$E$} (out2|-dstate.north);
		\end{tikzpicture}
}}
~~ {=_{2^{-e'(N)}}} ~~
\mpp{0.5}{\scalebox{0.8}{
		\begin{tikzpicture}
		\def\lineA{-1.2};
		\def\lineB{0};
		\def\lineC{1};
		\def\lineD{1.8};
		\node (out1) at (\lineA,1.5) {};
		\node (out2) at (\lineB,1.5) {};
		\node[style=terminal] (end) at (\lineC,1) {};
		\node (out3) at (\lineD,1.5) {};
		\node[style=uniform] (seed1) at (\lineA,0) {};
		\node[style=uniform] (seed2) at (\lineB,0) {};
		\node[style=qstate] (dstate) at (1.2,0) {~~~$y$~~~};
		\draw[style=cedge] (out1) to node[left,near start] {$a(N)$} (seed1);
		\draw[style=cedge] (out2) to node[left,near start] {$d(N)$} (seed2);
		\draw[style=qedge] (out3) to node[right,near start] {$E$} (out3|-dstate.north);
		\draw[style=qedge] (end) to (end|-dstate.north);
		\end{tikzpicture}
}}
\end{equation}
Let $e(N) \in \Theta ( N )$ be a function that is less than or equal
to both $(c(N) - c'(N))$ and $e'(N)/2$ for all $N$.  Let $y'$ be an
arbitrary \textit{sub}normalized state whose min-entropy conditioned
on $E$ is at least $c ( N)$ (rather than $c' ( N )$).  If the trace of
$y'$ is at least $2^{-e ( N )}$, then the next relation follows easily
from the previous one (with $y := y'/Tr ( y')$):
\begin{equation}
\label{subconnect}
\mpp{0.5}{\scalebox{0.8}{
		\begin{tikzpicture}
		\def\lineA{-.6};
		\def\lineB{.6};
		\def\lineAA{-2};
		\def\lineBB{1.5};
		\node (out1) at (\lineA,1.5) {};
		\node (out0) at (\lineAA,1.5) {};
		\node (out2) at (\lineBB,1.5) {};
		\node[style=qstate] (dstate) at ({(\lineB+\lineBB)/2-.2},-1.2) {~~~$y'$~~~};
		\node[style=uniform] (seed) at ({(\lineA+\lineAA)/2},-1.2) {};
		\node (seedlabel) at ({(\lineA+\lineAA)/2},-1.6) {$a(N)$};
		\node[style=qprocess] (re) at (0,0) {~~~~$V_N$~~~~};
		\node (end) at (\lineB,1) {};
		\draw[style=cedge] (out0) to  (out0|-re.south) to[out=-90,in=180] (seed);
		\draw[style=cedge] (re) to node[right,near end] {$d(N)$} (re|-out0);
		\draw[style=cedge] (out1|-re.south) to[out=-90,in=0] (seed);
		\draw[style=qedge] (end|-re.south) to node[left] {$b(N)$} (end|-dstate.north);
		\draw[style=qedge] (out2) to node[right,near start] {$E$} (out2|-dstate.north);
		\end{tikzpicture}
}}
~~ {=_{2^{-e(N)}}} ~~
\mpp{0.5}{\scalebox{0.8}{
		\begin{tikzpicture}
		\def\lineA{-1.2};
		\def\lineB{0};
		\def\lineC{1};
		\def\lineD{1.8};
		\node (out1) at (\lineA,1.5) {};
		\node (out2) at (\lineB,1.5) {};
		\node[style=terminal] (end) at (\lineC,1) {};
		\node (out3) at (\lineD,1.5) {};
		\node[style=uniform] (seed1) at (\lineA,0) {};
		\node[style=uniform] (seed2) at (\lineB,0) {};
		\node[style=qstate] (dstate) at (1.2,0) {~~~$y'$~~~};
		\draw[style=cedge] (out1) to node[left,near start] {$a(N)$} (seed1);
		\draw[style=cedge] (out2) to node[left,near start] {$d(N)$} (seed2);
		\draw[style=qedge] (out3) to node[right,near start] {$E$} (out3|-dstate.north);
		\draw[style=qedge] (end) to (end|-dstate.north);
		\end{tikzpicture}
}}
\end{equation}

On the other hand, if the trace of $y'$ is less than $2^{-e ( N ) }$,
then both diagrams in relation (\ref{subconnect}) have trace less than
$2^{-e ( N )}$ and so the relation obviously holds.  This completes
the proof.
\end{proof}

\subsection{Randomness expansion}

Now we combine the results of the previous two subsections to prove
Theorem~\ref{startingthm}. For any positive integer $M$, let $T(M)$
denote the protocol defined in Theorem~\ref{altaltsecthm}.  Define a
device-independent protocol $R ( M )$ by
\begin{equation}
\label{defofr}
\mpp{0.5}{\scalebox{0.8}{
		\begin{tikzpicture}
		\node (in1) at (-.5,-1.5) {};
		\node (in2) at (.5,-1.5) {};
		\node (out1) at (-.5,1.5) {};
		\node (out2) at (.5,1.5) {};
		\node[style=qprocess] (re) at (0,0) {~~~$R(M)$~~~};
		\draw[style=cedge] (in1) to node[left,near start] {$M$} (in1|-re.south);
		\draw[style=qedge] (in2) to (in2|-re.south);
		\draw[style=cedge] (out1) to node[left,near start] {$2M$}  (out1|-re.north);
		\draw[style=qedge] (out2) to (out2|-re.north);
		\end{tikzpicture}
}}
~~ = ~~
\mpp{0.5}{\scalebox{0.8}{
		\begin{tikzpicture}
		\node (in1) at (-.8,-1.5) {};
		\node (in1a) at (-.2,-1.5) {};
		\node (in2) at (1,-1.5) {};
		\node (out1) at (-.5,1.5) {};
		\node (out2) at (1,1.5) {};
		\node[style=qprocess] (re) at (0,0) {~~~~~~$R(M)$~~~~~~};
		\draw[style=cedge] (in1) to node[left,near start] {$\lfloor \frac{M}{2}\rfloor$} (in1|-re.south);
		\draw[style=cedge] (in1a) to node[right,near start] {$\lceil\frac{M}{2}\rceil$} (in1a|-re.south);
		\draw[style=qedge] (in2) to (in2|-re.south);
		\draw[style=cedge] (out1) to node[left,near start] {$2M$}  (out1|-re.north);
		\draw[style=qedge] (out2) to (out2|-re.north);
		\end{tikzpicture}
}}
~~ := ~~
\mpp{0.5}{\scalebox{0.8}{
		\begin{tikzpicture}
		\def\inY{-1}
		\def\outY{3.5}
		\node (in1) at (-1.2,\inY) {};
		\node (in1a) at (-.2,\inY) {};
		\node (in2) at (1.6,\inY) {};
		\node (out1) at (-.6,\outY) {};
		\node (out2) at (1.6,\outY) {};
		\node[style=qprocess] (T) at (.5,.5) {~~~~$T\left(\lceil\frac{M}{2}\rceil\right)$~~~~};
		\node[style=qprocess] (V) at (-.6,2.5) {~~~~~$V_M$~~~~~};
		\draw[style=cedge] (in1a) to node[right,near start] {$\lceil \frac{M}{2} \rceil$} (in1a|-T.south);
		\draw[style=qedge] (in2) to (in2|-T.south);
		\draw[style=qedge] (out2) to (out2|-T.north);
		\draw[style=cedge] (in1) to node[left, very near start] {$\lfloor \frac{M}{2} \rfloor$} (in1|-V.south);
		\draw[style=cedge] (in1a|-T.north) to node {~~~~~~~~~~~~$1000\lceil\frac{M}{2}\rceil$} (in1a|-V.south);
		\draw[style=cedge] (out1) to node[left,near start] {$2M$} (V);
		\end{tikzpicture}
}}
\end{equation}
where $V_M$ denotes the process defined in
Corollary~\ref{trevisancor}, with $c(M)=\frac{5M}{2}$.

Then for any $r\in R(N)$ and normalized state $\Gamma$, there is some
$t \in T ( \lceil \frac{M}{2 }\rceil )$ satisfying the following (the
twist in the upper-left maintains the order of the original register
$M$):
\begin{equation}
\mpp{0.5}{\scalebox{.8}{\begin{tikzpicture}
		\node (out1) at (-.2,2.8) {};
		\node (out2) at (-1.4,2.8) {};
		\node (out3) at (1.2,2.8) {};
		\node[style=terminal] (end1) at (0.5,2) {};
		\node[style=qstate] (dstate1) at (0.7,0) {~~$\Gamma$~~};
		\node[style=uniform] (seed1) at (-.8,0) {};
		\node at (-.8,-.4) {$M$};
		\node[style=qprocess] (re1) at (0,1) {~~~~$r$~~~~};
		\draw[style=qedge] (end1|-dstate1.north) to (end1|-re1.south);
		\draw[style=cedge] (seed1) to[out=0,in=-90] (out1|-re1.south);
		\draw[style=cedge] (out1) to node[left,near start] { $2M$}(out1|-re1.north);
		\draw[style=cedge] (out2) to[out=-90,in=90]  (-1.4,1) to[out=-90,in=180] (seed1);
		\draw[style=qedge] (out3) to (out3|-dstate1.north);
		\draw[style=qedge] (end1) to (end1|-re1.north);
		\end{tikzpicture}}}
~~ = ~~
\mpp{0.5}{\scalebox{.8}{\begin{tikzpicture}
		\node (out1) at (-.2,2.8) {};
		\node (out2) at (-1.4,2.8) {};
		\node (out2a) at (-2,2.8) {};	  
		\node (out3) at (1.2,2.8) {};
		\node[style=terminal] (end1) at (0.5,2) {};
		\node[style=qstate] (dstate1) at (0.7,-.5) {~~$\Gamma$~~};
		\node[style=uniform] (seed1) at (-.8,.2) {};
		\node at (-.8,-.2) {$\lfloor \frac{M}{2}\rfloor$};
		\node[style=uniform] (seed2) at (-.8,-.8) {};
		\node at (-.8,-1.2) {$\lceil \frac{M}{2}\rceil$};
		\node[style=qprocess] (re1) at (0,1) {~~~~$r$~~~~};
		\draw[style=qedge] (end1|-dstate1.north) to (end1|-re1.south);
		\draw[style=cedge] (seed1) to[out=0,in=-90] (re1.-145);
		\draw[style=cedge] (seed2) to[out=0,in=-90] (0,.2) to (re1.south);
		\draw[style=cedge] (out1) to node[left,near start] { $2M$}(out1|-re1.north);
		\draw[style=cedge] (out2a) to (out2a|-end1.south) to[out=-90,in=90] (out2|-re1.north) to (-1.4,1) to[out=-90,in=180] (seed1);
		\draw[style=cedge] (out2) to (out2|-end1.south) to[out=-90,in=90] (out2a|-re1.north) to (-2,.2)  to[out=-90,in=180] (seed2);
		\draw[style=qedge] (out3) to (out3|-dstate1.north);
		\draw[style=qedge] (end1) to (end1|-re1.north);
		\end{tikzpicture}}}
~~ = ~~
\mpp{0.5}{\scalebox{0.8}{
		\begin{tikzpicture}
		\def\inY{-1.8}
		\def\outY{3.5}
		\node (in1) at (-1.2,\inY) {};
		\node (in1a) at (-.2,\inY) {};
		\node (in2) at (1.2,\inY) {};
		\node (out1) at (-.6,\outY) {};
		\node (out2) at (1.6,\outY) {};
		\node (out3) at (-2.8,\outY) {};
		\node (out4) at (-3.5,\outY) {};
		\node[style=terminal] (end) at (.8,1.5) {};
		\node[style=qprocess] (T) at (.2,.5) {~~~~~~$t$~~~~~~};
		\node[style=qprocess] (V) at (-.6,2) {~~~~~$V_M$~~~~~};
		\node[style=qstate] (gamma) at (1,-1) {~~$\Gamma$~~};
		\node[style=uniform] (seed1) at (-2,0) {};
		\node at (-2,-.4) {$\lfloor \frac{M}{2} \rfloor$};
		\node[style=uniform] (seed2) at (-2,-1.5) {};
		\node at (-2,-1.9) {$\lceil \frac{M}{2} \rceil$} ;
		\draw[style=cedge] (in1a|-T.south)  to[out=-90,in=0]  (seed2);
		\draw[style=cedge] (out3) to (out3|-V.north) to[out=-90,in=90] (out4|-end.south) to (out4|-end.south) to[out=-90,in=180] (seed2);
		\draw[style=cedge] (out4) to (out4|-V.north) to[out=-90,in=90] (out3|-end.south) to[out=-90,in=180] (seed1);
		\draw[style=qedge] (end|-gamma.north) to (end|-T.south);
		\draw[style=qedge] (end) to (end|-T.north);
		\draw[style=qedge] (out2) to (out2|-gamma.north);
		\draw[style=cedge] (in1|-V.south) to[out=-90,in=0]  (seed1);
		\draw[style=cedge] (in1a|-T.north) to (in1a|-V.south);
		\draw[style=cedge] (out1) to node[left,near start] {$2M$} (V);
		\end{tikzpicture}
}}
\end{equation}    
Now we can apply Theorem~\ref{altaltsecthm} and
Corollary~\ref{trevisancor} to conclude that

\begin{equation}
\mpp{0.5}{\scalebox{0.8}{
		\begin{tikzpicture}
		\def\inY{-1.8}
		\def\outY{3.5}
		\node (in1) at (-1.2,\inY) {};
		\node (in1a) at (-.2,\inY) {};
		\node (in2) at (1.2,\inY) {};
		\node (out1) at (-.6,\outY) {};
		\node (out2) at (2.4,\outY) {};
		\node (out3) at (-2.8,\outY) {};
		\node (out4) at (-3.5,\outY) {};
		\node[style=terminal] (end) at (1.2,1.5) {};
		\node[style=qprocess] (T) at (.5,.5) {~~~~$T\left(\lceil\frac{M}{2}\rceil\right)$~~~~};
		\node[style=qprocess] (V) at (-.6,2) {~~~~~$V_M$~~~~~};
		\node[style=qstate] (gamma) at (1.5,-1) {~~~~$\Gamma$~~~~};
		\node[style=uniform] (seed1) at (-2,.75) {};
		\node at (-2,.35) {$\lfloor \frac{M}{2} \rfloor$};
		\node[style=uniform] (seed2) at (-2,-1.5) {};
		\node at (-2,-1.9) {$\lceil \frac{M}{2} \rceil$} ;
		\draw[style=cedge] (in1a|-T.south)  to[out=-90,in=0] node[right,near start] {$\lceil \frac{M}{2} \rceil$} (seed2);
		\draw[style=cedge] (out3) to (out3|-V.north) to[out=-90,in=90] (out4|-end.south) to (out4|-end.south) to[out=-90,in=180] (seed2);
		\draw[style=cedge] (out4) to (out4|-V.north) to[out=-90,in=90] (out3|-end.south) to[out=-90,in=180] (seed1);
		\draw[style=qedge] (in2|-gamma.north) to (in2|-T.south);
		\draw[style=qedge] (end) to (end|-T.north);
		\draw[style=qedge] (out2) to (out2|-gamma.north);
		\draw[style=cedge] (in1|-V.south) to[out=-90,in=0]  (seed1);
		\draw[style=cedge] (in1a|-T.north) to (in1a|-V.south);
		\draw[style=cedge] (out1) to node[left,near start] {$2M$} (V);
		\end{tikzpicture}
}}
~~ =_{2^{-\Omega(M)}}
\mpp{0.5}{\scalebox{0.8}{
		\begin{tikzpicture}
		\def\inY{-1.8}
		\def\outY{3.5}
		\node (in1) at (-1.2,\inY) {};
		\node (in1a) at (-.2,\inY) {};
		\node (in2) at (1.2,\inY) {};
		\node (out1) at (-.6,\outY) {};
		\node (out2) at (2.4,\outY) {};
		\node (out3) at (-2.8,\outY) {};
		\node (out4) at (-3.5,\outY) {};
		\node[style=terminal] (end) at (1.2,1.5) {};
		\node[style=terminal] (end2) at (-.2,1.5) {};
		\node[style=qprocess] (T) at (.5,.5) {~~~~$T\left(\lceil\frac{M}{2}\rceil\right)$~~~~};
		\node (V) at (-.6,2.2) {};
		\node[style=qstate] (gamma) at (1.5,-1) {~~~~$\Gamma$~~~~};
		\node[style=uniform] (seed1) at (-2.4,.5) {};
		\node at (-2.4,.1) {$\lfloor \frac{M}{2} \rfloor$};
		\node[style=uniform] (seed2) at (-2,-1.5) {};
		\node at (-2,-1.9) {$\lceil \frac{M}{2} \rceil$} ;
		\node[style=uniform] (seed3) at (-1.6,.5) {};
		\draw[style=cedge] (in1a|-T.south)  to[out=-90,in=0] (seed2);
		\draw[style=cedge] (out3) to (out3|-V.north) to[out=-90,in=90] (out4|-T.north) to (out4|-T.south) to[out=-90,in=180] (seed2);
		\draw[style=cedge] (out4) to (out4|-V.north) to[out=-90,in=90] (seed1|-T.north) to (seed1);
		\draw[style=qedge] (in2|-gamma.north) to (in2|-T.south);
		\draw[style=qedge] (end) to (end|-T.north);
		\draw[style=qedge] (out2) to (out2|-gamma.north);
		\draw[style=cedge] (end2) to (end2|-T.north);
		\draw[style=cedge] (seed3|-out1) to node[left,near start] {$2M$} (seed3);
		\end{tikzpicture}
}}
\end{equation}   
from which it follows easily that
\begin{equation}
\mpp{0.5}{\scalebox{1.0}{\begin{tikzpicture}
		\node (out1) at (-.2,2.8) {};
		\node (out2) at (-1.4,2.8) {};
		\node (out3) at (1.2,2.8) {};
		\node[style=terminal] (end1) at (0.5,2) {};
		\node[style=qstate] (dstate1) at (0.7,0) {~~$\Gamma$~~};
		\node[style=uniform] (seed1) at (-.8,0) {};
		\node at (-.8,-.4) {$M$};
		\node[style=qprocess] (re1) at (0,1) {~~~~$r$~~~~};
		\draw[style=qedge] (end1|-dstate1.north) to (end1|-re1.south);
		\draw[style=cedge] (seed1) to[out=0,in=-90] (out1|-re1.south);
		\draw[style=cedge] (out1) to node[left,near start] { $2M$}(out1|-re1.north);
		\draw[style=cedge] (out2) to[out=-90,in=90]  (-1.4,1) to[out=-90,in=180] (seed1);
		\draw[style=qedge] (out3) to (out3|-dstate1.north);
		\draw[style=qedge] (end1) to (end1|-re1.north);
		\end{tikzpicture}}}
~~ =_{2^{-\Omega(M)}} ~~
\mpp{0.5}{\scalebox{1.0}{\begin{tikzpicture}
		\node[style=qstate] (dstate2) at (5,0) {~~$\Gamma$~~};
		\node[style=uniform] (seed2) at (3.4,0) {};
		\node at (3.4,-.4) {$M$};
		\node[style=qprocess] (re2) at (4.2,1) {~~~~$r$~~~~};
		\node (out4) at (4,3) {};
		\node (out5) at (2.8,3) {};
		\node (out6) at (5.4,3) {};
		\node[style=terminal] (end2) at (4.7,1.9) {};
		\node[style=uniform] (seed3) at (4,2.3) {};
		\node[style=terminal] (end3) at (4,1.9) {};
		\draw[style=qedge] (end2|-dstate2.north) to (end2|-re2.south);
		\draw[style=cedge] (seed2) to[out=0,in=-90]  (out4|-re2.south);
		\draw[style=cedge] (out5) to[out=-90,in=90] (2.8,1) to[out=-90,in=180] (seed2);
		\draw[style=qedge] (out6) to (out6|-dstate2.north);
		\draw[style=qedge] (end2) to  (end2|-re2.north);
		\draw[style=cedge] (out4) to node[left,near start] {$2M$} (seed3);
		\draw[style=cedge] (end3) to node[left] {$2M$} (end3|-re2.north);
		\draw[style=qedge] (end2) to (end2|-re2.north);
		\end{tikzpicture}}}
\end{equation}    
This implies the soundness claim in Theorem~\ref{startingthm}.  The
completeness claim in Theorem~\ref{startingthm} follows easily from
the completeness claim in Theorem~\ref{altaltsecthm}.

\end{document}